\newcommand{\la}{\langle}
\newcommand{\ra}{\rangle}
\newcommand{\da}{\dagger}
\newcommand{\Op}[1]{\hat{#1}}
\newcommand{\ok}{\Op{k}}
\newtheorem{theorem}{Theorem}
\newtheorem{Proposition}[theorem]{Proposition}
\newtheorem{Definition}[theorem]{Definition}
\begin{document}

\preprint{APS/123-QED}

\title{Quantum memory in spontaneous emission processes}
\author{Mei Yu}
\email{mei.yu@uni-siegen.de}
\affiliation{Naturwissenschaftlich-Technische Fakultät, 
Universität Siegen, Walter-Flex-Straße 3, 57068 Siegen, Germany}

\author{Ties-A. Ohst}
\email{ties-albrecht.ohst@uni-siegen.de}
\affiliation{Naturwissenschaftlich-Technische Fakultät, 
Universität Siegen, Walter-Flex-Straße 3, 57068 Siegen, Germany}

\author{Hai-Chau Nguyen}
\affiliation{Naturwissenschaftlich-Technische Fakultät, 
Universität Siegen, Walter-Flex-Straße 3, 57068 Siegen, Germany}

\author{Stefan Nimmrichter}
\affiliation{Naturwissenschaftlich-Technische Fakultät, 
Universität Siegen, Walter-Flex-Straße 3, 57068 Siegen, Germany}

\begin{abstract}
Quantum memory effects are essential in understanding and controlling open quantum systems, yet distinguishing them from classical memory remains challenging. We introduce a convex geometric framework to analyze quantum memory propagating in non-Markovian processes. We prove that classical memory between two time points is fundamentally bounded 
and introduce a robustness measure for quantum memory based on convex geometry. This admits an efficient experimental characterization by linear witnesses of quantum memory, bypassing full process tomography. 
We prove that any memory effects present in the spontaneous emission process of two- and three-level atomic systems are necessarily quantum, suggesting a pervasive role of quantum memory in quantum optics. Giant artificial atoms are discussed as a readily available test platform.
\end{abstract}

\maketitle

{\it Introduction.---}  
Decoherence due to the inevitable coupling to the environment is a cornerstone for our modern understanding of quantum theory and its applications~\cite{Schlosshauer2019}.
In the simplest approximation, decoherence processes are memory-less and can be described by Markovian quantum master equations~\cite{Breuer2002}, which are widely used, but do not completely capture the physics of the system-environment interaction during their evolution~\cite{Caruso2014, Breuer2016, deVega2017}.
Indeed, memory effects were found in various experiments with superconducting circuits~\cite{Gusafsson2014, Andersson2019, Kitzman2023}, quantum transport systems \cite{Engel2007, Collini2010, Panitchayangkoon2010}, spin networks \cite{Hanson2008, Niknam2021,Onizhuk2021}, optomechanical systems \cite{Groeblacher2015}, and photonic simulators \cite{Liu2011, Smirne2011, Hoeppe2012}.

In practice, non-Markovian effects of decohering environments may often be classical, in that the multi-time measurement statistics on an open system could be described by a classical stochastic process with memory \cite{Vacchini2011,Smirne2013,Smirne2018,Strasberg2019,Milz2020}; non-Markovian dephasing being a natural example \cite{Yao2007,Guarnieri2014,Chen2019}.  
In general, however, the memory that an environment relays back to the system can be coherent and genuinely quantum~\cite{Milz2020, Giarmatzi2021, Banacki2023}, and the distinction between quantum and classical memory is central to our understanding of open systems and our ability to control and simulate them in technological applications~\cite{Oreshkov2007,Bellomo2007, Huelga2012, Xue2012, Orieux2015, ohst_zhang_2024}. For example, non-Markovianity plays a role in stabilizing nonequilibrium states \cite{Grimsno2015, Ahlborn2004}, controlling quantum processors \cite{White2020}, increasing quantum channel capacities \cite{Bylicka2013}, and mitigating the noise that impedes quantum computation~\cite{Ahn2023,Puviani2025}.
\begin{figure}[t!]
    \centering
    \includegraphics[width=\linewidth]{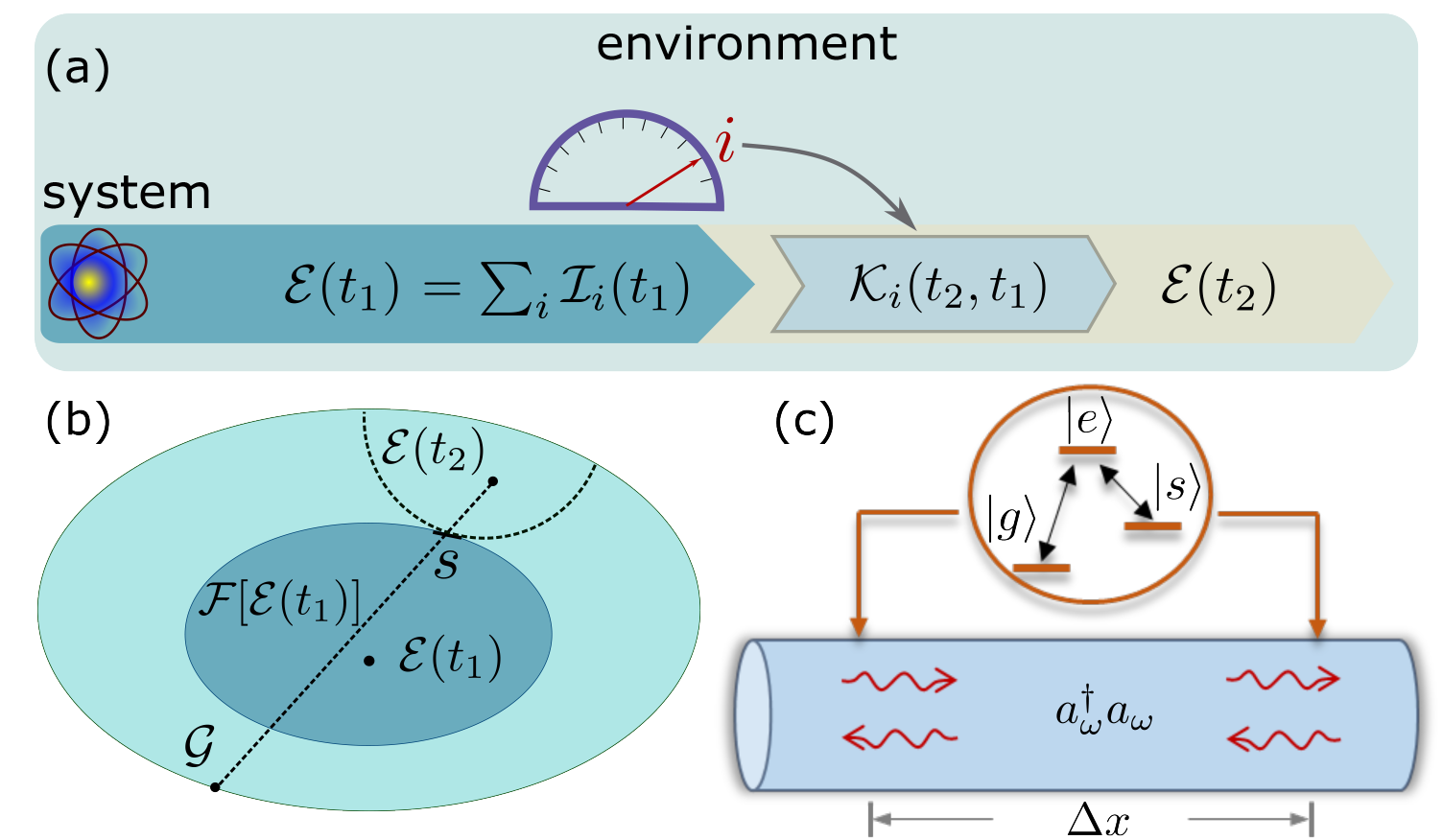}
    \caption{(a) Two-time quantum processes with classical memory admit a decomposition of the quantum channel $\mathcal{E}(t_1)$ describing the first time step into subchannels $\mathcal{I}_i(t_1)$ such that the classical information $i$ extracted by an environment controls the subsequent evolution $\mathcal{K}_{i}(t_2, t_1)$ in the second step, leading to $\mathcal{E}(t_2)$. 
    (b) The classical future $\mathcal{F}[\mathcal{E}(t_1)]$ is the convex subset of all those evolutions to later times $t_2$ that can be realized with classical memory. A two-time process exhibits quantum memory if the channel $\mathcal{E}(t_2)$ is detected outside of $\mathcal{F}[\mathcal{E}(t_1)]$. (c) Example of a physical process with quantum memory: spontaneous emission of a three-level giant atom into an acoustic waveguide via two distant coupling points.}
    \label{fig:robustness_GA_sketch}
\end{figure}

A rigorous description of memory effects in quantum dynamics was first developed 
in the process tensor framework \cite{Chiribella2009, Pollock2018,Giarmatzi2021,Taranto2024,taranto25}. 
This approach relies on process tomography of the dynamics at multiple time points, which poses
a significant challenge for practical measurements. 
Moreover, as analyzed in Ref.~\cite{Taranto2024}, deciding whether the memory is classical or quantum is a mathematically difficult problem.   
Only recently was it suggested to investigate the quantum memory propagating between two time points of the process without demanding coherent control over the intermediate evolution, eventually leading
to practical criteria for quantum memory of quantum processes of two-level systems~\cite{Bäcker24}, with limited later extension to higher dimensions~\cite{Bäcker2025}. 

In this work, we show that the quantum memory propagating between two time points of a non-Markovian process admits an elegant and systematic convex geometric formulation, illustrated in Fig.~\ref{fig:robustness_GA_sketch}(a) and (b),  shedding light on fundamental and practical aspects of  memory in quantum processes.
Fundamentally, we show that the amount of classical memory propagating between two time points is finite and bounded.
Practically, we formulate a robustness measure of quantum memory against noise in a natural way.
Moreover, the convex structure of processes with classical memory allows us to characterize quantum memory using linear witnesses, eliminating the need for process tomography in experimental demonstrations.
Finally, we point out that quantum memory is a prevalent feature in quantum optical systems: we prove for spontaneous emission processes of two- and three-level atoms into the vacuum that the memory there is necessarily of quantum nature. We illustrate this in the experimentally accessible setting of giant atoms~\cite{Gusafsson2014, Andersson2019}.

\noindent {\it Quantum memory in the dynamical process.---} 
Consider an open dynamics of a quantum system of dimension $d$ described by a one-parameter family of completely positive trace-preserving (CPTP) maps $\mathcal{E}({t})$, each mapping the initial state $\rho$ of the system at time $0$ to its state $\mathcal{E}({t})[\rho]$ at time $t$. 
Adopting the definition in~Ref.~\cite{Wolf2008}, we call such a dynamics memory-less if for every two time points $t_1$ and $t_2>t_1$, there exists a quantum channel $\mathcal{K}({t_2, t_1})$ such that
\begin{equation}
\label{eq:class_memory_dyn}
    \mathcal{E}({t_2}) = \mathcal{K}({t_2, t_1})  \mathcal{E}({t_1}).
\end{equation}
When this condition, known as `completely positive divisibility', is not satisfied, we speak of non-Markovian dynamics~\cite{Chruciski2022}. 

Non-Markovianity is an expression of memory effects in the dynamics. Intriguingly, the memory in a non-Markovian dynamical process can be of classical or of quantum nature. 
Following Ref.~\cite{Bäcker24}, {the memory propagating between two time points $t_1<t_2$ can be regarded as classical if the channel $\mathcal{E}({t_1})$ represents a nondestructive measurement, the outcomes of which influence the realization of $\mathcal{E}({t_2})$.} 
Formally, this means one can decompose $\mathcal{E}({t_1})$ into $n$ subchannels $\mathcal{I}_i({t_1})$ also known as instrument elements \cite{Busch2013-uk} such that  $\mathcal{E}({t_1}) = \sum_{i=1}^{n} \mathcal{I}_{i}({t_1})$. Each subchannel $\mathcal{I}_i({t_1})$ produces a classical outcome $i$ and a post-measurement state $\mathcal{I}_{i}({t_1})[\rho]$.
The classical information of the outcome $i$ is fed back into the system dynamics in the form of inducing a specific channel $\mathcal{K}_i({t_2, t_1})$, so that
\begin{equation}
\label{eq:class_memory_dynamics}
    \mathcal{E} ({t_2}) = \sum_{i=1}^{n} \mathcal{K}_i({t_2, t_1})  \mathcal{I}_i({t_1}).
\end{equation}
In particular, the Markovian dynamics~\eqref{eq:class_memory_dyn} is a special case of the dynamics \eqref{eq:class_memory_dynamics} with only a single outcome, $n=1$. 

\noindent {\it Quantum memory and its characterization by convex geometry.---}
Not every quantum dynamical process can be implemented by classical memory as in definition~\eqref{eq:class_memory_dynamics}, in which case the process is said to have quantum memory; see Appendix~\ref{app:class_future_examples} for instructive examples.  
Verifying the presence of quantum memory in a dynamical process was perceived as a difficult problem~\cite{Bäcker24,Bäcker2025,Taranto2024}.

Here, we observe that definition~\eqref{eq:class_memory_dynamics} can be posed as a convex optimisation program~\cite{boyd2004convex}, which directly implies several important consequences on fundamental and practical aspects of quantum memory. 
To this end, given the channel $\mathcal{E}(t_1)$ at time $t_1$, we define the \emph{classical future} $\mathcal{F}[\mathcal{E}(t_1)]$ to be the set of channels $\mathcal{G}$ that can be obtained via $\mathcal{G} = \sum_{i=1}^{n} \mathcal{K}_{i}(t_2, t_1) \mathcal{I}_{i}(t_1)$, where $\{ \mathcal{I}_{i}(t_1)\}_{i=1}^{n}$ is a subchannel decomposition of $\mathcal{E}(t_1)$ and $\mathcal{K}_{i}(t_2, t_1)$ are certain channels. By definition~\eqref{eq:class_memory_dynamics}, the dynamics $\mathcal{E}(t)$ can be implemented with classical memory if and only if for all two time points $t_1 < t_2$, one has  $\mathcal{E}(t_2) \in \mathcal{F}[\mathcal{E}(t_1)]$. 

It is clear from the definition that the classical future {consistently accommodates pre- and lossless post-processing at the earlier channel $\mathcal{E}(t_1)$, as well as post-processing after the later one $\mathcal{E} (t_2)$.} 
That is, given $\mathcal{E}(t_2) \in \mathcal{F}[\mathcal{E}(t_1)]$, then $\mathcal{D}\mathcal{E}(t_2) \mathcal{C} \in \mathcal{F}[\mathcal{U}\mathcal{E}(t_1)\mathcal{C}]$ for any unitary channel $\mathcal{U}$ and two general channels $\mathcal{C}, \mathcal{D}$;
see Appendix \ref{sec:invariance} for the details. 

A fundamental property of the classical future is that it is convex and compact; see Appendix \ref{app:finite_class_memory} for the proof. This simple mathematical property has important consequences, which we are to discuss below.

\noindent{\it Finite bounds of classical memory.---}
In general, the minimal value of $\ln (n)$ in the decomposition~\eqref{eq:class_memory_dynamics} can be thought of as the amount of actual classical memory propagating from the time point $t_1$ through the environment back to the system at time $t_2$.
This raises the question whether there exist processes on finite-dimensional quantum systems where an \emph{unbounded} amount of memory is feeding through the environment back onto the system at later time. 
We show that
the actual amount of classical memory propagating between two time points of a process in a system of dimension $d$ can 
be bounded as 
\begin{equation}
  \ln n \le \ln d^4 (d^4+1).  
\end{equation}
The proof is a consequence of the finite classical information content of quantum measurements~\cite{DAriano2011} and Carath\'eodory's principle of convex geometry~\cite{Carathodory1911} applied to the compact set $\mathcal{F}$ (see Appendix \ref{app:finite_class_memory} for the details). 

\noindent{\it Quantification of quantum memory.---}
The convexity of the classical future also allows us to quantify the robustness of quantum memory within a dynamical process by methods of convex programs~\cite{Vidal99, Steiner2003,Chitambar2019,Uola2019a}.
In particular, we consider
\begin{align}
s^* = \max_{s, \mathcal{G}} \; &s \label{eq:convex_combi_future}   \\    
\textnormal{s.t. }&s \mathcal{E}(t_2) + (1-s) \mathcal{G} \in \mathcal{F} [\mathcal{E}(t_1)], \,\, \mathcal{G} \textnormal{ is CPTP} \nonumber
\end{align}
The channel $\mathcal{E}(t_2)$ is in the classical future $\mathcal{F} [\mathcal{E}(t_1)]$ if and only if $s^* \geq 1$. 
Intuitively, if $s^{*}$ is small, there is no channel in the vicinity of $\mathcal{E}(t_2)$ which lies in the classical future of $\mathcal{E}(t_1)$, as sketched in Fig.~\ref{fig:robustness_GA_sketch}(b). Inspired by the generalized robustness of entanglement~\cite{Steiner2003}, one therefore can define $r^{*} = {1}/{s^{*}} - 1$ as the \textit{robustness of quantum memory} with respect to the the pair of channels $[\mathcal{E}(t_1), \mathcal{E}(t_2)]$.

\noindent {\it Witnesses of quantum memory.---} 
In order to detect quantum memory by  direct application of the convex program~\eqref{eq:convex_combi_future}, process tomography \cite{Chuang1997} of $\mathcal{E}(t_1)$ and $\mathcal{E}(t_2)$ is required, demanding a complete set of preparation and measurement settings. 
Surprisingly, the concept of the classical future can alleviate this problem, as it admits a strong convexity property:
the set of pairs of channels $(\mathcal{E},\mathcal{G})$ such that $\mathcal{G} \in \mathcal{F} [\mathcal{E}]$ is also convex.
As a direct consequence, this allows for the construction of linear witnesses to detect quantum memory optimized with respect to available data, also when full process tomography of $\mathcal{E}(t_1)$ and $\mathcal{E}(t_2)$ are not available. 
More precisely, if the information about channel $\mathcal{E}(t_\alpha)$ for $\alpha=1,2$ is probed by preparation of states $\{\rho_i^\alpha \}$ and measuring of observables  $\{O_{j}^{\alpha}\}$, one can design coefficients $w^\alpha_{kl}$ such that 
\begin{equation}
\label{eq:two_point_witness_condition}
    \sum_{\alpha,ij} w^\alpha_{ij} \mathrm{tr} \{ O_{j}^{\alpha} \mathcal{E} (t_\alpha) [\rho^\alpha_i] \} \geq 0
\end{equation}
whenever $\mathcal{E}(t_2) \in \mathcal{F}[\mathcal{E}(t_1)]$. A negative sum of expectation values on the left hand side of \eqref{eq:two_point_witness_condition} thus detects quantum memory also with limited experimental data when the set of prepared states and measured observables are not enough to reconstruct $\mathcal{E}(t_1)$ and $\mathcal{E}(t_2)$; see Appendix~\ref{app:memory_witness} and examples below. 
{Still, solving a general convex optimization problem such as \eqref{eq:convex_combi_future} and constructing an associated witness are computationally challenging. We alleviate them by the following relaxation into a semidefinite program (SDP).}

\noindent {\it Semidefinite program (SDP) relaxation.---}
{For a mathematical convenience, }
we employ the Choi-Jamiołkowski representation~\cite{Choi1972, Jamiokowski1972}: any quantum channel $\mathcal{G}$ over a system of dimension $d$ can be represented by the bipartite positive operator $G^{\rm AB} = \sum_{i,j=0}^{d-1} \ketbra{i}{j} \otimes \mathcal{G}(\ketbra{i}{j})$ which obeys $\tr_{B} [G^{\rm AB}] = \openone^{\rm A}$. 
In this representation, the channel $\mathcal{E}(t_2)$ is in the classical future of $\mathcal{E}(t_1)$ per definition~\eqref{eq:class_memory_dynamics} if its associated Choi operator $E^{\rm AB}(t_2)$ can be written as $E^{\rm AB}(t_2) = {\ }^{\rm DD'}\bra{\phi^+} X^{\rm ADD'B} \ket{\phi^+}^{\rm DD'}$, where 
\begin{equation}
\label{eq:class_mem_choi}
    X^{\rm ADD'B} =   \sum_{i}  I_{i}^{\rm AD}(t_1) \otimes K_{i}^{\rm D'B}(t_2, t_1), 
\end{equation}
with $I_{i}^{\rm AD}(t_1)$ and $K_{i}^{\rm D'B}(t_2, t_1)$ corresponding to the Choi operators
of $\mathcal{K}_{i}(t_2,t_1)$ and $\mathcal{I}_{i}(t_1)$ in~\eqref{eq:class_memory_dynamics}, respectively, and $\ket{\phi^+}_{\rm DD'} = \sum_{i=0}^{d-1} \ket{i} \otimes \ket{i}$ being the unnormalized maximally entangled state over ${\rm DD'}$; see Appendix \ref{app:memory_witness} for details. 
Remarkably, Eq.~\eqref{eq:class_mem_choi} resembles the definition of separable states, {which allows us to adopt various entanglement detection techniques \cite{Ghne2009} for quantum memory. In particular, we can relax the separability condition \eqref{eq:class_mem_choi} to that of having a positive partial transpose (PPT) with respect to the bipartition ${\rm (AD|D'B)}$. In addition, $X^{\rm ADD'B}$ must be positive and obey $\tr_{\rm B}[X^{\rm ADD'B}] = E(t_{1})^{\rm AD} \otimes {\mathds{1}^{\rm D'}}$ since the channels $\mathcal{K}_i$ are trace-preserving. 
Together these relaxed constraints turn \eqref{eq:convex_combi_future} into an efficiently computable SDP and the corresponding PPT witnesses of quantum memory can also be extracted; see Appendix \ref{app:memory_witness} and \ref{app:quantum memory_witness} for more details. In general, the relaxed SDP gives an upper bound to \eqref{eq:convex_combi_future}, which, in many considered cases, can also be verified to be exact by constructing matching lower bounds by
 a see-saw algorithm or more advanced techniques~\cite{ohst24}.}

\noindent {\it Quantum memory in the spontaneous emission process.---}
Let us now apply the developed formalism to the quantum optical prime example of spontaneous emission. This fundamental phenomenon arises due to the unavoidable interaction between the system and the vacuum fluctuations of a surrounding field. Depending on the spectral density of field modes, the emission process can be non-Markovian.
Given the textbook setting of a two-level system or a $\Lambda$-type three-level system with two non-degenerate allowed transitions, an (effectively) zero-temperature radiative environment, and a sufficiently weak system-environment coupling so as to justify the rotating wave approximation, we can show that the environment memory, once present, is genuinely quantum. 
That is, for any two times $t_1<t_2$, the spontaneous emission process is either memory-less or it has quantum memory.

We start with the instructive case of a two-level atom coupled to a bosonic vacuum field. In the rotating wave approximation, the time evolution of atom and field is generated by the Hamiltonian 
\begin{eqnarray}\label{TLS_gene_Hamiltonian_main}
    H &=& \hbar \omega_e \ketbra{e}{e} + \int_0^{\infty} d\omega \,  \hbar \omega a_{\omega}^{\dagger} a_{\omega}  \nonumber \\
    &&+\hbar \int_0^{\infty} d\omega \sqrt{\frac{\Gamma(\omega)}{4\pi}} \left( a_{\omega}^{\dagger} \ketbra{g}{e} + a_{\omega} \ketbra{e}{g} \right),
\end{eqnarray}
where $\ket{g}, \ket{e}$ respectively denote the ground state and the excited state of the atom, $\omega_e$ the transition frequency, $a_\omega$ the mode operators, and $\Gamma(\omega)$ an arbitrary spectral density of the atom-field coupling. 
Starting from the vacuum state and an arbitrary pure state of the atom, one can formally integrate the Schrödinger equation under the Hamiltonian~\eqref{TLS_gene_Hamiltonian_main} and express the evolution as $\mathcal{E}(t) \equiv \mathcal{C} [c(t)]$, given in the Kraus representation as
\begin{eqnarray} \label{eq:2TLS_qmap}
    \mathcal{C}[c]\rho &=& (1-|c|^2)\ketbra{g}{e} \rho\ketbra{e}{g} + \nonumber \\
    && + (\ketbra{g}{g} + c\ketbra{e}{e})\rho (\ketbra{g}{g} + c^*\ketbra{e}{e}),
\end{eqnarray}
$ {2\pi} c(t) = {i} \int_{-\infty}^{\infty} d\omega {e^{-i\omega t}}/{[\omega - \omega_e+i\tilde\chi(\omega)]}$ and $4\pi \tilde\chi(\omega) = \iint_0^{\infty} ds d\omega' \, \Gamma(\omega') e^{i(\omega-\omega')s} $. 
The channel $\mathcal{C}[c]$ has a rather remarkable property, $\mathcal{C}[c']\mathcal{C}[c] = \mathcal{C}[c'c]$, which implies that Eq.~\eqref{eq:class_memory_dyn} is solved by $\mathcal{K}(t_2,t_1) = \mathcal{C}[c(t_2)/c(t_1)]$. The transformation $\mathcal{K}(t_2, t_1)$ constitutes a valid channel whenever $|c(t_2)| \leq |c(t_1)|$, indicating no memory propagating through the environment back into the system between $t_1<t_2$. 
Noticing~\eqref{eq:2TLS_qmap} is a sparse channel, we prove a stronger result: the inequality $|c(t_2)|>|c(t_1)|>0$ is necessary and sufficient condition for the process to exhibit memory, which is also necessarily of quantum nature; see Appendix~\ref{app:dyn_two_level_sys} for the details.
Thus, the non-monotonicity of $c(t)$ fully characterizes the quantum memory of this spontaneous emission process.

We now turn our attention to the $\Lambda$-type three-level system with an additional metastable state $|s\ra$ of energy $\hbar\omega_s$ above the ground state Fig.~\ref{fig:robustness_GA_sketch}(c). The field can induce dipole-allowed transitions between $|e\rangle \leftrightarrow |g\rangle$ and $|e\rangle \leftrightarrow |s\rangle$ with frequencies $\omega_e$ and $\omega_s$, respectively, but direct transitions $|g\rangle \leftrightarrow |s\rangle$ are forbidden.  
The corresponding atom-field Hamiltonian in rotating wave approximation reads as
\begin{eqnarray} \label{3TLS_gene_Hamiltonian_main}
    H &=&  \hbar\omega_e \ketbra{e}{e} + \hbar\omega_s \ketbra{s}{s} + \int_0^{\infty} \!\! d\omega \, \hbar\omega  a_{\omega}^\da a_{\omega} \\
    &+& \hbar\int_0^{\infty} \!\! d\omega \left[ a_{\omega}^\da\frac{\sqrt{\Gamma_1(\omega)} \ketbra{g}{e} + \sqrt{\Gamma_2(\omega)} \ketbra{s}{e}}{\sqrt{4\pi}} + h.c. \right], \nonumber 
\end{eqnarray}
with the coupling spectra $\Gamma_1(\omega)$ and $\Gamma_2(\omega)$ characterizing the dipole-allowed  transitions.
As before, one can exactly solve the reduced atom dynamics starting from the vacuum state of the field, $\mathcal{E}(t) \equiv \mathcal{C} [d(t),G(t)]$, with
\begin{eqnarray}\label{eq:channel-3}
    \mathcal{C} [d,G]\rho &=& \la e|\rho|e\ra \left[G \ketbra{g}{g} + (1-G-|d|^2) \ketbra{s}{s} \right] + \ok \rho \ok^\da, \nonumber \\
    &&\ok = d \ketbra{e}{e} +e^{-i\omega_s t}\ketbra{s}{s} + \ketbra{g}{g}.
\end{eqnarray}
Here, ${4\pi} G(t) = \int_0^{\infty} d\omega \Gamma_1 (\omega) \left| \int_0^t dt'd (t')e^{i\omega t'} \right|^2$, 
${2\pi} d(t) = {i} \int_{-\infty}^{\infty} d \omega {e^{-i\omega t}}/[{\omega - \omega_e+i\tilde\chi_1(\omega) + i\tilde\chi_2(\omega-\omega_s)}]$, and
$4\pi\tilde{\chi}_j (\omega) = \iint_0^{\infty} ds d\omega' \Gamma_j(\omega') e^{i\left(\omega - \omega' \right) s}$; see Appendix \ref{app:dyn_three_level_sys} for the details.
Remarkably, the channels $\mathcal{C}[d,G]$ also adhere to a simple composition rule, $\mathcal{C}[d',G'] \mathcal{C}[d,G] = \mathcal{C}[d'd,G+|d|^2G']$, which implies that Eq.~\eqref{eq:class_memory_dyn} is solved by $\mathcal{K}(t_2,t_1) = \mathcal{C}[d(t_2)/d(t_1),(G(t_2)- G(t_1))/|d(t_1)|^2]$.  
Demanding $\mathcal{K}(t_2,t_1)$ be a valid channel leads to
\begin{align}
    G(t_2) &\geq G(t_1), \nonumber \\
    G(t_1) + \abs{d(t_1)}^2 &\geq G(t_2) + \abs{d(t_2)}^2.
    \label{eq:3-criteria}
\end{align}
When these inequalities both hold, the process is explicitly Markovian in the sense of definition~\eqref{eq:class_memory_dyn}.
Again, the channel~\eqref{eq:channel-3} is highly sparse, which allows us to prove a stronger result: a violation of either inequalities is necessary and sufficient for the presence of memory in the process, which is always of quantum nature (see Appendix~\ref{app:dyn_three_level_sys}).

\noindent {\it Application to giant artificial atoms.---} 
Being multilevel systems that couple to a bosonic field of surface acoustic waves (SAWs) through two or more spatially distant contacts, giant artificial atoms are ideal for studying memory effects in quantum processes~\cite{Gusafsson2014, Andersson2019,Kitzman2023}.
Consider such a two-level artificial giant atoms coupled to a one-dimensional SAW waveguide through two contacts separated by distance $\Delta x$ as depicted in Fig.~\ref{fig:robustness_GA_sketch}(c).  
Let $v_g$ be the velocity of sound, the revival time is then $\tau = \Delta x/v_g$. Given the spontaneous decay rate $\gamma$ and a spectrally flat atom-field coupling at each contact, we can describe this case by the Hamiltonian \eqref{TLS_gene_Hamiltonian_main} with spectral density $\Gamma(\omega) = 8 \gamma \cos^2{\left(\omega \tau /2\right)}$. Assuming the field is initially in the vacuum state, the resulting decay is given by the channel~\eqref{eq:2TLS_qmap} with the amplitude~\cite{Guo2017} (see also Appendix ~\ref{app:dyn_2GA})
\begin{equation}
    c(t) = \sum_{n=0}^{\infty} \Theta(t-n\tau) \frac{[-\gamma (t-n\tau)/2]^n}{n\,!}  e^{-(i\omega_e +\gamma/2)(t-n\tau)}. 
    \label{damp_fact}
\end{equation}
For simplicity, we have omitted the vacuum Lamb shift of the atomic resonance. The function~\eqref{damp_fact} captures a partially recurring backflow of the atom's initial excitation over multiples of $\tau$.
The excitation revivals are most pronounced whenever $\omega_e \tau$ is a multiple of $2\pi$ and the summands in \eqref{damp_fact} interfere constructively. 

Such non-monotonous excitation decay with partial revivals on time scales much longer than the spontaneous decay time was recently experimentally observed~\cite{Andersson2019}.
If the bath can be assumed to be in the vacuum and the dynamics~\eqref{eq:2TLS_qmap} is valid, this is not only the signature of memory effects in the decaying process~\cite{Andersson2019}, but also, as we have shown, a demonstration of its quantum nature.

\begin{figure}[t!]
    \includegraphics[width=0.49\textwidth]{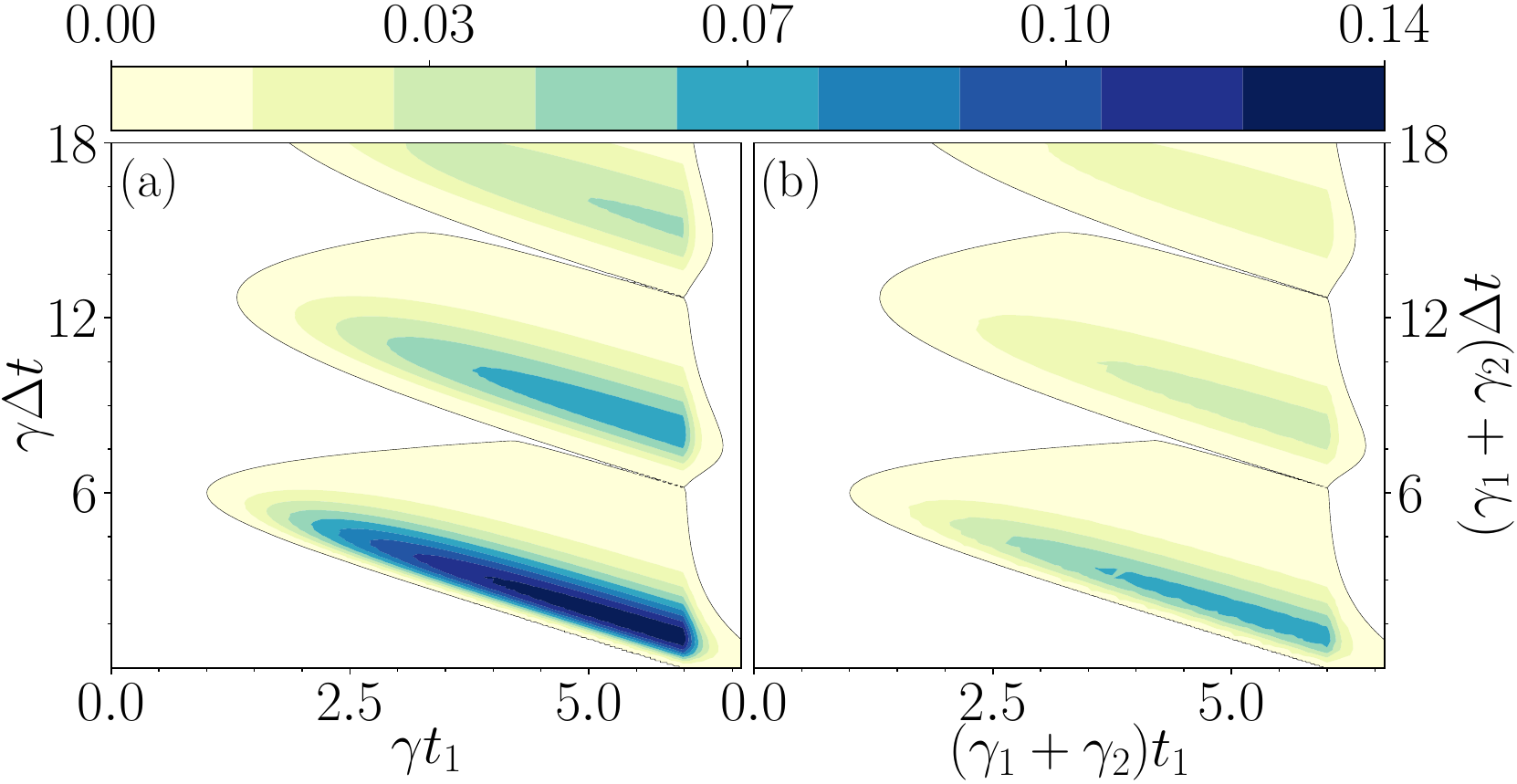}
    \caption{Quantum memory robustness between two times $(t_1, t_1 + \Delta t)$ in the spontaneous emission process of (a) a two-level giant atom with parameters $\omega_e \tau = 40 \pi,\, \gamma \tau = 12$ and (b) a comparable three-level giant atom with $\omega_s \tau = 20\pi$, $\omega_e \tau = 40\pi$, $\gamma_1 \tau = 4$, and $\gamma_2 \tau = 8$. } 
    \label{q_mem_robustness_GA}
\end{figure}

One might however argue, that due to experimental imperfection, the dynamics~\eqref{eq:2TLS_qmap} cannot be assumed to hold unconditionally. 
In fact, finite temperature effects can smear out the backflow memory and its quantum nature.
In this case, further characterization of the process is required to conclude on the memory effect as well as its quantum nature.

In the case where process tomography can be carried out for two time points $t_1$ and $t_2$, the SDP~\eqref{eq:convex_combi_future} can be used to verify the presence of quantum memory. To have an idea about the robustness of the quantum memory effect against noise, we use Eq.~\eqref{eq:convex_combi_future} to evaluate the robustness $r^*$ for the ideal process~\eqref{eq:2TLS_qmap}. In Fig.~\ref{q_mem_robustness_GA}(a), the robustness is plotted as a function of the two time points $t_1$ and $t_2 = t_1 + \Delta t$, for an exemplary setting with an appreciable revival time. The black lines mark the boundaries between Markovian processes (unshaded region, $r^*\leq 0$) and processes with quantum memory of varying robustness (shaded). 
The robustness reaches its global maximum, and thus the impact of coherent information backflow is the strongest, when $t_1$ and $t_2$ are shortly before and after the first revival time $\tau$. At fixed $t_1$, the robustness periodically decays and revives in parts with growing $t_2$, as the system loses and regains information. 

Process tomography at two time points generally requires all $32$ different expectation values, $\Lambda_{ij}^{\alpha} := \tr(\mathcal{E}(t_\alpha)[\sigma_{i}] \sigma_{j})$, with $i,j = 0,1,2,3$ labeling the Pauli matrices (including the identity) and $\alpha=1,2$ the two time points. 
To illustrate the method of witnesses using limited information, we construct a witness requiring only $12$ different values of $\Lambda_{ij}^\alpha$ for the most robust time points, $\gamma t_1 \approx  5.9$ and $\gamma t_2 \approx 7$ seen in Figure~\ref{q_mem_robustness_GA}(a) with the detailed description in Appendix~\ref{app:dyn_two_level_sys}. 

As an exemplary three-level study, we consider the $\Lambda$-type giant atom~\cite{Du2022, Vadiraj2021, Du2021} with configuration sketched in Fig.~\ref{fig:robustness_GA_sketch}(c), where
the allowed transitions $|e\rangle \leftrightarrow |g\rangle$ and $|e\rangle \leftrightarrow |s\rangle$ are both coupled via two contacts to the SAW waveguide, with delay time $\tau$ and single-contact decay rates $\gamma_1$ and $\gamma_2$, respectively. 
The effective Hamiltonian is given by Eq.~\eqref{3TLS_gene_Hamiltonian_main}, with $\Gamma_{i=1,2}(\omega) = 8 \gamma_i \cos^2{\left(\omega \tau /2\right)}$. The resulting decay function $d(t)$ takes a form similar to \eqref{damp_fact}, which fully determines the time evolution if we neglect the Lamb shift; see  Appendix~\ref{app:dyn_three_level_sys}. 

We plot the quantum memory robustness against $t_1$ and $t_2$ in Fig.~\ref{q_mem_robustness_GA}(b) for a three-level setting with the same overall excited state decay rate as in the two-level case (a), $\gamma_1 + \gamma_2 = \gamma$. The memory properties are periodic in both transition frequencies $\omega_e,\omega_s$ with period $2\pi/\tau$ and the most pronounced when both $\omega_e\tau$ and $\omega_s\tau$ are multiples of $2\pi$.
The regimes of quantum memory (shaded regions) then coincide almost exactly with the two-level case, but the overall robustness is reduced by about a factor two. 

To illustrate the adaptability of the witnesses to the accessible measurements, we restrict to Pauli measurements and preparation of Pauli eigenstates using only levels with direct dipole coupling, $|e\rangle \leftrightarrow |g\rangle$ and $|e\rangle \leftrightarrow |s\rangle$. We illustrate the construction of witnesses with respect to these measurement settings to detect quantum memory at the optimally robust time points given by $(\gamma_1 + \gamma_2)t_1 \approx 6$ and $(\gamma_1 + \gamma_2)t_2 \approx 6.92$, see Figure~\ref{q_mem_robustness_GA}(b); see Appendix \ref{app:dyn_three_level_sys} for the details.

\noindent {\it Conclusion.---}
The presented convex geometry framework serves as a basis for 
deepening our understanding of memory in quantum dynamics.
Knowing that the classical memory transferred between two time points of a process is fundamentally bounded, one could further distinguish processes based on different amounts of memory. Applying our criteria to consecutive infinitesimal time steps could establish a link between the presence of quantum memory and the exact form of a non-Markovian master equation.
Our approach could also bring deeper understanding of quantum memory in the process tensor formalism.
Finally, an extension to continuous variables could reveal or facilitate memory effects in the physics of quantum optical or mechanical systems.

\acknowledgments
M.Y. and T.A.O. contributed equally.
We thank Charlotte Bäcker, Leonardo S.V.~Santos, and  Walter Strunz,  for fruitful discussion.
This work is supported by 
the Deutsche Forschungsgemeinschaft (DFG, German Research Foundation, project numbers 447948357 and 440958198), 
the Sino-German Center for Research Promotion (Project M-0294), 
the German Ministry of Education and Research (Project QuKuK, BMBF Grant No. 16KIS1618K). H.C.N. is further supported by the EIN~Quantum~NRW.


%

\onecolumngrid
\appendix

\section{Elementary examples of classical and quantum memory in dynamical processes}
\label{app:class_future_examples}

To illustrate the concept of classical memory and classical futures as a proper generalization of Markovian dynamics, one can consider the dephasing qubit dynamics $\mathcal{D}_{\alpha}(t) \in C_2$, where $C_d$ denotes the set of channels on a $d$-level system, defined by 
\begin{alignat}{2}
    \mathcal{D}_{\alpha}(t)[\ketbra{0}{0}] &= \ketbra{0}{0}  &\mathcal{D}_{\alpha}(t)[\ketbra{1}{1}] &= \ketbra{1}{1} \\
        \mathcal{D}_{\alpha}(t)[\ketbra{0}{1}] &= \alpha(t) \ketbra{0}{1} \quad &\mathcal{D}_{\alpha}(t)[\ketbra{1}{0}] &= \alpha(t)^{*} \ketbra{1}{0}.
\end{alignat}
Here, $\alpha(t)$ is any function such that $\abs{\alpha(t)} \leq 1$ for all $t$, guaranteeing that $\mathcal{D}_{\alpha}(t)$ is completely positive for all times $t$. In terms of memory, the process $D_{\alpha}(t)$ has the following characterization.

\begin{Proposition}
    \label{prop:dephasing_dynamics}
    The two point qubit dephasing dynamics $\mathcal{D}_{\alpha}(t): \mathbb{R} \rightarrow C_2$ is Markovian if and only if 
    \begin{equation}
        \abs{\alpha(t_2)} \leq \abs{\alpha(t_1)}
    \end{equation}
    for all times $t_1$ and $t_2$ such that $t_{1} \leq t_{2}$.
    On the other hand, the dynamics can be implemented with classical memory for all values of $\alpha(t)$, i.e., $\mathcal{D}_{\alpha}(t_2) \in \mathcal{F}[\mathcal{D}_{\alpha}(t_1)]$, for all functions $\alpha$ such that $\abs{\alpha(t)} \leq 1$ and for all time points $t_1$ and $t_2$. Moreover, the size $n$ of the subchannel decomposition of $\mathcal{D}_{\alpha}(t_1)$ can be chosen as $n=2$.
\end{Proposition}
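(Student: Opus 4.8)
The plan is to build everything on the multiplicative composition law of dephasing channels, $\mathcal{D}_{\beta}\mathcal{D}_{\alpha} = \mathcal{D}_{\alpha\beta}$, which holds because both factors preserve the diagonal of $\rho$ and simply multiply the off-diagonal entry $\rho_{01}$ by their respective parameters; combined with the elementary fact that $\mathcal{D}_{\beta}$ is CPTP exactly when $|\beta|\le 1$ (its Choi operator is positive semidefinite iff $|\beta|\le1$). These two ingredients give the Markovianity claim almost immediately, and I would treat it first.

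For the Markovianity (CP-divisibility) statement: if $\alpha(t_1)\neq 0$, then $\mathcal{D}_{\alpha(t_1)}$ is an invertible linear map on operators, so any divisor solving $\mathcal{D}_{\alpha(t_2)} = \mathcal{K}\,\mathcal{D}_{\alpha(t_1)}$ is uniquely fixed as $\mathcal{K} = \mathcal{D}_{\alpha(t_2)}\mathcal{D}_{\alpha(t_1)}^{-1}=\mathcal{D}_{\alpha(t_2)/\alpha(t_1)}$, which is completely positive iff $|\alpha(t_2)/\alpha(t_1)|\le 1$, i.e. iff $|\alpha(t_2)|\le|\alpha(t_1)|$; this delivers both directions at once. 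The degenerate case $\alpha(t_1)=0$ is handled separately: there $\mathcal{D}_{\alpha(t_1)}$ projects onto the diagonal, so $\mathcal{K}\,\mathcal{D}_{\alpha(t_1)}[\rho]$ can depend only on the diagonal of $\rho$, which forces $\alpha(t_2)=0$, again consistent with the inequality. Quantifying over all pairs $t_1\le t_2$ completes the first assertion.

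The heart of the proof, and the step I expect to be the main obstacle, is finding an explicit $n=2$ classical-memory realization that works even in the non-Markovian regime $|\alpha(t_2)|>|\alpha(t_1)|$. The key idea is an unsharp $\sigma_z$-type instrument for $\mathcal{D}_{\alpha(t_1)}$ whose two outcomes each retain enough coherence to rebuild an arbitrary target. For real $\alpha(t_1)\in[-1,1]$ I would take the subchannels
\[
\mathcal{I}_1[\rho] = \tfrac{1+\alpha(t_1)}{2}\,\rho, \qquad \mathcal{I}_2[\rho] = \tfrac{1-\alpha(t_1)}{2}\,\sigma_z\rho\,\sigma_z,
\]
which are completely positive, sum to a trace-preserving map, and satisfy $\mathcal{I}_1+\mathcal{I}_2 = \mathcal{D}_{\alpha(t_1)}$. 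Choosing the outcome-dependent recovery channels $\mathcal{K}_1 = \mathcal{D}_{\alpha(t_2)}$ and $\mathcal{K}_2 = \mathcal{D}_{\alpha(t_2)}\,\mathcal{U}_Z$ with $\mathcal{U}_Z[\rho]=\sigma_z\rho\,\sigma_z$ — both CPTP since $|\alpha(t_2)|\le 1$ — one finds, using $\sigma_z^2=\mathds{1}$, that $\mathcal{K}_1\mathcal{I}_1[\rho] + \mathcal{K}_2\mathcal{I}_2[\rho] = \big(\tfrac{1+\alpha(t_1)}{2}+\tfrac{1-\alpha(t_1)}{2}\big)\mathcal{D}_{\alpha(t_2)}[\rho] = \mathcal{D}_{\alpha(t_2)}[\rho]$. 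Since this holds for every pair of values in the unit disk irrespective of their moduli, $\mathcal{D}_{\alpha(t_2)}\in\mathcal{F}[\mathcal{D}_{\alpha(t_1)}]$ always, with exactly $n=2$ subchannels.

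The only remaining technicality is lifting the construction to complex $\alpha$, where the coefficients $\tfrac{1\pm\alpha}{2}$ cease to be nonnegative. I would dispose of this either by absorbing the phases into diagonal unitaries, replacing $\sigma_z$ and the conjugation in the $\mathcal{K}_i$ by appropriate diagonal unitary channels, or more cleanly by invoking the pre-/post-processing invariance of the classical future (Appendix~\ref{sec:invariance}): writing $\mathcal{D}_{\alpha} = \mathcal{P}_{\arg\alpha}\,\mathcal{D}_{|\alpha|}$ with $\mathcal{P}_\phi$ a phase-rotation unitary channel, the real-case result together with unitary post-processing at both time points gives $\mathcal{D}_{\alpha(t_2)}\in\mathcal{F}[\mathcal{D}_{\alpha(t_1)}]$ while preserving $n=2$. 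I emphasize that the conceptual content lies entirely in the real-case instrument; the complex phases and the converse of the Markovianity claim are routine bookkeeping.
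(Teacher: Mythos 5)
Your proposal is correct, and while the Markovianity half coincides with the paper's argument (the divisor is forced to be $\mathcal{D}_{\alpha(t_2)/\alpha(t_1)}$, CP iff $\abs{\alpha(t_2)}\leq\abs{\alpha(t_1)}$; your separate treatment of $\alpha(t_1)=0$ is a small point the paper leaves implicit), the classical-memory half takes a genuinely different route. The paper works at the level of Choi operators: it splits $\mathcal{D}_{\alpha(t_1)}$ into two rank-one instrument elements with equal weights $\tfrac12$ and off-diagonal entries $\gamma$ and $\alpha(t_1)-\gamma$ of modulus $\tfrac12$, applies \emph{unitary} recovery channels that rotate these phases to $\delta$ and $\alpha(t_2)-\delta$, and establishes the existence of suitable $\gamma,\delta$ by a circle-intersection argument (Fig.~\ref{fig:dephasing_proof}). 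You instead use the random-unitary decomposition $\mathcal{D}_{\alpha(t_1)} = \tfrac{1+\alpha(t_1)}{2}\,\mathrm{id} + \tfrac{1-\alpha(t_1)}{2}\,\mathcal{U}_Z$ for real $\alpha(t_1)$, let the classical record invert the unitary branch, and rebuild the target channel from scratch; the complex case follows by factoring out diagonal phase unitaries and invoking the invariance proposition of Appendix~\ref{sec:invariance} (one should note, as a one-line check of that proposition's proof, that the conjugations $\widetilde{\mathcal{K}}_i = \mathcal{D}\circ\mathcal{K}_i\circ\mathcal{U}^{-1}$, $\widetilde{\mathcal{I}}_i = \mathcal{U}\circ\mathcal{I}_i\circ\mathcal{C}$ preserve the size $n=2$, since the statement itself does not mention $n$). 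Both constructions ultimately exploit the same fact that qubit dephasing is a two-outcome mixture of diagonal unitaries — the paper fixes the weights at $\tfrac12$ and varies the phases on the unit circle, you fix the phases at $\pm1$ and vary the weight — but your version avoids the geometric lemma entirely and proves strictly more: since your recovery step $\mathcal{K}_i = \mathcal{G}\circ\mathcal{V}_i^{-1}$ never uses that the target is a dephasing channel, it shows $\mathcal{F}_2[\mathcal{D}_{\alpha(t_1)}]$ contains \emph{every} channel $\mathcal{G}\in C_2$, whereas the paper's construction is tailored to dephasing targets. What the paper's version buys in exchange is a realization in which both the instrument elements and the feedback operations are unitary (a minimal-disturbance, phase-kick picture), presented directly in the Choi representation used throughout its appendices.
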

\begin{proof}
    For the Markovianity, consider the map $\mathcal{K}_{\alpha}(t_2, t_1)$ that satisfies $\mathcal{D}_{\alpha}(t_2) = \mathcal{K}_{\alpha}(t_2, t_1) \circ \mathcal{D}_{\alpha}(t_1)$. It is straightforward to see that $\mathcal{K}_{\alpha}(t_2, t_1)$ must be given by the action 
    \begin{alignat}{2}
    \mathcal{K}_{\alpha}(t_2, t_1)[\ketbra{0}{0}] &= \ketbra{0}{0} 
    \quad &\mathcal{K}_{\alpha}(t_2, t_1)[\ketbra{1}{1}] &= \ketbra{1}{1} \\
    \mathcal{K}_{\alpha}(t_2, t_1)[\ketbra{0}{1}] &= (\alpha(t_2)/\alpha(t_1)) \ketbra{0}{1} \quad &
    \mathcal{K}_{\alpha}(t_2, t_1)[\ketbra{1}{0}] &= (\alpha(t_2)/\alpha(t_1))^{*} \ketbra{1}{0}
\end{alignat}
such that $\mathcal{K}_{\alpha}(t_2, t_1)$ is completely positive if and only if $\abs{\alpha(t_2)} \leq \abs{\alpha(t_1)}$.

We now show that it is always possible to decompose $\mathcal{D}_{\alpha}(t_2)$ as 
\begin{equation}
\label{eq:dephasing_classical_deco}
    \mathcal{D}_{\alpha}(t_2) = \sum_{i=1}^{2} \mathcal{K}_{\alpha i}(t_2, t_1) \circ \mathcal{I}_{\alpha i}(t_1)
\end{equation}
    with channels $\mathcal{K}_{\alpha i}(t_2, t_1)$ and completely positive maps $\mathcal{I}_{\alpha i}(t_1)$ such that $\mathcal{D}_{\alpha}(t_1) = \mathcal{I}_{\alpha 1}(t_1) + \mathcal{I}_{\alpha 2}(t_1)$. The Choi operators of the channels and instrument elements can on their support (spanned by $\ket{00}$ and $\ket{11}$) explicitly constructed as 
    \begin{equation}
        I_{\alpha 1}(t_1) = \begin{pmatrix}
            \frac{1}{2} & \gamma \\
            \gamma^{*} & \frac{1}{2}
        \end{pmatrix}
        \quad\quad\quad
        I_{\alpha 2}(t_1) = \begin{pmatrix}
            \frac{1}{2} & \alpha(t_1) - \gamma \\
            \alpha(t_1)^{*} - \gamma^{*} & \frac{1}{2}
        \end{pmatrix}
    \end{equation}
    and 
    \begin{equation}
        K_{\alpha 1}(t_2, t_1) = \begin{pmatrix}
            1 & \frac{\delta}{\gamma} \\
            \frac{\delta^{*}}{\gamma^{*}} & 1
        \end{pmatrix}
        \quad\quad\quad
        K_{\alpha 2}(t_2, t_1) = \begin{pmatrix}
            1 & \frac{\alpha(t_{2}) - \delta}{\alpha(t_{1}) - \gamma} \\
            \frac{\alpha(t_{2})^{*} - \delta^{*}}{\alpha(t_{1})^{*} - \gamma^{*}} & 1
        \end{pmatrix}
    \end{equation}
    with some complex numbers $\gamma$ and $\delta$. 
    It can be checked that the associated subchannels and channels together satisfy Eq.~\eqref{eq:dephasing_classical_deco} and, if the complex numbers $\gamma$ and $\delta$ are chosen to obey $\abs{\gamma} = \abs{\delta} = \frac{1}{2}$ and $\abs{\alpha(t_1) - \gamma} = \abs{\alpha(t_2) - \delta} = \frac{1}{2}$, it is ensured that all the maps are completely positive. A simple geometric argument reveals that such numbers $\gamma$ and $\delta$ always exist for all $\alpha(t_1)$ and $\alpha(t_2)$ on the complex unit disk, see Figure \ref{fig:dephasing_proof}. 
 \end{proof}
 
The dephasing dynamics $\mathcal{D}_{\alpha}(t)$ is an example for a process that shows non-Markovianity that solely arises from classical memory effects. In contrast to that, a simple example which shows the presence of genuine quantum memory is given by the qubit dynamics $\mathcal{E}_{J}(t)$ defined as 
 \begin{equation}
     \mathcal{E}_{J}(t)[\rho] = \Tr_{\rm E}[e^{itH_{J}} (\rho \otimes \ketbra{1}{1}_{\rm E}) e^{-itH_{J}}]
 \end{equation}
where $\rm{E}$ denotes the environment two-level system and where the Hamiltonian $H_{J}$ is
\begin{equation}
    2H_{J} / \hbar = -J_{x} \sigma_{x} \otimes \sigma_{x} - J_{y} \sigma_{y} \otimes \sigma_{y} - J_{z} \sigma_{z} \otimes \sigma_{z}.
\end{equation}
with some coupling rates $J_x, J_y$ and $J_z$. In the Figures \ref{fig:heisenberg_qm} and \ref{fig:heisenberg_markov}, the robustness of quantum memory and non-Markovianity are plotted for $(J_x, J_y, J_z) = (-1, -2, -3) \frac{1}{s}$. The computation of those robustness is carried out with semidefinite programs as explained below in Section \ref{app:memory_witness}. In comparison to the dephasing dynamics, one can see that for some values of time pairs $(t, t + \Delta t)$, the two point dynamics is classical but still non-Markovian. 

\section{Invariance of the classical future under pre- and postprocessing operations}
\label{sec:invariance}
\begin{Proposition}
    Let $\mathcal{E}, \mathcal{G} \in C_d$ be two quantum channels and let $\mathcal{G}$ be in the classical future of $\mathcal{E}$, i.e., $\mathcal{G} \in \mathcal{F}[\mathcal{E}]$. Furthermore, let $\mathcal{C}, \mathcal{D} \in C_d$ be some channels and let $\mathcal{U} \in U_d$ be a unitary channel. Then, it holds that $\mathcal{D} \circ \mathcal{G} \circ \mathcal{C}$ is in the classical future of $\mathcal{U} \circ \mathcal{E} \circ \mathcal{C}$, i.e. $\mathcal{D} \circ \mathcal{G} \circ \mathcal{C} \in \mathcal{F}[\mathcal{U} \circ \mathcal{E} \circ \mathcal{C}]$. 
 
\end{Proposition}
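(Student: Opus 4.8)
The plan is to take the subchannel decomposition witnessing $\mathcal{G} \in \mathcal{F}[\mathcal{E}]$ and transport it through the pre- and post-processing maps, exploiting crucially that a unitary channel is invertible within the set of channels. By the definition of the classical future, there exist subchannels $\{\mathcal{I}_i\}_{i=1}^n$ with $\mathcal{E} = \sum_i \mathcal{I}_i$ and channels $\{\mathcal{K}_i\}_{i=1}^n$ such that $\mathcal{G} = \sum_i \mathcal{K}_i \circ \mathcal{I}_i$. I would then propose as the candidate decomposition for the transformed pair $\mathcal{I}'_i := \mathcal{U} \circ \mathcal{I}_i \circ \mathcal{C}$ and $\mathcal{K}'_i := \mathcal{D} \circ \mathcal{K}_i \circ \mathcal{U}^{-1}$, where $\mathcal{U}^{-1} \in U_d$ is the unitary channel inverse to $\mathcal{U}$.

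The verification then splits into two routine checks. First, $\{\mathcal{I}'_i\}$ is a valid subchannel decomposition of $\mathcal{U}\circ\mathcal{E}\circ\mathcal{C}$: each $\mathcal{I}'_i$ is completely positive as a composition of completely positive maps, and summing over $i$ gives $\sum_i \mathcal{U}\circ\mathcal{I}_i\circ\mathcal{C} = \mathcal{U}\circ(\sum_i \mathcal{I}_i)\circ\mathcal{C} = \mathcal{U}\circ\mathcal{E}\circ\mathcal{C}$, which is trace-preserving, so the $\mathcal{I}'_i$ are automatically trace non-increasing. Second, each $\mathcal{K}'_i$ is a channel, being a composition of the channels $\mathcal{D}$, $\mathcal{K}_i$, and $\mathcal{U}^{-1}$.

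It then remains only to confirm that these pieces reproduce $\mathcal{D}\circ\mathcal{G}\circ\mathcal{C}$. Composing and using $\mathcal{U}^{-1}\circ\mathcal{U} = \mathrm{id}$,
\begin{equation}
\sum_{i} \mathcal{K}'_i \circ \mathcal{I}'_i = \sum_i \mathcal{D}\circ\mathcal{K}_i\circ\mathcal{U}^{-1}\circ\mathcal{U}\circ\mathcal{I}_i\circ\mathcal{C} = \mathcal{D}\circ\Big(\sum_i \mathcal{K}_i\circ\mathcal{I}_i\Big)\circ\mathcal{C} = \mathcal{D}\circ\mathcal{G}\circ\mathcal{C},
\end{equation}
which establishes $\mathcal{D}\circ\mathcal{G}\circ\mathcal{C}\in\mathcal{F}[\mathcal{U}\circ\mathcal{E}\circ\mathcal{C}]$.

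The only subtle point---and the reason the statement demands a unitary rather than a general channel at the earlier time---is the step $\mathcal{K}'_i=\mathcal{D}\circ\mathcal{K}_i\circ\mathcal{U}^{-1}$: to absorb the post-processing applied to the earlier channel back into the feed-forward channels, one must undo it, which yields a legitimate CPTP channel only when $\mathcal{U}$ is invertible with a completely positive trace-preserving inverse. A generic $\mathcal{D}$ after the later channel causes no difficulty, as it is simply pre-composed onto each $\mathcal{K}_i$, and a generic $\mathcal{C}$ before both channels cancels on both sides. I expect no further obstacles; the argument is purely algebraic once the candidate decomposition is written down.
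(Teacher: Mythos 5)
Your proposal is correct and follows essentially the same route as the paper: the same candidate decomposition $\widetilde{\mathcal{I}}_{i} = \mathcal{U} \circ \mathcal{I}_i \circ \mathcal{C}$ and $\widetilde{\mathcal{K}}_{i} = \mathcal{D} \circ \mathcal{K}_{i} \circ \mathcal{U}^{-1}$, with the identical cancellation $\mathcal{U}^{-1} \circ \mathcal{U} = \mathrm{id}$. Your write-up is in fact slightly more explicit than the paper's, spelling out the complete positivity and trace checks and the role of unitarity that the paper leaves to the reader.
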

\begin{proof}
    We assume that $\mathcal{G} \in \mathcal{F}[\mathcal{E}]$. Then by definition of the classical future, there exists an $n \in \mathbb{N}$, quantum channels $\{\mathcal{K}_{i}\}_{i=1}^{n}$ and a subchannel decomposition $\{\mathcal{I}_{i}\}_{i=1}^{n}$ of $\mathcal{E}$ such that $\mathcal{G} = \sum_{i=1}^{n} \mathcal{K}_{i} \circ \mathcal{I}_i$.
    For the channel $\mathcal{D} \circ \mathcal{G} \circ \mathcal{C}$, one calculates
    \begin{align*}
        \mathcal{D} \circ \mathcal{G} \circ \mathcal{C} = \sum_{i=1}^{n} \mathcal{D} \circ \mathcal{K}_{i} \circ \mathcal{I}_i \circ \mathcal{C} 
        = \sum_{i=1}^{n} \underbrace{\mathcal{D} \circ \mathcal{K}_{i} \circ \mathcal{U}^{-1}}_{=:\widetilde{\mathcal{K}}_{i}} \circ \underbrace{\mathcal{U} \circ \mathcal{I}_i \circ \mathcal{C}}_{=:\widetilde{\mathcal{I}}_{i}} 
        = \sum_{i=1}^{n} \widetilde{\mathcal{K}}_{i} \circ \widetilde{\mathcal{I}}_{i}
    \end{align*}
    where $\widetilde{\mathcal{K}}_{i}$ are channels and $\{\widetilde{\mathcal{I}}_{i}\}_{i=1}^{n}$ can be verified as a subchannel decomposition of $\mathcal{U} \circ \mathcal{E} \circ \mathcal{C}$. 
\end{proof}

\begin{figure}
\centering
\subfloat[\label{fig:dephasing_proof}]{\includegraphics[width=0.22\linewidth]{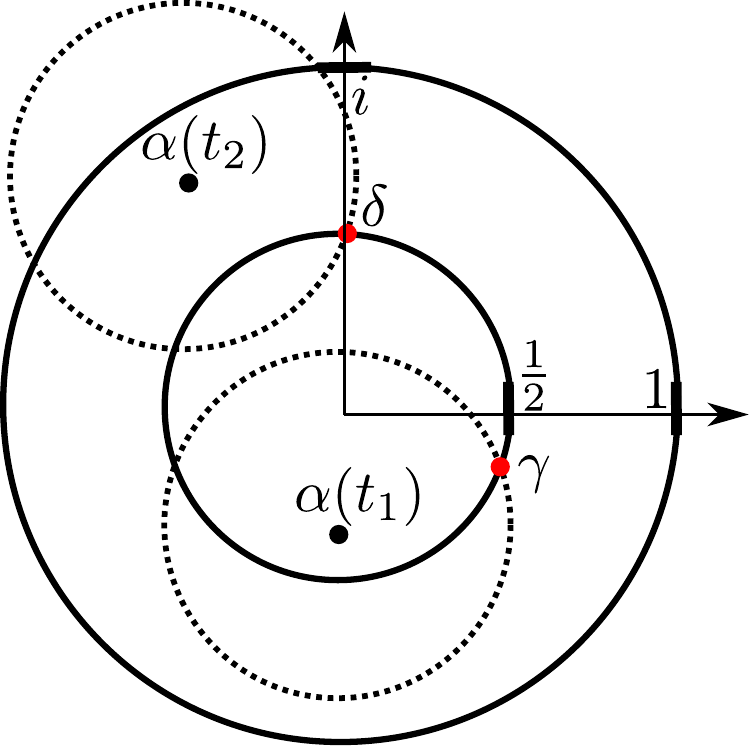}}
\hspace{0.02\linewidth}
\subfloat[\label{fig:heisenberg_qm}]{\includegraphics[width=0.36\linewidth]{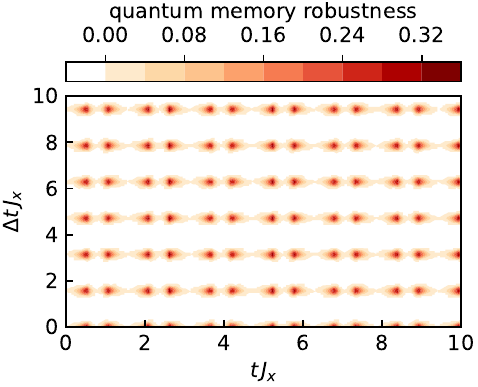}}
\hspace{0.02\linewidth}
\subfloat[\label{fig:heisenberg_markov}]{\includegraphics[width=0.36\linewidth]{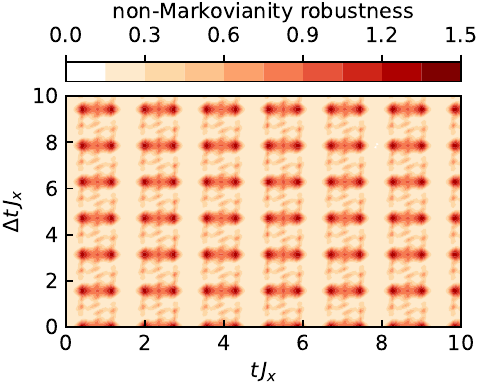}}
    \caption{(a) Geometric construction in the proof of Proposition \ref{prop:dephasing_dynamics} on the classical description of the qubit dephasing dynamics. The complex numbers $\alpha(t_1)$ and $\alpha(t_2)$ are confined to the unit disk. For every complex number $z$ on the unit disk, there exists at least one number $w$ such that $\abs{w}=\frac{1}{2}$ as well as $\abs{z-w} = \frac{1}{2}$. The number $w$ is an intersection point of the circle with radius $\frac{1}{2}$ around $0$ and the circle with radius $\frac{1}{2}$ around $z$. For $\alpha(t_1)$, this number is denoted as $\gamma$ and it is denoted as $\delta$ for $\alpha(t_2)$. (b) Quantum memory robustness of the two-qubit Heisenberg model at different time pairs $(t_{1}, t_{1} + \Delta t)$. (c) Non-Markovianity robustness of the two-qubit Heisenberg model at different time pairs $(t_{1}, t_{1} + \Delta t)$.}
    \label{fig:dephasing_class_geometry}
\end{figure}
\section{Elementary properties of the classical future and the fundamental bound of classical memory}
\label{app:finite_class_memory}
 
Here we give the proofs of elementary properties of the classical future and eventually arrive at the proof for the statement in the main text, that classical memory in dynamical processes over finite dimensional systems is finite and bounded.

We first notice that the classical future, as it has been defined in the main text, can be decomposed as 
\begin{equation}
    \mathcal{F}[\mathcal{E}] = \bigcup_{n \in \mathbb{N}} \mathcal{F}_{n}[\mathcal{E}]
\end{equation}
where $\mathcal{F}_{n}[\mathcal{E}]$ is the set defined by
\begin{equation}
    \mathcal{F}_{n}[\mathcal{E}] = \{\mathcal{G} \in C_{d}: \mathcal{G} = \sum_{i=1}^{n} \mathcal{K}_{i} \circ \mathcal{I}_i, \mathcal{K}_{i} \in C_d, \{\mathcal{I}_{i}\}_{i=1}^{n} \in I_{n}[\mathcal{E}]\}
    \label{eq:finite-classical-future}
\end{equation}
and $I_{n}[\mathcal{E}]$ denotes the set of all subchannel decompositions of $\mathcal{E}$ of size $n$.
In words, the set $\mathcal{F}_{n}[\mathcal{E}]$ is the subset of the classical future of $\mathcal{E}$ that may be reached with classical memory bounded by the size $n$ of the subchannel decomposition of $\mathcal{E}$.
It is clear that
\begin{equation}
    \mathcal{F}_0 [\mathcal{E}] \subseteq \mathcal{F}_1 [\mathcal{E}] \subseteq \mathcal{F}_2 [\mathcal{E}]
    \subseteq
    \cdots
    \subseteq
    \mathcal{F}_n [\mathcal{E}]
    \subseteq
    \cdots 
    \label{eq:chain}
\end{equation}
as it is always possible to pad the subchannel decompositions with additional zero maps. 
Although $\mathcal{F}_n[\mathcal{E}]$ themselves may not be convex, one can show that their union $\mathcal{F}[\mathcal{E}]$ is.
\begin{theorem}
     The classical future $\mathcal{F}[\mathcal{E}]$ of a channel $\mathcal{E} \in C_d$ is a convex set. In fact, $\mathrm{conv}(F_{n_1}[\mathcal{E}] \cup F_{n_2}[\mathcal{E}]) \subseteq F_{n_1+n_2}[\mathcal{E}]$.  
\end{theorem}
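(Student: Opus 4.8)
The plan is to establish the stronger containment $\mathrm{conv}(\mathcal{F}_{n_1}[\mathcal{E}] \cup \mathcal{F}_{n_2}[\mathcal{E}]) \subseteq \mathcal{F}_{n_1+n_2}[\mathcal{E}]$ first, from which convexity of $\mathcal{F}[\mathcal{E}] = \bigcup_{n} \mathcal{F}_n[\mathcal{E}]$ is immediate: any two elements of the classical future lie in some $\mathcal{F}_{n_1}[\mathcal{E}]$ and $\mathcal{F}_{n_2}[\mathcal{E}]$, whence every convex combination sits in $\mathcal{F}_{n_1+n_2}[\mathcal{E}]$ and therefore in $\mathcal{F}[\mathcal{E}]$. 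The point worth stressing up front is that the finite-size sets $\mathcal{F}_n[\mathcal{E}]$ are themselves \emph{not} expected to be convex; it is precisely the freedom to enlarge the decomposition size that restores convexity, which is why the target set is $\mathcal{F}_{n_1+n_2}[\mathcal{E}]$ rather than $\mathcal{F}_{\max(n_1,n_2)}[\mathcal{E}]$.

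Concretely, I would fix $\mathcal{G}_1 \in \mathcal{F}_{n_1}[\mathcal{E}]$ and $\mathcal{G}_2 \in \mathcal{F}_{n_2}[\mathcal{E}]$ with defining data $\mathcal{G}_1 = \sum_{i=1}^{n_1} \mathcal{K}_i^{(1)} \circ \mathcal{I}_i^{(1)}$ and $\mathcal{G}_2 = \sum_{j=1}^{n_2} \mathcal{K}_j^{(2)} \circ \mathcal{I}_j^{(2)}$, where $\{\mathcal{I}_i^{(1)}\}$ and $\{\mathcal{I}_j^{(2)}\}$ are subchannel decompositions of $\mathcal{E}$ and all $\mathcal{K}$ are channels. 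For $p \in [0,1]$ I then form the size-$(n_1+n_2)$ list obtained by concatenating the two rescaled families $\{p\,\mathcal{I}_i^{(1)}\}_{i=1}^{n_1}$ and $\{(1-p)\,\mathcal{I}_j^{(2)}\}_{j=1}^{n_2}$, paired respectively with the recovery channels $\mathcal{K}_i^{(1)}$ and $\mathcal{K}_j^{(2)}$. The two things to verify are that this is a legitimate subchannel decomposition of $\mathcal{E}$ and that it reproduces the convex combination.

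The first verification is that each rescaled element is completely positive — immediate, since $p, 1-p \geq 0$ — and that the family sums to $\mathcal{E}$, which follows from $p \sum_i \mathcal{I}_i^{(1)} + (1-p)\sum_j \mathcal{I}_j^{(2)} = p\,\mathcal{E} + (1-p)\,\mathcal{E} = \mathcal{E}$. The second uses linearity to pass the scalars through the recovery channels, giving $\sum_i \mathcal{K}_i^{(1)} \circ (p\,\mathcal{I}_i^{(1)}) + \sum_j \mathcal{K}_j^{(2)} \circ ((1-p)\,\mathcal{I}_j^{(2)}) = p\,\mathcal{G}_1 + (1-p)\,\mathcal{G}_2$, so the combination lies in $\mathcal{F}_{n_1+n_2}[\mathcal{E}]$. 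The single conceptual move — and the only place where any care is needed — is to absorb the weights $p$ and $1-p$ into the \emph{instrument elements}, which may be rescaled freely while remaining completely positive, rather than into the $\mathcal{K}_i$, which must stay trace-preserving and hence cannot be scaled. The boundary cases $p \in \{0,1\}$ are covered by the convention that padding with zero subchannels is allowed, consistent with the chain of inclusions $\mathcal{F}_0[\mathcal{E}] \subseteq \mathcal{F}_1[\mathcal{E}] \subseteq \cdots$ noted above.
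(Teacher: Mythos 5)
Your proposal is correct and follows essentially the same route as the paper's proof: concatenate the two subchannel decompositions with the weights $p$ and $1-p$ absorbed into the instrument elements (which may be rescaled while remaining completely positive), yielding a size-$(n_1+n_2)$ decomposition of $\mathcal{E}$ that reproduces the convex combination. Your additional remarks on why the weights cannot be absorbed into the trace-preserving channels $\mathcal{K}_i$ and on the boundary cases $p\in\{0,1\}$ are accurate clarifications of the same argument.
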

 \begin{proof}
     Let $\mathcal{G}_1$ and $\mathcal{G}_2$ be quantum channels that are in the classical future $\mathcal{F}[\mathcal{E}]$ of the channel $\mathcal{E}$. Then, these channels admit decompositions of the form 
    \begin{alignat}{2}
        &\mathcal{G}_1 = \sum_{i=1}^{n_1} \mathcal{K}_{1i} \circ \mathcal{I}_{1i} , \quad\quad\quad\quad
        &&\mathcal{G}_2 = \sum_{j=1}^{n_2} \mathcal{K}_{2j} \circ \mathcal{I}_{2j}
    \end{alignat}
    where $\mathcal{K}_{lj}$ are quantum channels for all $l,j$ and $\mathcal{I}_{lj}$ are completely positive maps such that $\sum_{i=1}^{n_{1}} \mathcal{I}_{1i} = \mathcal{E}_{1}$ as well as $\sum_{j=1}^{n_{2}} \mathcal{I}_{2j} = \mathcal{E}_{2}$ and $n_{1}, n_{2} \in \mathbb{N}$. Let now $\mathcal{G}$ be a convex combination, i.e., $\mathcal{G}= \lambda \mathcal{G}_1 + (1-\lambda) \mathcal{G}_2$ for some $\lambda \in [0,1]$. Then we get
    \begin{equation}
        \mathcal{G} = \sum_{i=1}^{n_1} \mathcal{K}_{1i} \circ \lambda\mathcal{I}_{1i} + \sum_{j=1}^{n_2} \mathcal{K}_{2j} \circ (1-\lambda)\mathcal{I}_{2j}
    \end{equation}
    which shows that $\mathcal{G}$ is in the classical future of $\mathcal{E}$, since $\{\lambda\mathcal{I}_{1i}\}_{i=1}^{n_{1}} \cup \{(1-\lambda)\mathcal{I}_{2j}\}_{j=1}^{n_{2}}$ is a subchannel decomposition of $\mathcal{E}$.
 \end{proof}

The finite memory bound we found may be formulated as follows
\begin{theorem}  \label{thm:finite_class_memory}
    Let $\mathcal{E}, \mathcal{G}$ be channels over a quantum system of dimension $d$. If $\mathcal{G} \in \mathcal{F}[\mathcal{E}]$ then $\mathcal{G} \in \mathcal{F}_{n}[\mathcal{E}]$ for some $n \le d^4(d^4 + 1)$. In other words, the chain~\eqref{eq:chain} saturates as equality for $n \le d^4(d^4 + 1)$. 
 \end{theorem}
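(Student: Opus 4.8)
\emph{Proof proposal.} The plan is to pass to the Choi picture of \eqref{eq:class_mem_choi} and then collapse any (possibly very large) classical-memory decomposition witnessing $\mathcal{G}\in\mathcal{F}[\mathcal{E}]$ to a short one by a \emph{single} application of Carath\'eodory's theorem, with the ambient dimension supplying the bound. First I would restate the hypothesis operationally: $\mathcal{G}\in\mathcal{F}[\mathcal{E}]$ means there exist $I_i\ge 0$ (Choi operators of CP subchannels $\mathcal{I}_i$) with $\sum_i I_i = E$, and Choi operators $K_i$ of channels $\mathcal{K}_i$ (so $K_i\ge 0$, $\tr_B K_i=\id_{D'}$), such that the separable operator $X=\sum_i I_i\otimes K_i$ obeys $\bra{\phi^+}X\ket{\phi^+}=G$, where $E,G$ denote the Choi operators of $\mathcal{E},\mathcal{G}$ and the partial inner product is over $DD'$.

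Next I would normalize the branches and bundle the two constraints that must survive. For each nonzero branch set $\lambda_i=\tr[I_i]/d$ and $\hat I_i=I_i/\lambda_i$, so that $\lambda_i\ge 0$, $\sum_i\lambda_i=1$ (using $\tr E=d$), and $\hat I_i\ge 0$ has trace $d$. To each branch associate the vector
\[
  v_i=\bigl(\hat I_i,\ \bra{\phi^+}\hat I_i\otimes K_i\ket{\phi^+}\bigr)\in\mathrm{Herm}(AD)\oplus\mathrm{Herm}(AB)\cong\mathbb{R}^{2d^4}.
\]
By the per-term version of the link product in \eqref{eq:class_mem_choi}, the second slot is exactly the Choi operator of $\mathcal{K}_i\hat{\mathcal{I}}_i$, and by construction $\sum_i\lambda_i v_i=(E,G)$. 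Thus the target datum $(E,G)$ is a convex combination of points $v_i$ drawn from the set $V$ of all such pairs, sitting inside a real space of dimension $2d^4$.

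Now Carath\'eodory's theorem in $\mathbb{R}^{2d^4}$ does the work: $(E,G)\in\mathrm{conv}(V)$ is a convex combination of at most $2d^4+1$ of the $v_i$. Relabelling the surviving branches $j=1,\dots,m$ with weights $\mu_j$, I would set $\mathcal{I}'_j=\mu_j\hat{\mathcal{I}}_j$ (CP, since $\mu_j\ge 0$) and $\mathcal{K}'_j=\mathcal{K}_j$. The first component of the Carath\'eodory identity gives $\sum_j\mathcal{I}'_j=\mathcal{E}$, a genuine subchannel decomposition, while the second gives $\sum_j\mathcal{K}'_j\mathcal{I}'_j=\mathcal{G}$ as in \eqref{eq:class_memory_dynamics}. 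Hence $\mathcal{G}\in\mathcal{F}_m[\mathcal{E}]$ with $m\le 2d^4+1$, comfortably within the stated $d^4(d^4+1)$ (and trivially so for $d=1$), so the chain \eqref{eq:chain} saturates.

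The one genuine subtlety — which I expect to be the crux — is that \emph{two} linear constraints, the subchannel sum $\sum_i\mathcal{I}_i=\mathcal{E}$ and the reconstruction $\sum_i\mathcal{K}_i\mathcal{I}_i=\mathcal{G}$, must be preserved simultaneously; bundling them into the single vector datum $(E,G)$ is precisely what lets one Carath\'eodory step control both and what fixes the ambient dimension. If one instead wants the stated constant literally, the estimate can be reorganized into two factors: since $\mathcal{F}[\mathcal{E}]$ is compact and convex (shown above) and lives in the $d^4$-dimensional Choi space, $\mathcal{G}$ is a convex combination of at most $d^4+1$ extreme points of $\mathcal{F}[\mathcal{E}]$, and the finite classical information content of instruments~\cite{DAriano2011} bounds the number of subchannels realizing each extreme point by $d^4$; their product is $d^4(d^4+1)$. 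In that route the delicate input is the finite-outcome bound per extreme point, whereas the convex-geometric collapse above is routine once the compact generating set $V$ has been identified.
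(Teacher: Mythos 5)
Your proof is correct, and it takes a genuinely different --- and in fact sharper --- route than the paper's. The paper argues in two stages: it first shows that each memory-bounded future $\mathcal{F}_{m}[\mathcal{E}]$ is compact (Proposition~\ref{prop:Fn_compact}), invokes Krein--Milman and Carath\'eodory in the $d^4$-dimensional Choi space to write $\mathcal{G}$ as a mixture of at most $d^4+1$ extreme points of $\textnormal{conv}(\mathcal{F}_{m}[\mathcal{E}])$, and then shows (Propositions~\ref{prop:extreme_points_cf} and~\ref{prop:extremal_instrument}, via linear independence of extremal instrument elements) that each extreme point is realized by an extremal subchannel decomposition with at most $d^4$ nonzero elements, multiplying out to $d^4(d^4+1)$. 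You instead bundle the two linear constraints into the single datum $(E,G)\in\mathrm{Herm}(\rm AD)\oplus\mathrm{Herm}(\rm AB)\cong\mathbb{R}^{2d^4}$, normalize the nonzero branches so that $(E,G)=\sum_i\lambda_i v_i$ is a genuine convex combination (using $\tr E = d$, with zero branches harmlessly dropped), and apply Carath\'eodory once to prune the given decomposition to $m\le 2d^4+1$ branches; the surviving rescaled maps $\mu_j\hat{\mathcal{I}}_{j}$ manifestly form a subchannel decomposition of $\mathcal{E}$ reproducing $\mathcal{G}$, the per-branch identity being exactly the link product of Eq.~\eqref{eq:linked}. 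All steps check out, and notably you need no compactness, Krein--Milman, or extremality input at all, since Carath\'eodory applies to any finite convex combination; your bound $2d^4+1$ is strictly stronger than $d^4(d^4+1)$ for $d\geq 2$ (and $d=1$ is trivial), so the theorem and the saturation of the chain~\eqref{eq:chain} follow a fortiori. What the paper's longer route buys is structural side information used elsewhere --- the compactness of $\mathcal{F}_{n}[\mathcal{E}]$ (needed for the closedness remark on $\mathcal{F}_{\infty}[\mathcal{E}]$) and the characterization of extremal instruments --- while your closing sketch of the two-factor alternative is essentially the paper's actual argument, with the one caveat that the bound of $d^4$ subchannels per extremal decomposition is proved there as Proposition~\ref{prop:extremal_instrument} (inspired by Ref.~\cite{DAriano2011}) rather than available as a ready-made citation.
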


To arrive at the proof, we proceed with elementary properties of the memory bounded classical futures $\mathcal{F}_n(\mathcal{E})$.
\begin{Proposition}
    \label{prop:Fn_compact}
     For any natural number $n$, the memory bounded classical future $\mathcal{F}_{n}[\mathcal{E}]$
     of any channel $\mathcal{E}$ over a  quantum system of finite dimension $d$ is a compact set. 
 \end{Proposition}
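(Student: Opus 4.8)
The plan is to realize $\mathcal{F}_n[\mathcal{E}]$ as the continuous image of a compact set, after which compactness is immediate, since continuous images of compact sets are compact. Concretely, I would introduce the map
\begin{equation}
    \Phi: \big(\{\mathcal{I}_i\}_{i=1}^n, \{\mathcal{K}_i\}_{i=1}^n\big) \longmapsto \sum_{i=1}^n \mathcal{K}_i \circ \mathcal{I}_i,
\end{equation}
defined on the domain $I_n[\mathcal{E}] \times C_d^{\,n}$, where $I_n[\mathcal{E}]$ is the set of size-$n$ subchannel decompositions of $\mathcal{E}$ and $C_d^{\,n}$ is the $n$-fold Cartesian product of the channel set. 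By the definition~\eqref{eq:finite-classical-future}, the image of $\Phi$ is exactly $\mathcal{F}_n[\mathcal{E}]$.

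The first step is to show that the domain is compact. Working throughout in the Choi--Jamiołkowski representation turns every object into an operator on a fixed finite-dimensional space, so it suffices to check closedness and boundedness. The channel set $C_d$ corresponds to the positive operators $G \geq 0$ obeying the affine constraint $\tr_B G = \openone$; this is the intersection of the closed positive-semidefinite cone with a closed affine subspace, and it is bounded because $\tr_B G = \openone$ forces $\tr G = d$. Hence $C_d$, and therefore $C_d^{\,n}$, is compact. The set $I_n[\mathcal{E}]$ consists of tuples $(I_1,\dots,I_n)$ of positive operators with $\sum_i I_i = E$, where $E$ is the Choi operator of $\mathcal{E}$; this is again closed (a finite intersection of cone and affine conditions), and each component obeys $0 \leq I_i \leq E$ by positivity of the remaining terms, so that $\tr I_i \leq \tr E = d$. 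Thus $I_n[\mathcal{E}]$ is compact, and so is the full domain.

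The second step is continuity of $\Phi$. The composition $(\mathcal{K},\mathcal{I}) \mapsto \mathcal{K}\circ\mathcal{I}$ is a bilinear operation on the finite-dimensional space of linear maps on $d\times d$ matrices (equivalently, on Choi operators via the link product), and every bilinear map between finite-dimensional normed spaces is continuous; summing $n$ such terms preserves continuity. One also checks that $\Phi$ maps into $C_d$ as required by~\eqref{eq:finite-classical-future}: the output is manifestly completely positive, and it is trace-preserving because $\tr[\sum_i \mathcal{K}_i \circ \mathcal{I}_i(\rho)] = \sum_i \tr[\mathcal{I}_i(\rho)] = \tr[\mathcal{E}(\rho)] = \tr\rho$, using that each $\mathcal{K}_i$ is trace-preserving and that $\{\mathcal{I}_i\}$ sums to $\mathcal{E}$. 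Combining the two steps, $\mathcal{F}_n[\mathcal{E}] = \Phi(I_n[\mathcal{E}]\times C_d^{\,n})$ is the continuous image of a compact set, hence compact.

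The only point requiring genuine care---rather than a deep obstacle---is the boundedness of the subchannel components, which must be argued from the domination $0 \leq I_i \leq E$ forced by positivity of the other terms in the decomposition; once this is in place, the remaining closedness-and-boundedness verifications are routine in finite dimension. A secondary subtlety is that allowing zero maps among the $\mathcal{I}_i$ (needed for the inclusion chain~\eqref{eq:chain}) is harmless, since the zero map is completely positive and the bilinear composition map remains continuous at such boundary points.
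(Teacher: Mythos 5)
Your proposal is correct and follows essentially the same route as the paper's proof: realizing $\mathcal{F}_{n}[\mathcal{E}]$ as the continuous image of the compact domain $C_{d}^{\,n} \times I_{n}[\mathcal{E}]$ under the parametrization map. You merely spell out in more detail the compactness of $C_d$ and $I_n[\mathcal{E}]$ (via the Choi representation and the domination $0 \leq I_i \leq E$) and the continuity of the bilinear composition, steps the paper leaves implicit.
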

 \begin{proof}
Intuitively, this is the case because $\mathcal{F}_n[\mathcal{E}]$ is continuously parametrized by 
parameters with compact domains.
Formally, let $C_d$ denote the set of channels, and $I_n[\mathcal{E}]$ denote the set of subchannel decompositions of $\mathcal{E}$ of size $n$. Then $\mathcal{F}_{n}[\mathcal{E}]$ is the image of the continuous map $F_{n}: \underbrace{C_{d} \times \dots \times C_{d}}_{n \textnormal{ times}} \times I_{n}[\mathcal{E}] \rightarrow C_{d}$ that is defined as 
\begin{equation}
       F_{n}(\mathcal{K}_{1}, \dots, \mathcal{K}_{n}, \{\mathcal{I}_{i}\}_{i=1}^{n}) := \sum_{i=1}^{n} \mathcal{K}_{i} \circ \mathcal{I}_i.  
    \end{equation}
As the domain $C_{d} \times \dots \times C_{d} \times I_{n}[\mathcal{E}]$ is compact, so is its continuous image $\mathcal{F}_{n}[\mathcal{E}]$.
 \end{proof}

The concept of extremality plays a key role in the proof of Theorem \ref{thm:finite_class_memory} below. In general, a point $k \in K$ in a convex set $K$ is said to be extremal if any decomposition $k = pk_1 + (1-p)k_2$ with $p \in (0,1)$ and $k_1, k_2 \in K$ implies that $k_1 = k_2 = k$. We will denote the set of extremal points of a convex set $K$ as $\textnormal{ext}(K)$. 

In particular, the set of subchannel decompositions $I_{n}[\mathcal{E}]$ is a convex set where the convex combination
of two decompositions $\{\mathcal{I}_{1i}\}_{i=1}^{n}$ and $\{\mathcal{I}_{2i}\}_{i=1}^{n}$ with mixing parameter $p \in [0,1]$ is given by the subchannel decomposition $\{p\mathcal{I}_{1i} + (1-p)\mathcal{I}_{2i}\}_{i=1}^{n}$. 
We are to characterize the extremal subchannel decompositions in the following Proposition, the proof of which is mainly inspired by the techniques discussed in Ref.~\cite{DAriano2011}.    
 \begin{Proposition}
     \label{prop:extremal_instrument}
     Let $\{\mathcal{I}_{i}\}_{i=1}^{n} \in \textnormal{ext}(I_{n}[\mathcal{E}])$, $\mathcal{I}_{i} \neq 0$, be an extremal subchannel decomposition of a channel $\mathcal{E} \in C_d$, then $\{\mathcal{I}_{i}\}_{i=1}^{n}$ are linearly independent. In particular, as the dimension of the space of subchannels is $d^4$, at most $d^4$ subchannels in an extremal decomposition $\{\mathcal{I}_{i}\}_{i=1}^{n}$ can be non-zero. 
 \end{Proposition}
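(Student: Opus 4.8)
The plan is to prove the contrapositive: if the nonzero instrument elements $\{\mathcal{I}_i\}_{i=1}^n$ are linearly dependent, then the decomposition cannot be an extreme point of $I_n[\mathcal{E}]$. Linear independence then follows immediately, and the stated bound on the number of nonzero elements is a dimension count. First I would record that each $\mathcal{I}_i$ is a Hermiticity-preserving completely positive map, so its Choi operator is a Hermitian $d^2 \times d^2$ matrix; the relevant ambient vector space is therefore \emph{real}. Consequently, a nontrivial linear dependence supplies real coefficients $\lambda_1, \dots, \lambda_n$, not all zero, with $\sum_{i=1}^n \lambda_i \mathcal{I}_i = 0$.

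The central step is to build two distinct subchannel decompositions of $\mathcal{E}$ whose equal-weight mixture reproduces $\{\mathcal{I}_i\}$. I would set, for a sufficiently small $\epsilon > 0$, the perturbed maps $\mathcal{I}_i^{\pm} := (1 \pm \epsilon \lambda_i)\mathcal{I}_i$. Choosing $\epsilon < (\max_i |\lambda_i|)^{-1}$ makes every prefactor $1 \pm \epsilon \lambda_i$ strictly positive, so each $\mathcal{I}_i^{\pm}$ is again completely positive, being a positive scalar multiple of the CP map $\mathcal{I}_i$. The normalization is preserved precisely because of the dependence relation,
\[
\sum_{i=1}^n \mathcal{I}_i^{\pm} = \sum_{i=1}^n \mathcal{I}_i \pm \epsilon \sum_{i=1}^n \lambda_i \mathcal{I}_i = \mathcal{E}.
\]
Hence both $\{\mathcal{I}_i^{+}\}_{i=1}^n$ and $\{\mathcal{I}_i^{-}\}_{i=1}^n$ lie in $I_n[\mathcal{E}]$, their midpoint satisfies $\tfrac{1}{2}(\mathcal{I}_i^{+} + \mathcal{I}_i^{-}) = \mathcal{I}_i$ for every $i$, and the two decompositions differ: for any index $j$ with $\lambda_j \neq 0$ one has $\mathcal{I}_j^{+} - \mathcal{I}_j^{-} = 2\epsilon \lambda_j \mathcal{I}_j \neq 0$, using $\mathcal{I}_j \neq 0$. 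This exhibits $\{\mathcal{I}_i\}$ as a nontrivial convex combination of two distinct elements of $I_n[\mathcal{E}]$, contradicting extremality and establishing the linear independence.

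For the concluding \textbf{in particular} statement, I would invoke the dimension count directly. The Choi operators of the $\mathcal{I}_i$ are Hermitian operators on a $d^2$-dimensional space and therefore live in the real vector space of Hermitian $d^2 \times d^2$ matrices, whose dimension is $d^4$. A linearly independent family in a space of dimension $d^4$ has at most $d^4$ members, so at most $d^4$ of the $\mathcal{I}_i$ can be nonzero.

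The one delicate point, which is where the argument could go wrong if handled naively, is keeping the perturbation inside the CP cone while simultaneously respecting the sum constraint $\sum_i \mathcal{I}_i = \mathcal{E}$. The clean resolution, following the extremality technique of Ref.~\cite{DAriano2011}, is to perturb each map \emph{along its own direction} (scaling by $1 \pm \epsilon \lambda_i$): complete positivity is then automatic for small $\epsilon$, and all the burden is shifted onto the linear-dependence relation, which is exactly what guarantees the normalization survives. I would therefore expect the verification of these two consistency conditions — not the geometric idea itself — to be the only real obstacle.
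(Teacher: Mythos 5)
Your proposal is correct and follows essentially the same argument as the paper: both perturb each nonzero subchannel along its own direction via $(1 \pm \epsilon \alpha_i)\mathcal{I}_i$ using a real dependence relation, verify complete positivity and the sum constraint, and contradict extremality, with the $d^4$ bound obtained by the same dimension count on Hermitian Choi operators. Your explicit remark that Hermiticity of the Choi operators makes the ambient space real (so the dependence coefficients may be taken real) is a minor point the paper leaves implicit, but the route is identical.
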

 \begin{proof}
     The statement can be proven by contradiction. We assume that the maps $\mathcal{I}_{i}$ are non-zero and not linearly independent. Therefore, there exists a non-zero vector $\alpha \in \mathbb{R}^{n}$ such that $\sum_{i=1}^{n} \alpha_{i} \mathcal{I}_{i} = 0$. 
     We can now define the maps 
    \begin{align}
       \mathcal{I}_{1i,\epsilon} = (1+\epsilon \alpha_{i}) \mathcal{I}_i \quad\quad\quad\quad
        \mathcal{I}_{2i, \epsilon} = (1-\epsilon \alpha_{i}) \mathcal{I}_i 
    \end{align}
    for some $\epsilon > 0$ such that $\mathcal{I}_{1i, \epsilon}$ and $\mathcal{I}_{2i,\epsilon}$ are completely positive for all $i=1,\dots,n$. Such an $\epsilon$ exists and can be chosen as the value
    \begin{equation}
        \epsilon = \max\{r: 1-|r\alpha_{i}| \geq 0 \textnormal{ for all } i =1, 2, \ldots, n \}.
    \end{equation}
    One can see that $\{\mathcal{I}_{1i,\epsilon}\}_{i=1}^{n}$ and $\{\mathcal{I}_{2i,\epsilon}\}_{i=1}^{n}$ are distinct subchannel decompositions that sum up to $\mathcal{E}$ and that the subchannels in $\{\mathcal{I}_{i}\}_{i=1}^{n}$ can be decomposed as $\mathcal{I}_{i} = \frac{1}{2}(\mathcal{I}_{1i,\epsilon} + \mathcal{I}_{2i,\epsilon})$ for all $i=1,\dots,n$. This means that $\{\mathcal{I}_{i}\}_{i=1}^{n}$ is not extremal implying a contradiction.
 \end{proof}

Let us now notice that, since $\mathcal{F}_n[\mathcal{E}]$ is compact as shown in Proposition~\ref{prop:Fn_compact}, by Carathéodory's Theorem \cite{Carathodory1911}, it holds that we can decompose every element of the convex hull $\textnormal{conv}(\mathcal{F}_{n}[\mathcal{E}])) \subseteq \mathcal{F}_{n}[\mathcal{E}]$ by finitely many extreme points in $\textnormal{ext}(\textnormal{conv}(\mathcal{F}_{n}[\mathcal{E}]))$.
The next Proposition~\ref{prop:extreme_points_cf} allows us to 
characterize the extreme points of the convex hull of $\mathcal{F}_n[\mathcal{E}]$. 
\begin{Proposition}
     \label{prop:extreme_points_cf}
     Let $\mathcal{G} \in \textnormal{ext}(\textnormal{conv}(\mathcal{F}_{n}[\mathcal{E}])) \subseteq \mathcal{F}_{n}[\mathcal{E}]$ be an extremal element of the memory bounded classical future of $\mathcal{E}$. Then there is an extremal subchannel decomposition $\{\mathcal{I}_{\rm ext,i}\}_{i=1}^{n} \in \textnormal{ext}(I_{n}[\mathcal{E}])$ such that
     \begin{equation}
         \mathcal{G} = \sum_{i=1}^{n} \mathcal{K}_{i} \circ \mathcal{I}_{\rm ext, i}. 
     \end{equation}
 \end{Proposition}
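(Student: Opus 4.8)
The plan is to refine only the subchannel decomposition appearing in the given representation of $\mathcal{G}$, while keeping the recovery channels $\mathcal{K}_i$ fixed. Since $\mathcal{G} \in \textnormal{ext}(\textnormal{conv}(\mathcal{F}_{n}[\mathcal{E}])) \subseteq \mathcal{F}_{n}[\mathcal{E}]$, by definition of the memory-bounded classical future there exist channels $\{\mathcal{K}_{i}\}_{i=1}^{n}$ and a subchannel decomposition $\{\mathcal{I}_{i}\}_{i=1}^{n} \in I_{n}[\mathcal{E}]$ with $\mathcal{G} = \sum_{i=1}^{n} \mathcal{K}_{i} \circ \mathcal{I}_{i}$. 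The point is that the starting $\{\mathcal{I}_{i}\}$ need not be extremal, and I want to trade it for one that is.

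First I would invoke the convexity of $I_{n}[\mathcal{E}]$ (noted just before Proposition~\ref{prop:extreme_points_cf}) together with its compactness. Compactness follows because a subchannel decomposition corresponds via the Choi representation to a tuple of positive operators $\{I_{i}\}$ summing to the fixed Choi operator $E$ of $\mathcal{E}$, and such tuples form a closed, bounded subset of a finite-dimensional space. Then, by the Krein--Milman and Carathéodory theorems~\cite{Krein1940, Carathodory1911}, the given decomposition admits a finite convex expansion into extremal ones, $\{\mathcal{I}_{i}\}_{i=1}^{n} = \sum_{m} p_{m} \{\mathcal{I}^{(m)}_{i}\}_{i=1}^{n}$ with $p_{m} > 0$, $\sum_{m} p_{m} = 1$, and each $\{\mathcal{I}^{(m)}_{i}\}_{i=1}^{n} \in \textnormal{ext}(I_{n}[\mathcal{E}])$; componentwise, $\mathcal{I}_{i} = \sum_{m} p_{m} \mathcal{I}^{(m)}_{i}$ for every $i$.

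Substituting this into the representation of $\mathcal{G}$ and using linearity of composition in the inner argument, I would obtain $\mathcal{G} = \sum_{m} p_{m} \mathcal{G}_{m}$, where $\mathcal{G}_{m} := \sum_{i=1}^{n} \mathcal{K}_{i} \circ \mathcal{I}^{(m)}_{i}$. Each $\mathcal{G}_{m}$ lies in $\mathcal{F}_{n}[\mathcal{E}]$: it is completely positive as a sum of compositions of completely positive maps, and it is trace-preserving because $\sum_{i} \mathcal{I}^{(m)}_{i} = \mathcal{E}$ while the $\mathcal{K}_{i}$ are trace-preserving. Thus $\mathcal{G}$ is exhibited as a convex combination of elements of $\mathcal{F}_{n}[\mathcal{E}] \subseteq \textnormal{conv}(\mathcal{F}_{n}[\mathcal{E}])$, each built from an \emph{extremal} subchannel decomposition.

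Finally, the extremality of $\mathcal{G}$ closes the argument: an extreme point of a convex set that is written as a nontrivial convex combination of points of that set must coincide with each of them, so $\mathcal{G}_{m} = \mathcal{G}$ for every $m$ with $p_{m} > 0$. Choosing any such $m$ and setting $\mathcal{I}_{\rm ext,i} := \mathcal{I}^{(m)}_{i}$ yields $\mathcal{G} = \sum_{i=1}^{n} \mathcal{K}_{i} \circ \mathcal{I}_{\rm ext,i}$ with $\{\mathcal{I}_{\rm ext,i}\}_{i=1}^{n} \in \textnormal{ext}(I_{n}[\mathcal{E}])$, as claimed. The step requiring the most care is the reduction of the starting decomposition to extremal ones: one must secure both compactness and convexity of $I_{n}[\mathcal{E}]$ so that the Carathéodory-type expansion is legitimate, and this finite-dimensional input is what makes the entire refinement possible.
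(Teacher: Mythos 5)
Your proposal is correct and follows essentially the same route as the paper's proof: expand the given subchannel decomposition as a convex combination of extremal decompositions in $I_{n}[\mathcal{E}]$, push the combination through the fixed channels $\mathcal{K}_{i}$ to write $\mathcal{G}$ as a convex mixture of elements of $\mathcal{F}_{n}[\mathcal{E}]$, and invoke extremality of $\mathcal{G}$ to conclude. The only difference is cosmetic: you explicitly justify the extremal expansion via compactness of $I_{n}[\mathcal{E}]$ together with the Krein--Milman and Carath\'eodory theorems, a step the paper's proof asserts without elaboration.
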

\begin{proof}
    As $\mathcal{G} \in 
 \mathcal{F}_{n}[\mathcal{E}]$, it can be decomposed as $\mathcal{G} = \sum_{i=1}^{n} \mathcal{K}_{i} \circ \mathcal{I}_{i}$
 for some subchannel decomposition $\{\mathcal{I}_{i}\}_{i=1}^{n}$ of $\mathcal{E}$. We now decompose $\mathcal{I}_{i}$ into a convex combination of extremal subchannel decompositions as
 \begin{equation}
\mathcal{I}_{i} =  \sum_{t} q_{t} \,{\mathcal{I}_{\rm{ext},i}(t)}
\end{equation}
for certain non-zero probability weights $q_t$ and associated extremal subchannel decompositions $\{\mathcal{I}_{\rm{ext},i}(t)\}_{i=1}^{n}  \subset \textnormal{ext}(I_{n}[\mathcal{E}])$ for each $t$.
It then follows that
\begin{align}
\mathcal{G} 
&= \sum_{t} q_{t} \underbrace{\left[\sum_{i=1}^{n} \mathcal{K}_{i} \circ \mathcal{I}_{\rm{ext}, i}(t)\right]}_{ \in \mathcal{F}_{n}[\mathcal{E}]}.
\end{align}
By extremality of $\mathcal{G}$, it must be true that $\mathcal{G} = \sum_{i=1}^{n} \mathcal{K}_{i} \circ \mathcal{I}_{\rm{ext}, i}(t)$ whenever $q_t > 0$, which completes the proof. 
\end{proof}
We are now ready prove the general bound for the size of the classical memory.

 \begin{proof} [Proof of Theorem~\ref{thm:finite_class_memory}]
    First, as $\mathcal{G} \in \mathcal{F}[\mathcal{E}]$, there is an $m \in \mathbb{N}$ such that $\mathcal{G} \in \mathcal{F}_{m}[\mathcal{E}]$. 
    By Proposition \ref{prop:Fn_compact}, we know that $\textnormal{conv}(\mathcal{F}_{m}[\mathcal{E}])$ is a compact and convex set and by the Krein-Milman Theorem \cite{Krein1940}, $\textnormal{conv}(\mathcal{F}_{m}[\mathcal{E}])$ is the convex hull of its extreme points. As $\textnormal{conv}(\mathcal{F}_{m}[\mathcal{E}])$ is embedded in a $d^4$-dimensional real vector space, Carathéodory's Theorem \cite{Carathodory1911} implies that $\mathcal{G}$ can be decomposed as a convex combination of $d^4 + 1$ extreme points $\mathcal{G}_{r} \in \textnormal{ext}(\textnormal{conv}(\mathcal{F}_{m}[\mathcal{E}])) \subseteq \mathcal{F}_{m}[\mathcal{E}]$. This means that there exists a probability distribution $\{p_r\}_{r=1}^{d^4 + 1}$ such that $\mathcal{G}$ can be decomposed as 
    \begin{align}
        \mathcal{G} &= \sum_{r = 1}^{d^4 + 1} p_{r} \, \mathcal{G}_{r} \quad \textnormal{ with } \quad \mathcal{G}_{r} = \sum_{i=1}^{m} \mathcal{K}_{ri} \circ \mathcal{I}_{ri}.
    \end{align}
    By Proposition \ref{prop:extreme_points_cf}, the subchannel decomposition $\{I_{ri}\}_{i=1}^{m} \in I_{m}[\mathcal{E}]$ of $\mathcal{E}$ corresponding to $\mathcal{G}_{r}$ can be chosen as extremal for all $r$. Now, Proposition \ref{prop:extremal_instrument} says that, by extremality, at most $d^4$ elements of $\{\mathcal{I}_{ri}\}_{i=1}^{m}$ can be non-zero so that, by discarding potential zero-maps $\mathcal{I}_{ri}$, we get 
    \begin{equation}
        \label{eq:d_bounded_class_future}
        \mathcal{G} = \sum_{r = 1}^{d^4 + 1} \sum_{i=1}^{d^4} \mathcal{K}_{ri} \circ (p_{r} \mathcal{I}_{ri}). 
    \end{equation}
    Readily, Eq.~\eqref{eq:d_bounded_class_future} confirms that $\mathcal{G} \in \mathcal{F}_{n}[\mathcal{E}]$ for some $n  \leq d^4(d^4 + 1)$ as $\{p_{r} I_{r}^{i}\}_{r,i}$ is a subchannel decomposition of $\mathcal{E}$.
 \end{proof}

\noindent\textit{Remark:}
One might wonder whether an extension of definition \eqref{eq:finite-classical-future} with $n \to \infty$, i.e., $\mathcal{F}_{\infty}[\mathcal{E}]$, can be outside $\mathcal{F}[\mathcal{E}]$. 
It can be shown, however, that every channel $\mathcal{G}$ in the classical future $\mathcal{F}_{\infty}[\mathcal{E}]$ with potential infinite classical memory arises as a limit point of a  converging sequence $\{\mathcal{G}_{n}\}_{n \in \mathbb{N}} \subseteq \mathcal{F}[\mathcal{E}]$. As $\mathcal{F}[\mathcal{E}]$ is closed, $\mathcal{G}$ also belongs to $\mathcal{F}[\mathcal{E}]$, and thus, $\mathcal{F}_{n}[\mathcal{E}]$ with some $n \leq d^4 (d^4 + 1)$.

\section{SDP relaxations of the classical future}
\label{app:memory_witness}

Am important insight for the characterization of the classical future $\mathcal{F}[\mathcal{E}]$ of a channel $\mathcal{E}$ can be derived from the Choi-Jamiołkowski representation. Recall that for each map $\mathcal{E}: \mathcal{L}(\mathbb{C}^d) \rightarrow \mathcal{L}(\mathbb{C}^d)$, one can associate the Choi operator $\mathcal{J}(\mathcal{E}) \in \mathcal{L}(\mathbb{C}^d \otimes \mathbb{C}^d)$ via
\begin{equation}
    \mathcal{J}(\mathcal{E}) = E^{\rm AB} := \sum_{i,j=0}^{d-1} \ketbra{i}{j} \otimes \mathcal{E}[\ketbra{i}{j}].
\end{equation}
We usually denote $\mathcal{J}(\mathcal{E})$ as $E^{\rm AB}$ to highlight the underlying bipartite structure of the operator. We adopt the convention that maps like channels and subchannels are denoted by caligraphic letters $\mathcal{E}$ and $\mathcal{I}$ while their associated Choi operators are written as capitalized latin letters $E^{\rm AB}$ and $I^{\rm AB}$. The components $E^{\rm AB}_{ij,kl}$ of the Choi operator in computational basis are connected to the map $\mathcal{E}$ via the equation
\begin{equation}
    \label{eq:choi_matrix_elements}
    E^{\rm AB}_{ij,kl} = \bra{k} \mathcal{E}[\ketbra{i}{j}] \ket{l}.
\end{equation}
A well known property of the Choi-Jamiołkowski isomorphism is that a map $\mathcal{E}: \mathcal{L}(\mathbb{C}^d) \rightarrow \mathcal{L}(\mathbb{C}^d)$ is a channel if and only if the associated Choi operator $E^{\rm AB}$ is positive semidefinite and $\tr_{\rm B}(E^{\rm AB}) = \mathds{1}^{\rm A}$.
For a proof of this property, see for instance \cite[Theorem 8]{Heinosaari2008}. 

Technically, the characterization of the classical future is based on the form of Choi operators of compositions of maps that is given by the link product as derived in Ref.~\cite{ziman08,Chiribella2009}.
    Let $\mathcal{C}: \mathcal{L}(\mathbb{C}^d) \rightarrow \mathcal{L}(\mathbb{C}^{d'})$ and $\mathcal{E}: \mathcal{L}(\mathbb{C}^{d'}) \rightarrow \mathcal{L}(\mathbb{C}^{d''})$ be maps with Choi operators $C^{\rm AD} = \mathcal{J}[\mathcal{C}]$ and $E^{\rm D'B} = \mathcal{J}[\mathcal{D}]$, respectively. Then, the Choi operator $\mathcal{J}[\mathcal{E} \circ \mathcal{C}]$ of the composition $\mathcal{E}\circ \mathcal{C}: \mathcal{L}(\mathbb{C}^d) \rightarrow \mathcal{L}(\mathbb{C}^{d''})$ is computed as 
    \begin{equation}
        \mathcal{J}[\mathcal{E} \circ \mathcal{C}]^{\rm AB} = {\ }^{\rm DD'}\bra{\mathbf{\Phi}^+} C^{\rm AD} \otimes E^{\rm D'B} \ket{\mathbf{\Phi}^+}^{\rm DD'}
        \label{eq:linked}
    \end{equation}
     where $\ket{\mathbf{\Phi}^+}^{\rm DD'} = \sum_{i=0}^{d'-1} \ket{i} \otimes \ket{i}$ denotes the unnormalized maximally entangled state.
This directly allows one to express the definition of the classical future in Choi's representation as follow.
\begin{Proposition}
\label{prop:class_future_sep}
    Let $\mathcal{E}, \mathcal{G} \in C_d$ be channels with associated Choi operators $E^{\rm AD}$ and $G^{\rm AB}$ such that $\mathcal{G} \in \mathcal{F}[\mathcal{E}]$. Then there exists an operator $X^{\rm ADD'B} \in \mathcal{L}((\mathbb{C}^{d})^{\otimes 4})$ of the form
    \begin{equation}
        X^{\rm ADD'B} = \sum_{i=1}^{n} I_{i}^{\rm AD} \otimes K_{i}^{\rm D'B}
    \end{equation}
    such that $G^{\rm AB} = {\ }^{\rm DD'}\bra{\mathbf{\Phi}^+} X^{\rm ADD'B} \ket{\mathbf{\Phi}^+}^{\rm DD'}$, where $G^{\rm AB}$ is the Choi operator of $\mathcal{G}$ and where $\ket{\mathbf{\Phi}^+} = \sum_{i=0}^{d-1} \ket{i} \otimes \ket{i}$. The matrices $I_{i}^{\rm AD}$ are bipartite positive semidefinite operators satisfying $\sum_{i=1}^{n} I_{i}^{\rm AD} = E^{\rm AD}$ and $K_{i}^{\rm D'B}$ are bipartite positive semidefinite operators such that $\tr_{\rm B}(K_{i}^{\rm D'B}) = \mathds{1}_{d}$ for all $i \in \{1,\dots,n\}$.
    Especially, the operator $X^{\rm ADD'B}$ is positive semidefinite, separable in the bipartition $(\rm{AD}|\rm{D'B})$ and fulfills \begin{equation}
        \tr_{\rm B}(X^{\rm ADD'B}) = E^{\rm AD} \otimes \mathds{1}_{d}.
    \end{equation}

\end{Proposition}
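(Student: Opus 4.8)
The plan is to translate the map-level identity guaranteed by $\mathcal{G} \in \mathcal{F}[\mathcal{E}]$ directly into the Choi picture, exploiting that both channel composition (via the link product \eqref{eq:linked}) and summation act linearly on Choi operators. By definition of the classical future there exist a subchannel decomposition $\{\mathcal{I}_i\}_{i=1}^n$ of $\mathcal{E}$ and channels $\{\mathcal{K}_i\}_{i=1}^n$ with $\mathcal{G} = \sum_{i=1}^n \mathcal{K}_i \circ \mathcal{I}_i$. First I would set $I_i^{\rm AD} := \mathcal{J}[\mathcal{I}_i]$ and $K_i^{\rm D'B} := \mathcal{J}[\mathcal{K}_i]$. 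Complete positivity of each $\mathcal{I}_i$ gives $I_i^{\rm AD} \geq 0$, while the channel property of each $\mathcal{K}_i$ gives $K_i^{\rm D'B} \geq 0$ with $\tr_{\rm B}(K_i^{\rm D'B}) = \mathds{1}_d$; linearity of $\mathcal{J}$ together with $\sum_i \mathcal{I}_i = \mathcal{E}$ immediately yields $\sum_i I_i^{\rm AD} = E^{\rm AD}$.

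Next I would apply the link product formula \eqref{eq:linked} termwise, obtaining
\begin{equation}
\mathcal{J}[\mathcal{K}_i \circ \mathcal{I}_i]^{\rm AB} = {\ }^{\rm DD'}\bra{\mathbf{\Phi}^+}\, I_i^{\rm AD} \otimes K_i^{\rm D'B}\, \ket{\mathbf{\Phi}^+}^{\rm DD'}.
\end{equation}
Since the contraction $\,{}^{\rm DD'}\bra{\mathbf{\Phi}^+}(\cdot)\ket{\mathbf{\Phi}^+}^{\rm DD'}$ is linear and $\mathcal{J}$ is linear, summing over $i$ and setting $X^{\rm ADD'B} := \sum_{i=1}^n I_i^{\rm AD} \otimes K_i^{\rm D'B}$ reproduces $G^{\rm AB} = {\ }^{\rm DD'}\bra{\mathbf{\Phi}^+}\, X^{\rm ADD'B}\, \ket{\mathbf{\Phi}^+}^{\rm DD'}$, which is the claimed representation.

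Finally I would read off the structural properties of $X^{\rm ADD'B}$. Positivity and separability across the bipartition $(\rm{AD}|\rm{D'B})$ are immediate, as $X^{\rm ADD'B}$ is by construction a sum of tensor products of operators that are individually positive semidefinite on $\rm AD$ and on $\rm D'B$. The partial-trace condition then follows by distributing $\tr_{\rm B}$ over the sum and invoking trace preservation of each $\mathcal{K}_i$:
\begin{equation}
\tr_{\rm B}(X^{\rm ADD'B}) = \sum_{i=1}^n I_i^{\rm AD} \otimes \tr_{\rm B}(K_i^{\rm D'B}) = \Big(\sum_{i=1}^n I_i^{\rm AD}\Big) \otimes \mathds{1}_d = E^{\rm AD} \otimes \mathds{1}_d.
\end{equation}

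Every step is a direct application of the Choi--Jamio{\l}kowski isomorphism and the link product, so there is no genuine analytical obstacle here; the only care required is the bookkeeping of the four tensor factors $\rm A, D, D', B$ and checking that the maximally entangled contraction is correctly distributed over the summation index $i$. I expect this last point—keeping the $\rm D$ and $\rm D'$ legs (belonging to $I_i$ and $K_i$ respectively) aligned with the entangled state being contracted—to be the only place where an error could creep in. Reversing these identifications gives the converse statement, which is what ultimately justifies the separability-based SDP relaxation discussed in the main text.
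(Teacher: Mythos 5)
Your proposal is correct and follows essentially the same route as the paper's proof: define $I_i^{\rm AD}$ and $K_i^{\rm D'B}$ as the Choi operators of the subchannel decomposition and the conditional channels, apply the link product formula \eqref{eq:linked} termwise, and use linearity plus trace preservation of the $\mathcal{K}_i$ to obtain the representation of $G^{\rm AB}$ and the partial-trace constraint $\tr_{\rm B}(X^{\rm ADD'B}) = E^{\rm AD} \otimes \mathds{1}_d$. The positivity, separability, and bookkeeping points you flag are handled exactly as in the paper, so nothing further is needed.
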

\begin{proof}
Let $\mathcal{G} \in \mathcal{F}[\mathcal{E}]$. This means that we can decompose $\mathcal{G}$ as $\mathcal{G} = \sum_{i=1}^{n} \mathcal{K}_{i} \circ \mathcal{I}_i$ with channels $\mathcal{K}_{i}$ and a subchannel decomposition $\{\mathcal{I}^i\}_{i=1}^{n}$ of $\mathcal{E}$. The Choi operator $G^{\rm AB} = \mathcal{J}[\mathcal{G}]$ can then, by linearity of the Choi isomorphism, be decomposed as
\begin{align}
    G^{\rm AB} &= \sum_{i=1}^{n} \mathcal{J}[\mathcal{K}_{i} \circ \mathcal{I}_i] = \sum_{i=1}^{n} {\ }^{\rm DD'}\bra{\mathbf{\Phi}^+} I_{i}^{\rm AD} \otimes K_{i}^{\rm D'B} \ket{\mathbf{\Phi}^+}^{\rm DD'}
\end{align}
where the second equality follows from Eq.~\eqref{eq:linked}. Furthermore, for the operator $X^{\rm ADD'B}$, given by 
\begin{equation*}
    X^{\rm ADD'B} = \sum_{i=1}^{n} I_{i}^{\rm AD} \otimes K_{i}^{\rm D'B},
\end{equation*}
 it holds that 
 \begin{align}
     \tr_{\rm B}(X^{\rm ADD'B}) &= \sum_{i=1}^{n} I_{i}^{\rm AD} \otimes \tr_{\rm B}(K_{i}^{\rm D'B}) 
     = \sum_{i=1}^{n} I_{i}^{\rm AD} \otimes \mathds{1}_{d} 
     = E^{\rm AD} \otimes \mathds{1}_{d}.
 \end{align}
\end{proof} 
The detection criterion presented in Proposition \ref{prop:class_future_sep} requires the characterization of separable quantum states that in general is known to be an NP-hard problem~\cite{Gurvits2003}. 
To arrive at a practically computable criterion,
we make use of the positive partial transposition (PPT) criterion \cite{Peres1996, Horodecki1996} which provides an outer approximation of the set of positive semidefinite separable operators. 
 \begin{Definition}
    \label{def:class_future_ppt_relaxation}
     Let $\mathcal{E}$ be a quantum channel. The $\rm{PPT}$-relaxation $\mathcal{F}_{\rm PPT}[\mathcal{E}]$ of the classical future $\mathcal{F}[\mathcal{E}]$ is defined by the set of channels $\mathcal{G}$ for which there exists a positive semidefinite operator $X^{\rm ADD'B}$ such that 
     \begin{align}
         {(X^{\rm ADD'B})}^{T_{\rm AD}} &\succcurlyeq 0 \\
         G^{\rm AB} &= {\ }^{\rm DD'}\bra{\mathbf{\Phi}^+} X^{\rm ADD'B} \ket{\mathbf{\Phi}^+}^{\rm DD'} \\
         \tr_{\rm B}(X^{\rm ADD'B}) &= E^{\rm AD} \otimes \mathds{1}_{d}
     \end{align}
     where $G^{\rm AB}$ and $E^{\rm AD}$ are the Choi operators of $\mathcal{G}$ and $\mathcal{E}$, respectively.
 \end{Definition}
The set $\mathcal{F}_{\rm PPT}[\mathcal{E}]$ is an outer approximation of the classical future in the sense that $\mathcal{F}[\mathcal{E}] \subseteq \mathcal{F}_{\rm PPT}[\mathcal{E}]$. Given a dynamics $\mathcal{E}(t)$ at two time steps, i.e., given $\mathcal{E}(t_1)$ and $\mathcal{E}(t_2)$, one can use the PPT relaxation of Definition \ref{def:class_future_ppt_relaxation} to formulate the following optimization problem 
\begin{align}
s_{\rm PPT}^* = \max_{s, \mathcal{G}} \; &s  \label{eq:ppt_quantum_memory_visibility}\\    
\textnormal{s.t. }&s \mathcal{E}(t_2) + (1-s) \mathcal{G} \in \mathcal{F}_{\rm PPT} [\mathcal{E}(t_1)] \nonumber \\
&\mathcal{G} \textnormal{ is CPTP}. \nonumber
\end{align}
The outcome of the optimization problem \eqref{eq:ppt_quantum_memory_visibility} gives rise to an upper bound to the problem \eqref{eq:convex_combi_future} from the main text.  

\section{Witnesses of quantum memory}
\label{app:quantum memory_witness}
As discussed in the main text, witnesses of quantum memory are observables whose value can detect the presence of quantum memory. The existence of such observables is guaranteed by the joint convexity of pairs of channels and corresponding channels in their classical future, that may be formalized as follows. 
\begin{theorem}
    \label{thm:graph_convexity}
    The set of pairs $\textnormal{Gr}(\mathcal{F}) := \{ (\mathcal{E}, \mathcal{G}) \in C_d \times C_d: \mathcal{G} \in \mathcal{F}[\mathcal{E}]\}$ of channels and associated channels in their classical future is convex. 
\end{theorem}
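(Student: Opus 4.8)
The plan is to reduce the statement to the same bilinearity trick that underlies the convexity of a single classical future (the first Theorem of Appendix \ref{app:finite_class_memory}), now allowing the earlier channel to vary as well. Concretely, I would fix two pairs $(\mathcal{E}_1,\mathcal{G}_1),(\mathcal{E}_2,\mathcal{G}_2)\in\textnormal{Gr}(\mathcal{F})$ and a mixing weight $\lambda\in[0,1]$, and show that the convex combination $(\lambda\mathcal{E}_1+(1-\lambda)\mathcal{E}_2,\,\lambda\mathcal{G}_1+(1-\lambda)\mathcal{G}_2)$ again lies in $\textnormal{Gr}(\mathcal{F})$, i.e.\ that $\lambda\mathcal{G}_1+(1-\lambda)\mathcal{G}_2\in\mathcal{F}[\lambda\mathcal{E}_1+(1-\lambda)\mathcal{E}_2]$.

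First I would invoke the membership hypotheses to obtain explicit classical-memory decompositions: a subchannel decomposition $\{\mathcal{I}_{1i}\}_{i=1}^{n_1}$ of $\mathcal{E}_1$ together with channels $\{\mathcal{K}_{1i}\}$ so that $\mathcal{G}_1=\sum_i \mathcal{K}_{1i}\circ\mathcal{I}_{1i}$, and likewise $\{\mathcal{I}_{2j}\}_{j=1}^{n_2}$, $\{\mathcal{K}_{2j}\}$ with $\mathcal{G}_2=\sum_j \mathcal{K}_{2j}\circ\mathcal{I}_{2j}$ and $\sum_j \mathcal{I}_{2j}=\mathcal{E}_2$. The key step is then to merge these into a single size-$(n_1+n_2)$ family by absorbing the mixing weights into the instrument elements: take $\{\lambda\mathcal{I}_{1i}\}_i\cup\{(1-\lambda)\mathcal{I}_{2j}\}_j$ as subchannels and keep the channels $\{\mathcal{K}_{1i}\}\cup\{\mathcal{K}_{2j}\}$ unchanged. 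Each $\lambda\mathcal{I}_{1i}$ and $(1-\lambda)\mathcal{I}_{2j}$ remains completely positive, and the family sums to $\lambda\mathcal{E}_1+(1-\lambda)\mathcal{E}_2$, so it is a bona fide subchannel decomposition of the mixed earlier channel. By bilinearity of composition, $\mathcal{K}\circ(\lambda\mathcal{I})=\lambda(\mathcal{K}\circ\mathcal{I})$, whence $\sum_i\mathcal{K}_{1i}\circ(\lambda\mathcal{I}_{1i})+\sum_j\mathcal{K}_{2j}\circ((1-\lambda)\mathcal{I}_{2j})=\lambda\mathcal{G}_1+(1-\lambda)\mathcal{G}_2$. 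This exhibits the mixed later channel as an element of $\mathcal{F}[\lambda\mathcal{E}_1+(1-\lambda)\mathcal{E}_2]$, proving the claim.

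I do not expect a genuine obstacle here; the only point demanding care is that, unlike in the single-future convexity proof, the earlier channel is itself mixed, so one must verify that the weighted union of the two decompositions really is a subchannel decomposition of the combined $\lambda\mathcal{E}_1+(1-\lambda)\mathcal{E}_2$ (rather than of one fixed $\mathcal{E}$). The weight-absorption above handles this automatically, since the trace-preserving channels $\mathcal{K}_{1i},\mathcal{K}_{2j}$ are left untouched and all the convex weight is carried by the completely positive instrument elements. As a closing remark I would note that the construction keeps the decomposition size bounded by $n_1+n_2$, mirroring the inclusion $\mathrm{conv}(\mathcal{F}_{n_1}[\mathcal{E}]\cup\mathcal{F}_{n_2}[\mathcal{E}])\subseteq\mathcal{F}_{n_1+n_2}[\mathcal{E}]$ already established for the single classical future.
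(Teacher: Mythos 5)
Your proposal is correct and follows essentially the same route as the paper's own proof: both absorb the mixing weights $\lambda$ and $1-\lambda$ into the instrument elements, keep the channels $\mathcal{K}_{1i},\mathcal{K}_{2j}$ untouched, and observe that the weighted union $\{\lambda\mathcal{I}_{1i}\}\cup\{(1-\lambda)\mathcal{I}_{2j}\}$ is a subchannel decomposition of $\lambda\mathcal{E}_1+(1-\lambda)\mathcal{E}_2$. Your closing remark on the size bound $n_1+n_2$ is a correct additional observation consistent with the paper's inclusion $\mathrm{conv}(\mathcal{F}_{n_1}[\mathcal{E}]\cup\mathcal{F}_{n_2}[\mathcal{E}])\subseteq\mathcal{F}_{n_1+n_2}[\mathcal{E}]$.
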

\begin{proof}
    Let $\mathcal{E}, \mathcal{G}, \mathcal{M}, \mathcal{N} \in C_d$ be channels such that $\mathcal{G} \in \mathcal{F}[\mathcal{E}]$ and $\mathcal{N} \in \mathcal{F}[\mathcal{M}]$. This means that 
    \begin{equation}
        \mathcal{G} = \sum_{i=1}^{n} \mathcal{K}_{i} \circ \mathcal{I}_{i} \textnormal{ with } \sum_{i=1}^{n} \mathcal{I}_{i} = \mathcal{E} \quad\quad \mathcal{N} = \sum_{i=1}^{n'} \mathcal{R}_{i} \circ \mathcal{S}_{i} \textnormal{ with } \sum_{i=1}^{n'} \mathcal{S}_{i} = \mathcal{M}
    \end{equation}
    Combinations then take the form
    \begin{equation}
        \lambda \mathcal{G} + (1-\lambda) \mathcal{N} = \sum_{i=1}^{n} \mathcal{K}_{i} \circ \lambda \mathcal{I}_{i} + \sum_{i=1}^{n'} \mathcal{R}_{i} \circ (1-\lambda)\mathcal{S}_{i}
    \end{equation}
    for all $\lambda \in \mathbb{R}$. If $\lambda \in [0,1]$, the collection of completely positive maps $\{\lambda \mathcal{I}_{i}\}_{i=1}^{n} \cup \{(1-\lambda)\mathcal{S}_{i}\}_{i=1}^{n'}$ is a subchannel decomposition of $\lambda \mathcal{E}+ (1-\lambda) \mathcal{M}$. This implies that $\lambda \mathcal{G} + (1-\lambda) \mathcal{N} \in \mathcal{F}[\lambda \mathcal{E}+ (1-\lambda) \mathcal{M}]$ so that $\textnormal{Gr}(\mathcal{F})$ is convex.   
\end{proof}

Theorem \ref{thm:graph_convexity} guarantees the existence of quantum memory witnesses defined as follows.
\begin{Definition} 
\label{def:q_mem_witness}
    A pair of hermitian observables $W_{1}^{\rm AD}, W_{2}^{\rm AB} \in \mathcal{L}(\mathbb{C}^{d} \otimes \mathbb{C}^{d})$ is called a witness of quantum memory if it holds that
    \begin{equation}
     \langle W \rangle_{\mathcal{E}_1, \mathcal{E}_2} :=   \tr(W_{1}^{\rm AD} E_{1}^{\rm AD}) + \tr(W_{2}^{\rm AB} E_{2}^{\rm AB}) \geq 0
    \end{equation}
    for all Choi operators $E_{1}^{\rm AD}$ and $E_{2}^{\rm AB}$ of channels $\mathcal{E}_{1}, \mathcal{E}_2 \in C_d$ such that $\mathcal{E}_2 \in \mathcal{F}[\mathcal{E}_{1}]$. 
\end{Definition}
To see how the observables can be measured in practice, notice that $W_1^{\rm AD}$ and $W_2^{\rm AB}$ can be expanded in some local basis via $W_{1(2)}^{\rm AD} = \sum_{i,j=0}^{d^{2}-1} w_{ij}^{1(2)} \, \sigma_{i} \otimes \sigma_{j}$ where $\{\sigma_{j}\}_{j=0}^{d^2 - 1}$ is some hermitian basis of the operator space $\mathcal{L}(\mathbb{C}^d)$, e.g., the basis of Pauli matrices for $d=2$. This can be used to rewrite the expectation value $\langle W \rangle_{\mathcal{E}}(t_2, t_1) := \langle W \rangle_{\mathcal{E}(t_1), \mathcal{E}(t_2)}$ as
\begin{equation}
    \langle W \rangle_{\mathcal{E}}(t_2, t_1) = \sum_{k=1,2} \sum_{i,j=0}^{d^2 - 1} w_{ij}^{k} \tr(\mathcal{E}(t_k)[\sigma_{i}^T] \sigma_{j})
\end{equation}
and allows to measure the witness on the dynamics $\mathcal{E}(t_k)$, $k=1,2$, at two times by preparing suitable states, evolving them by $\mathcal{E}(t_k)$ and finally performing a suitable measurement. A concrete example for such a quantum memory witness is provided below in Section \ref{app:dyn_2GA}. 
An important construction of witnesses can be derived from the PPT relaxation of the classical future in Definition \ref{def:class_future_ppt_relaxation}, as explained by the following Theorem. 
\begin{theorem}
    \label{thm:qm_witness_ppt}
    Let $W_{1}^{\rm AD}, W_{2}^{\rm AB} \in \mathcal{L}(\mathbb{C}^{d} \otimes \mathbb{C}^{d})$ be a pair of hermitian observables such that there exists positive semidefinite operators $Q^{\rm ADD'B}$ and $R^{\rm ADD'B}$ so that 
    \begin{equation}
        W_{1}^{\rm AD} \otimes \mathds{1}^{\rm D'B}/d +  W_{2}^{\rm AB} \otimes \Phi_{+}^{\rm DD'} = Q^{\rm ADD'B} + (R^{\rm ADD'B})^{T_{\rm D'B}}. 
    \end{equation}
    Then the pair $W_{1}^{\rm AD}, W_{2}^{\rm AB}$ is a witness of quantum memory. Here, $\Phi_{+}^{\rm DD'} := \ketbra{\mathbf{\Phi}^+}{\mathbf{\Phi}^+}^{\rm DD'} = 
    \sum_{i,j=0}^{d-1} \ketbra{i}{j} \otimes \ketbra{i}{j}$ is the unnormalized maximally entangled state.  
\end{theorem}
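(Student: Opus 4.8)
The plan is to reduce the claimed inequality to a statement about the PPT relaxation and then to split the witness value into two manifestly nonnegative traces. Since $\mathcal{F}[\mathcal{E}_1]\subseteq\mathcal{F}_{\rm PPT}[\mathcal{E}_1]$, it suffices to show $\langle W\rangle_{\mathcal{E}_1,\mathcal{E}_2}\ge 0$ already under the weaker hypothesis $\mathcal{E}_2\in\mathcal{F}_{\rm PPT}[\mathcal{E}_1]$. By Definition~\ref{def:class_future_ppt_relaxation}, this provides a positive semidefinite operator $X^{\rm ADD'B}$ with $(X^{\rm ADD'B})^{T_{\rm AD}}\succcurlyeq 0$ that satisfies $E_2^{\rm AB}={\ }^{\rm DD'}\bra{\mathbf{\Phi}^+}X^{\rm ADD'B}\ket{\mathbf{\Phi}^+}^{\rm DD'}$ and $\tr_{\rm B}(X^{\rm ADD'B})=E_1^{\rm AD}\otimes\mathds{1}^{\rm D'}$.

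The central step is to rewrite $\langle W\rangle_{\mathcal{E}_1,\mathcal{E}_2}$ as a single Hilbert--Schmidt pairing with $X^{\rm ADD'B}$. For the first term, using $\tr_{\rm B}(X^{\rm ADD'B})=E_1^{\rm AD}\otimes\mathds{1}^{\rm D'}$ together with $\tr(\mathds{1}^{\rm D'})=d$, I would verify
\[
\tr\!\big(W_1^{\rm AD}E_1^{\rm AD}\big)=\tr\!\big[(W_1^{\rm AD}\otimes\mathds{1}^{\rm D'B}/d)\,X^{\rm ADD'B}\big],
\]
where the prefactor $1/d$ exactly absorbs the trace over $\mathds{1}^{\rm D'}$. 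For the second term, writing $\Phi_+^{\rm DD'}=\ketbra{\mathbf{\Phi}^+}{\mathbf{\Phi}^+}^{\rm DD'}$ and using that $\tr_{\rm DD'}\!\big[(\mathds{1}^{\rm AB}\otimes\Phi_+^{\rm DD'})X^{\rm ADD'B}\big]={\ }^{\rm DD'}\bra{\mathbf{\Phi}^+}X^{\rm ADD'B}\ket{\mathbf{\Phi}^+}^{\rm DD'}=E_2^{\rm AB}$, I would obtain
\[
\tr\!\big(W_2^{\rm AB}E_2^{\rm AB}\big)=\tr\!\big[(W_2^{\rm AB}\otimes\Phi_+^{\rm DD'})\,X^{\rm ADD'B}\big].
\]
Adding the two identities yields $\langle W\rangle_{\mathcal{E}_1,\mathcal{E}_2}=\tr\!\big[(W_1^{\rm AD}\otimes\mathds{1}^{\rm D'B}/d+W_2^{\rm AB}\otimes\Phi_+^{\rm DD'})\,X^{\rm ADD'B}\big]$.

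Finally I would substitute the hypothesised decomposition $W_1^{\rm AD}\otimes\mathds{1}^{\rm D'B}/d+W_2^{\rm AB}\otimes\Phi_+^{\rm DD'}=Q^{\rm ADD'B}+(R^{\rm ADD'B})^{T_{\rm D'B}}$ to get $\langle W\rangle_{\mathcal{E}_1,\mathcal{E}_2}=\tr(Q^{\rm ADD'B}X^{\rm ADD'B})+\tr\!\big[(R^{\rm ADD'B})^{T_{\rm D'B}}X^{\rm ADD'B}\big]$. The first summand is nonnegative, being the trace of a product of two positive semidefinite operators. For the second, I would move the partial transpose across the trace, $\tr\!\big[R^{T_{\rm D'B}}X\big]=\tr\!\big[R\,X^{T_{\rm D'B}}\big]$, and use that $(X^{\rm ADD'B})^{T_{\rm D'B}}=\big((X^{\rm ADD'B})^{T_{\rm AD}}\big)^{T}$ is the global transpose of a positive semidefinite operator; since transposition preserves the spectrum, $X^{T_{\rm D'B}}\succcurlyeq 0$, and the second summand is again a trace of a product of positive semidefinite operators. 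Hence $\langle W\rangle_{\mathcal{E}_1,\mathcal{E}_2}\ge 0$.

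The main obstacle is the bookkeeping in the central step: keeping the four tensor factors $\rm A,D,D',B$ straight while pushing the partial traces through the entangled-state sandwiches, and fixing the $1/d$ normalisation so that the two pieces assemble into precisely the operator appearing in the hypothesis. The only other point needing care is the equivalence of the PPT condition on the two marginals, which is what lets the $R^{T_{\rm D'B}}$ term be paired with $X^{T_{\rm D'B}}$ rather than with $X^{T_{\rm AD}}$.
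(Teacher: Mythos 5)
Your proposal is correct and follows essentially the same route as the paper's proof: the identical central computation rewriting $\tr(W_1^{\rm AD}E_1^{\rm AD})+\tr(W_2^{\rm AB}E_2^{\rm AB})$ as $\tr[(W_1^{\rm AD}\otimes\mathds{1}^{\rm D'B}/d+W_2^{\rm AB}\otimes\Phi_+^{\rm DD'})\,X^{\rm ADD'B}]$, followed by the same split into $\tr(QX)+\tr(R\,X^{T_{\rm D'B}})\geq 0$. Your only (harmless, in fact slightly stronger) variation is to start from $\mathcal{E}_2\in\mathcal{F}_{\rm PPT}[\mathcal{E}_1]$ via Definition~\ref{def:class_future_ppt_relaxation} and convert $X^{T_{\rm AD}}\succcurlyeq 0$ into $X^{T_{\rm D'B}}\succcurlyeq 0$ by a global transpose, whereas the paper obtains $X^{T_{\rm D'B}}\succcurlyeq 0$ directly from the separable form of $X$ in Proposition~\ref{prop:class_future_sep}; this shows the witness is nonnegative on all of $\mathcal{F}_{\rm PPT}[\mathcal{E}_1]$, not just on $\mathcal{F}[\mathcal{E}_1]$.
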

\begin{proof}
    Let $\mathcal{E}_{1}$ and $\mathcal{E}_{2}$ be a pair of channels such that $\mathcal{E}_2 \in \mathcal{F}[\mathcal{E}_{1}]$. By Proposition \ref{prop:class_future_sep} there exists a positive operator $X^{\rm ADD'B}$ such that $\tr_{\rm B}(X^{\rm ADD'B}) = E_{1}^{\rm AD} \otimes \mathds{1}^{\rm D'}$, $E_{2}^{\rm AB} = {\ }^{\rm DD'}\bra{\mathbf{\Phi}^+} X^{\rm ADD'B} \ket{\mathbf{\Phi}^+}^{\rm DD'}$ and $(X^{\rm ADD'B})^{T_{\rm D'B}} \geq 0$. Now, a straightforward calculation reveals 
    \begin{align}
        & \tr(W_{1}^{\rm AD} E_{1}^{\rm AD}) + \tr(W_{2}^{\rm AB} E_{2}^{\rm AB}) \\
        &= \tr[(W_{1}^{\rm AD} \otimes \mathds{1}_{\rm D'}/{d}) \, (E_{1}^{\rm AD} \otimes \mathds{1}_{\rm D'})] + \tr[W_{2}^{\rm AB} \, {\ }^{\rm DD'}\bra{\mathbf{\Phi}^+} X^{\rm ADD'B} \ket{\mathbf{\Phi}^+}^{\rm DD'}] \\
        &= \tr[(W_{1}^{\rm AD} \otimes \mathds{1}_{\rm D'}/d) \, \tr_{\rm B}(X^{\rm ADD'B})] + \tr[(W_{2}^{\rm AB} \otimes \Phi_{+}^{\rm DD'}) \, X^{\rm ADD'B}] \\
        &= \tr[(W_{1}^{\rm AD} \otimes \mathds{1}_{\rm D'B}/d +  W_{2}^{\rm AB} \otimes \Phi_{+}^{\rm DD'}) \, X^{\rm ADD'B}] \\
        &= \tr[(Q^{\rm ADD'B} + (R^{\rm ADD'B})^{T_{\rm D'B}}) \, X^{\rm ADD'B}] \\
        &= \underbrace{\tr(Q^{\rm ADD'B} X^{\rm ADD'B})}_{\geq 0} + \underbrace{\tr(R^{\rm ADD'B} (X^{\rm ADD'B})^{T_{\rm D'B}})}_{\geq 0} \geq 0 
    \end{align}
    showing that $W_{1}^{\rm AB}$ and $W_{2}^{\rm AD}$ form a witness of quantum memory. 
\end{proof}

\section{Spontaneous emission in two-level systems}
\label{app:dyn_two_level_sys}
A two-level quantum system interacting with the vacuum field undergoes spontaneous decay when it is initially in its excited state. In this process, the system transitions to it's ground state by emitting a photon with an energy of $\hbar\omega_e$, where $\omega_e$ represents the transition frequency between the excited and ground states, and $\hbar$ is Planck's constant. The total Hamiltonian of the two-level system and the environment reads
\begin{eqnarray}\label{TLS_gene_Hamiltonian}
    H &=& \hbar \omega_e \ketbra{e}{e} + \int_0^{\infty} d\omega \,  \hbar \omega a_{\omega}^{\dagger} a_{\omega} 
    +\hbar \int_0^{\infty} d\omega \sqrt{\frac{\Gamma(\omega)}{4\pi}} \left( a_{\omega}^{\dagger} \sigma_- + \sigma_+  a_{\omega}\right),
\end{eqnarray}
where $\Gamma(\omega)$ represents the spontaneous emission rate calculated using Fermi's golden rule. For generality, we allow this emission rate to depend on the frequency $\omega$,  as it is proportional to both the coupling strength between the two-level system and the field, and the spectral density of the environment. The lowering and raising operators of the two-level system are given by $\sigma_-= \ketbra{g}{e}$ and $\sigma_+=\ketbra{e}{g}$. The field operators $a_{\omega}$ correspond to modes labeled by the frequency $\omega$  and satisfy the commutation relation $[a_{\omega}, a_{\omega'}^{\dagger}]=\delta(\omega - \omega')$.

The aim is to derive the dynamics of the two-level system, starting from the global initial product state $\ket{\Psi(0)} = \left[e(0) \ket{e} + g(0)\ket{g}\right] \otimes \ket{\mathrm{vac}}$, where $\ket{\mathrm{vac}}$ represents the vacuum state of the mode continuum, and $g(0)$ and $e(0)$ are the initial amplitudes for the ground and excited states of the two-level system, respectively. Given that the field is initially in the vacuum state and interacts with the system in a weak and resonant manner, the single excitation assumption can be applied to the field \cite{Guo2017}. Consequently, the global state at any time $t$ is expressed using the pure state Ansatz
\begin{eqnarray}
\ket{\Psi(t)} = \int_0^\infty d\omega \, \alpha_{\omega}(t) a_{\omega}^{\dagger} \ket{g,\mathrm{vac}} + e(t) \ket{e,\mathrm{vac}} + g(0) \ket{g, \mathrm{vac}},
\end{eqnarray}
with $\alpha_{\omega} (0) = 0$. Notice that the amplitude of the $\ket{g,\mathrm{vac}}$ component is constant in time, because $H \ket{g, \mathrm{vac}} = 0$.
(The time evolution of an initially mixed atom state is straightforwardly obtained by mixing the solutions for two distinct pure initial states.)

From the Schrödinger equation $i\hbar \partial_t \ket{\Psi(t)} = H \ket{\Psi(t)}$, we obtain coupled equations for the probability amplitudes,
\begin{eqnarray}
    \dot{\alpha}_{\omega}(t) &=& -i\omega \alpha_{\omega}(t) - i e(t) \sqrt{\frac{\Gamma(\omega)}{4\pi}}, \label{EOM_env}\\
    \dot{e}(t) &=& - i\omega_e e(t) -i \int_0^{\infty} d\omega \sqrt{\frac{\Gamma(\omega)}{4 \pi}} \alpha_{\omega}(t). \label{EOM_sys}
\end{eqnarray}
We can formally solve the Eq.~\eqref{EOM_env} and substitute the solution into the Eq.~\eqref{EOM_sys}, which then becomes
\begin{eqnarray} \label{ex_amp_diff_eq}
    \dot{e}(t) = -i\omega_e e(t) -\int_0^t ds e(t-s) \chi(s),
\end{eqnarray}
with the bath correlation function at zero temperature
\begin{equation} \label{bcf_eq}
    \chi (t) = \frac{1}{4\pi}\int_0^{\infty} d\omega \Gamma(\omega) e^{-i\omega t},
\end{equation}
which depends on the details of the bath through $\Gamma(\omega)$. To solve the Eq.~\eqref{ex_amp_diff_eq}, we first perform the Fourier transform. The solution in frequency domain is
\begin{eqnarray}\label{ex_amp_freq_domain_eq}
    \tilde{e}(\omega) = \frac{ie(0)}{\omega - \omega_e + i\tilde{\chi}(\omega)},
\end{eqnarray}
where $\tilde{\chi}(\omega)$ has the definition $\tilde{\chi}(\omega)= \int_0^{\infty} dt \chi(t) e^{i \omega t}$. 
Taking the inverse Fourier transform, the solution of Eq.~\eqref{ex_amp_diff_eq} then has the form $e(t) = e(0)c(t)$,
with the decay factor 
\begin{equation}
    \label{eq:two_level_decay}
    c(t) =  \frac{i}{2\pi} \int_{-\infty}^{\infty} d\omega \frac{e^{-i\omega t}}{\omega -\omega_e +i \tilde{\chi}(\omega)}.
\end{equation}
Therefore, different types of baths lead to distinct spontaneous emission processes, each determined by the corresponding bath correlation function. 
In the corresponding spontaneous emission process, the state of the two-level system at time $t$ is represented by a reduced density matrix $\rho(t)$, obtained by performing a partial trace over the field degrees of freedom of the total system-field state, $\rho(t) = \tr_f\{|\Psi(t)\rangle \langle \Psi(t)|\}$. The density matrix elements $\rho_{xy}(t) = \bra{x}\rho(t)\ket{y}$ evolve according to
\begin{eqnarray}
    \rho_{ee} (t) &=&|c(t)|^2 \rho_{ee}(0), \label{ee_dyn} \\
    \rho_{eg} (t) &=&c(t) \rho_{eg}(0) = \rho_{ge}^*(t), \label{eg_dyn}\\
    \rho_{gg} (t) &=& \rho_{gg}(0) + (1-|c(t)|^2) \rho_{ee}(0). \label{gg_dyn}
\end{eqnarray}
Although these equations are based on the result for a pure product initial state, the decay function $c(t)$ as above in Eq.~\eqref{eq:two_level_decay} does not depend on the initial condition. Since any mixed initial state can be represented as a convex-linear combination of pure initial states, Eqs.~\eqref{ee_dyn}-\eqref{gg_dyn} hold true for any pure or mixed initial state.

Equivalently, we can express the time evolution as the action of a family of trace-preserving completely positive maps on the initial state, $ \mathcal{E}(t)[\rho(0)]$. These maps $\{ \mathcal{E}(t) | t \geq 0\}$ encapsulate how the system state changes over time due to its interaction with the field.

Based on \eqref{ee_dyn}-\eqref{gg_dyn}, we identify the action of the maps on individual operator basis elements as
 \begin{eqnarray}
      \mathcal{E}(t)[\ketbra{g}{g}] &=& \ketbra{g}{g}  \label{eq:two_level_channel_start},\\
     \mathcal{E}(t)[\ketbra{g}{e}] &=& c^{*}(t) \ketbra{g}{e}, \\
     \mathcal{E}(t)[\ketbra{e}{e}] &=& |c(t)|^2 \ketbra{e}{e} + \left(1- |c(t)|^2\right) \ketbra{g}{g}.\label{eq:two_level_channel_end}
 \end{eqnarray}
The Choi operator $E(t)$, see Section \ref{app:memory_witness}, associated with the channel $\mathcal{E}(t)$ is obtained by applying the map $\mathcal{E}(t)$ to one half of an unnormalized maximally entangled state $\ket{\mathbf{\Phi}^+} =  \ket{g} \otimes \ket{g} + \ket{e}\otimes \ket{e}$. The specific expression for the resulting Choi operator in computational basis is
\begin{eqnarray}
    \label{eq:choi_two_level}
    E (t) = \mathcal{E}(t) \otimes \openone  \left[\ketbra{\mathbf{\Phi}^+}{\mathbf{\Phi}^+}\right] = \begin{pmatrix}1 & 0 & 0 & c^*(t) \\
    0 & 0 & 0 & 0 \\
    0 & 0 & 1-|c(t)|^2 & 0 \\
    c(t) & 0 & 0 &|c(t)|^2 \end{pmatrix}.
    \label{choi_state}
\end{eqnarray}

In the following,  we give a complete characterization of the quantum memory properties of two-level spontaneous emission processes, and we present the scenario of a giant atom where spontaneous emission follows a non-exponential decay, which serves as a clear sign of non-Markovian dynamics. 

\subsection{Quantum memory characterization}
\label{app:two_level_qmc}
For a smooth connection to the mathematical framework discussed in Section \ref{app:memory_witness}, we replace the labels of the two levels by $\ket{g}$ and $\ket{e}$, as used in Section \ref{app:dyn_two_level_sys}, by the computational basis states $\ket{0}$ and $\ket{1}$, respectively.

With the Choi operator \eqref{eq:choi_two_level} of the general two-level system spontaneous emission process at hand, we can investigate the memory properties of the dynamical map. The distinction between Markovianity, classical memory and quantum memory for the two-level system can be characterized as follows.

\begin{theorem}
    The spontaneous emission dynamics of the two-level system with $\mathcal{E}(t)$ as given in Eq.~\eqref{eq:two_level_channel_start}-\eqref{eq:two_level_channel_end} can be implemented with classical memory if and only if 
    \begin{equation}
        \abs{c(t_1)} \geq \abs{c(t_2)}
    \end{equation}
    where $t_2 \geq t_1$ and with $c(t)$ as defined in Eq.~\eqref{eq:two_level_decay}. Furthermore, if the dynamics can be described with classical memory, it is even Markovian (CP-divisible) in the sense that there exists a channel $\mathcal{K}(t_2, t_1)$ such that 
    \begin{equation}
        \mathcal{E}(t_2) = \mathcal{K}(t_2, t_1) \circ \mathcal{E}(t_1). 
    \end{equation}
\end{theorem}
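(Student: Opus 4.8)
The plan is to prove the stated equivalence through three implications and then read off the ``furthermore'' clause: (i) $\abs{c(t_1)} \ge \abs{c(t_2)}$ implies CP-divisibility; (ii) CP-divisibility implies implementability with classical memory, trivially, as the single-outcome case $n=1$; and (iii) implementability with classical memory implies $\abs{c(t_1)} \ge \abs{c(t_2)}$. Chaining (iii) into (i) shows that any classical-memory implementation can in fact be upgraded to a CP-divisible one, which is exactly the ``furthermore'' statement. Throughout I write $c_1 = c(t_1)$, $c_2 = c(t_2)$.

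For implication (i) I would invoke the composition rule $\mathcal{C}[c']\,\mathcal{C}[c] = \mathcal{C}[c'c]$ already established for these channels. When $c_1 \neq 0$ one sets $\mathcal{K}(t_2,t_1) = \mathcal{C}[c_2/c_1]$, which reproduces $\mathcal{E}(t_2) = \mathcal{K}(t_2,t_1)\circ\mathcal{E}(t_1)$ and is a legitimate CPTP map precisely because $\abs{c_2/c_1}\le 1$ under the hypothesis, using that $\mathcal{C}[x]$ is completely positive if and only if $\abs{x}\le 1$ (the weight $1-\abs{x}^2$ in the Kraus form must be nonnegative). The degenerate case $c_1=0$ forces $c_2=0$, so $\mathcal{E}(t_1)=\mathcal{E}(t_2)$ and $\mathcal{K}(t_2,t_1)=\mathrm{id}$ works.

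The substance is implication (iii), and two structural facts drive it. First, writing $\mathcal{E}(t_1)=\mathcal{C}[c_1]$ in Kraus form with $A_0 = \ketbra{g}{g}+c_1\ketbra{e}{e}$ and $A_1=\ketbra{g}{e}$, I would note that any subchannel $\mathcal{I}_i$ of a decomposition $\mathcal{E}(t_1)=\sum_i\mathcal{I}_i$ satisfies $0 \le \mathcal{J}(\mathcal{I}_i)\le E(t_1)$ in the Choi ordering, so its range, hence the span of its Kraus operators, lies inside $\mathrm{span}\{A_0,A_1\}$; every Kraus operator of $\mathcal{I}_i$ is therefore upper triangular of the form $\alpha A_0+\beta A_1$ in the basis $(\ket{g},\ket{e})$. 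Second, $\ket{g}\bra{g}$ is a common fixed point, $\mathcal{E}(t)[\ketbra{g}{g}]=\ketbra{g}{g}$. Applying the classical-memory decomposition $\mathcal{E}(t_2)=\sum_i\mathcal{K}_i\circ\mathcal{I}_i$ to $\ket{g}\bra{g}$ and using that a rank-one projector can only be a sum (resp.\ convex combination) of positive operators (resp.\ states) all proportional to itself, I would deduce $\mathcal{I}_i[\ketbra{g}{g}]=\lambda_i\ketbra{g}{g}$ with $\sum_i\lambda_i=1$, and $\mathcal{K}_i[\ketbra{g}{g}]=\ketbra{g}{g}$ for every $i$ with $\lambda_i>0$; the latter makes the Kraus operators $M_{i,\nu}$ of those $\mathcal{K}_i$ upper triangular as well.

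To finish I would track the single coherence $\mathcal{E}(t_2)[\ketbra{e}{g}]=c_2\ketbra{e}{g}$. The outcomes with $\lambda_i=0$ have Kraus operators proportional to $A_1$, and since $A_1\ket{e}\bra{g}A_1^{\dagger}=0$ they drop out entirely; only the triangular branches survive. A direct evaluation then yields $c_2 = c_1\sum_{i:\lambda_i>0}\lambda_i\sum_\nu m_{i,\nu}^{*}\,y_{i,\nu}$, where $m_{i,\nu}$ and $y_{i,\nu}$ are the $\ket{g}\bra{g}$- and $\ket{e}\bra{e}$-diagonal entries of $M_{i,\nu}$. Cauchy--Schwarz together with the trace-preservation identities $\sum_\nu\abs{m_{i,\nu}}^2=1$ and $\sum_\nu\abs{y_{i,\nu}}^2\le 1$ bounds each inner factor by one, giving $\abs{c_2}\le\abs{c_1}\sum_i\lambda_i=\abs{c_1}$, as desired. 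I expect the main obstacle to be assembling this coherence identity cleanly: one must verify that the Kraus-span constraint and the fixed-point constraint render all relevant operators simultaneously upper triangular, confirm that the $\lambda_i=0$ branch leaves the off-diagonal sector untouched, and carry the $M_{i,\nu}$-bookkeeping correctly through the composition---after which the final estimate is immediate.
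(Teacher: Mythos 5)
Your proof is correct, and while the easy direction coincides with the paper's (both use the composition rule to set $\mathcal{K}(t_2,t_1)=\mathcal{C}[c(t_2)/c(t_1)]$, CPTP precisely when $\abs{c(t_2)}\leq\abs{c(t_1)}$; your explicit treatment of the degenerate case $c(t_1)=0$ is a small point the paper leaves implicit), your hard direction takes a genuinely different route. The paper stays entirely in the Choi picture: from $0\preccurlyeq I_i(t_1)\preccurlyeq E(t_1)$ and the zero pattern of $E(t_1)$ it concludes that each subchannel $\mathcal{I}_i(t_1)$, hence each composite $\mathcal{K}_i\circ\mathcal{I}_i(t_1)$, hence each $\mathcal{K}_i$, has $\ketbra{g}{g}$ as an eigenvector; trace preservation then forces the Choi matrix of $\mathcal{K}_i$ into a sparse normal form, from which the branch-wise \emph{population} identity $\bra{g}\mathcal{I}_i(t_2)[\ketbra{e}{e}]\ket{g}=\bra{g}\mathcal{I}_i(t_1)[\ketbra{e}{e}]\ket{g}$ plus a nonnegative multiple of $\bra{e}\mathcal{I}_i(t_1)[\ketbra{e}{e}]\ket{e}$ follows; summing over $i$ gives $1-\abs{c(t_2)}^2\geq 1-\abs{c(t_1)}^2$ with no analytic inequality needed. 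You instead work in the Kraus picture and track the \emph{coherence}: the same operator dominance gives the span constraint $B_{i\nu}=\alpha_{i\nu}A_0+\beta_{i\nu}A_1$, whose tied diagonal $(\alpha_{i\nu},\alpha_{i\nu}c(t_1))$ --- not mere upper-triangularity --- is what yields $\mathcal{I}_i[\ketbra{e}{g}]=c(t_1)\lambda_i\ketbra{e}{g}+\kappa_i\ketbra{g}{g}$; the fixed-point argument makes the surviving $M_{i\nu}$ triangular, and Cauchy--Schwarz with $\sum_\nu\abs{m_{i\nu}}^2=1$ and $\sum_\nu\abs{y_{i\nu}}^2\leq 1$ closes the bound $\abs{c(t_2)}\leq\abs{c(t_1)}\sum_i\lambda_i=\abs{c(t_1)}$. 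One presentational nit: $\mathcal{I}_i[\ketbra{g}{g}]=\lambda_i\ketbra{g}{g}$ should be credited to the span constraint (since $B_{i\nu}\ket{g}=\alpha_{i\nu}\ket{g}$) rather than to the rank-one sum argument, which by itself only constrains the composites $\mathcal{K}_i\circ\mathcal{I}_i$. As for what each approach buys: the paper's population monotone holds branch by branch without Cauchy--Schwarz and is exactly the device that scales to the $\Lambda$-type three-level proof (Proposition \ref{prop:three_level_subspace} tracks two such diagonal elements), whereas your coherence route makes the contraction mechanism explicit, shows that equality requires the vectors $(m_{i\nu})_\nu$ and $(y_{i\nu})_\nu$ to be parallel in every branch, and bounds $\abs{c}$ directly rather than through $1-\abs{c}^2$ --- equivalent monotones here, but conceptually distinct handles on the same sparsity structure.
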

\begin{proof}
    For the first direction, let $\abs{c(t_2)} \leq \abs{c(t_1)}$. We now show that there exists a channel $\mathcal{K}(t_2, t_2)$ such that $\mathcal{E}(t_2) = \mathcal{K}(t_2, t_1) \circ \mathcal{E}(t_1)$. In fact, the map $\mathcal{K}(t_2, t_2) = \mathcal{C}(t_2) \circ \mathcal{C}(t_1)^{-1}$ must be associated to the Choi operator $K(t_2, t_1)^{\rm D'B}$, in computational basis explicitly given by 
    \begin{equation}
        K(t_2, t_1)^{\rm D'B} =  \begin{pmatrix} 
        1 & 0 & 0 & \frac{c(t_2)}{c(t_1)} \\
        0 & 0 & 0 & 0 \\
        0 & 0 & \frac{\abs{c(t_1)}^2 - \abs{c(t_2)}^2}{\abs{c(t_1)}^2} & 0 \\
        \frac{c(t_2)^*}{c(t_1)^*} & 0 & 0 & \frac{\abs{c(t_2)}^2}{\abs{c(t_1)}^2}
   \end{pmatrix}
    \end{equation}
    which is positive semidefinite if and only if $\abs{c(t_2)} \leq \abs{c(t_1)}$. It follows that the process is Markovian and hence may be implemented with classical memory.  

    For the other direction, assume that $\mathcal{E}(t_2) \in \mathcal{F}[\mathcal{E}(t_1)]$. 
    This means that there is a subchannel decomposition $\{\mathcal{I}_{i}(t_1)\}_{i=1}^{n}$ of $\mathcal{E}(t_1)$ and channels $\{\mathcal{K}_{i}(t_2, t_1)\}_{i=1}^{n}$ such that 
    \begin{equation}
    \label{eq:class_future-two_level_proof}
        \mathcal{E}(t_2) = \sum_{i=1}^{n} \mathcal{K}_{i}(t_2, t_1) \circ \mathcal{I}_{i}(t_1).
    \end{equation}
    Let us pick a single $i \in \{1,\dots,n\}$. Since $\mathcal{I}_{i}(t_1)$ is a subchannel of $\mathcal{E}(t_1)$, it's associated Choi operator satisfies $I_{i}(t_1)^{\rm AB} \leq E(t_1)^{\rm AB}$. Considering the operator $E(t_1)^{\rm AB}$, it can be seen that the subchannel $\mathcal{I}_{i}(t_1)$ must have $\ketbra{0}{0}$ as an eigenvector. Since $\mathcal{I}_{i}(t_2) := \mathcal{K}_{i}(t_2, t_1) \circ \mathcal{I}_{i}(t_1)$ is a subchannel of $\mathcal{E}(t_2)$, also $\mathcal{I}_{i}(t_2)$ has $\ketbra{0}{0}$ as an eigenvector. It follows that $\ketbra{0}{0}$ is also an eigenvector of $\mathcal{K}_{i}(t_2, t_1)$. Together with the assumption that $\mathcal{K}_{i}(t_2, t_1)$ is completely positive and trace-preserving,  the Choi operator $K_{i}(t_2, t_1)^{\rm D'B}$ of $\mathcal{K}_{i}(t_2, t_1)$ is necessarily of the form 
    \begin{equation}
    \label{eq:two_level_transition_choi}
        K_{i}(t_2, t_1)^{\rm AB} = \begin{pmatrix} 
        1 & 0 & 0 & k_{01,01} \\
        0 & 0 & 0 & 0 \\
        0 & 0 & k_{11,00} & k_{11,01} \\
        k_{01,01}^{*} & 0 & k_{11,01}^{*} & k_{11,11}
   \end{pmatrix}
    \end{equation}
    where $k_{lj,mr} = \bra{m} \mathcal{K}_{i} (t_2, t_2)[\ketbra{l}{j}]\ket{r}$. From the structure \eqref{eq:two_level_transition_choi} of the Choi operator $K_{i}(t_2, t_1)^{\rm D'B}$, one can deduce the equation
    \begin{equation}
        \bra{0} \mathcal{I}_{i}(t_2)[\ketbra{1}] \ket{0} = \bra{0} \mathcal{I}_{i}(t_1)[\ketbra{1}] \ket{0} + \underbrace{k_{11,00}  \bra{1} \mathcal{I}_{i}(t_1)[\ketbra{1}] \ket{1}}_{\geq 0}
    \end{equation}
    Hence, we have that 
    \begin{equation}
        \bra{0} \mathcal{I}_{i}(t_2)[\ketbra{1}{1}] \ket{0} \geq \bra{0} \mathcal{I}_{i}(t_1)[\ketbra{1}{1}] \ket{0}
    \end{equation}
    Since this inequality holds for all $i \in \{1,\dots,n\}$, it also holds for the sum of them so that 
    \begin{equation}
        1 - \abs{c(t_2)}^2 \geq 1 - \abs{c(t_1)}^2
    \end{equation}
    which is equivalent to $\abs{c(t_1)} \geq \abs{c(t_2)}$.
\end{proof}

\subsection{Example: Two-level giant atom dynamics}
\label{app:dyn_2GA}
\begin{figure}[t!] 
    \centering
        \includegraphics[width=0.45\linewidth]{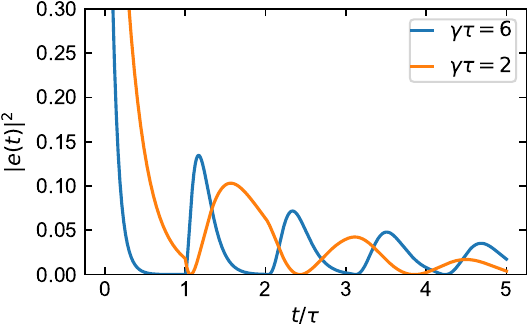}
    \caption{Time evolution of the excited-state probability of an initially excited two-level giant atom, for two different bath coupling rates $\gamma$ and fixed resonance frequency, $\omega_e \tau = 20 \pi$. Time is given in units of the delay time $\tau$.} 
    \label{fig:prob_dyn_2GA}
\end{figure}
We now analyze the dynamics of a two-level giant atom coupled to a one-dimensional surface acoustic wave (SAW) waveguide through two spatially separated coupling points, where the dynamics exhibits non-Markovian behavior.
As depicted in Fig.~\ref{fig:robustness_GA_sketch}(c) in the main text, we obtain the two-level system by eliminating the second level from the original three-level configuration. The atom interacts with both left- and right-propagating phonon fields at each coupling point. The distance between these two points, $\Delta x$, is significantly larger than the wavelength of the waveguide modes, resulting in a time delay $\tau = \Delta x/v_g$, which is incorporated as a phase factor into the system's dynamics. Here, $v_g$ is the velocity of the SAWs. At sufficiently low cryogenic temperatures, thermal SAW excitations in the vacuum state at frequencies $\omega \sim \omega_e$ are negligible. Assuming a flat coupling to the spectrum of SAW modes near the system resonance, as described in \cite{Gardiner2004, Guo2017}, the total system Hamiltonian is expressed as
\begin{eqnarray}
H &=& \hbar \omega_e \ketbra{e}{e} + \int_0^{\infty} d\omega \,  \hbar \omega \sum_{j=L,R} a_{j\omega}^{\dagger} a_{j\omega} 
+\hbar \sum_{j=L,R} \int_0^{\infty} d\omega \sqrt{\frac{\gamma}{4\pi}} \bigg[ a_{j\omega}^{\dagger} \sigma_- \left(e^{i k_{\omega} \Delta x/2} + e^{-i k_{\omega} \Delta x/2} \right) 
+ h.c. \bigg] \nonumber \\
&=&  \hbar \omega_e \ketbra{e}{e} + \int_0^{\infty} d\omega \,  \hbar \omega \sum_{j=L,R} a_{j\omega}^{\dagger} a_{j\omega} 
+\hbar \int_0^{\infty} d\omega \sqrt{\frac{\gamma}{\pi}} \cos \left( \frac{\omega \tau}{2}\right) \bigg[ \left(a_{L\omega} + a_{R\omega} \right)^{\dagger} \sigma_- + h.c. \bigg], 
\label{total_hamiltonian}
\end{eqnarray}
where the $j$ labels the left- and right-propagating fields and $\gamma$ represents the single-contact coupling rate. We can now switch to two standing-wave vacua via a beam splitter transformation, $a_{\pm \omega} = (a_{L\omega} \pm a_{R\omega})/\sqrt{2}$, and notice that the giant atom couples only to the $(+)$-modes. Hence, we can omit the uncoupled modes and rewrite the Hamiltonian in the form Eq.~\eqref{TLS_gene_Hamiltonian} with the effective spontaneous emission rate $\Gamma(\omega) = 8 \gamma \cos^2{\left(\omega \tau/2\right)}$. 
We then evaluate the bath correlation function \eqref{bcf_eq} and plug it into Eq.~\eqref{ex_amp_freq_domain_eq}, making the additional assumption that the Lamb shift is neglected. This leads to the excitation amplitude in Fourier space as
\begin{eqnarray}
    \tilde e(\omega) = \frac{ie(0)}{\omega-\omega_e +i\gamma/2 + i\gamma e^{i\omega \tau}/2} = \sum_{n=0}^\infty \frac{(-i\gamma/2)^n e^{in\omega \tau}}{(\omega-\omega_e + i\gamma/2)^{n+1}} = \sum_{n=0}^\infty \frac{(i\gamma/2)^n}{n!}e^{in\omega \tau} \frac{d^n}{d\omega^n} \left( \frac{1}{\omega-\omega_e + i\gamma/2} \right).
\end{eqnarray}
The geometric series expansion facilitates carrying out the inverse Fourier transform, which we express in terms of a decay factor relative to the initial amplitude,
\begin{eqnarray}
\label{eq:abs_c2_function}
    c(t) := \frac{e(t)}{e(0)} = \sum_{n=0}^{\infty} \Theta(t-n\tau) \frac{[-\gamma (t-n\tau)/2]^n}{n\,!}  e^{-i(\omega_e - i\gamma/2)(t-n\tau)}. \label{eq:c_fact_app}
\end{eqnarray}
The memory effect is encapsulated in the sum over past emission events occurring at times $t-n\tau$. These contributions are reabsorbed by the atom and interfere, leading to non-Markovian dynamics in which the atom's current state depends on its history.

Figure~\ref{fig:prob_dyn_2GA} illustrates the probability dynamics of a two-level giant atom in the excited state for varying coupling strengths with waveguide modes, highlighting the occurrence of revivals. These revivals indicate that the information initially lost from the system to the field returns over time. Each revival reaches its peak at approximately $ t \approx n\tau $ for $ n = 1, 2, 3, \dots$ and the offset depends on the parameters $\omega_e$, $\gamma$, and $\tau$. Notably, when $\gamma \tau \gg 1$, the system enters the over-damped regime, where the revivals quickly decay to zero and their peak values also decrease as $n$ increases. In this case, each revival is almost entirely due to the ''echo" of its direct predecessor, and not the other previous peaks.

As explained in the main text, we construct a witness for the detection of quantum memory of the two-level giant atom following the prescription of Theorem \ref{thm:qm_witness_ppt}. To this end, we must find two witness operators $W_{k=1,2} = \sum_{i,j=0}^{3} w_{ij}^{k} \, \sigma_{i} \otimes \sigma_{j}$ with parameters $w_{ij}^{k} \in \mathbb{R}$ such that the sum of the scalar product of $W_1$ with the Choi operator of $\mathcal{E}(t_1)$ and the scalar product of $W_2$ with the Choi operator of $\mathcal{E} (t_2)$ becomes negative. Written in terms of the channels, quantum memory is detected if
\begin{equation}
    \langle W \rangle_{\mathcal{E}(t)}(t_2, t_1) = \sum_{k=1,2} \sum_{i,j=0}^{3} w_{ij}^{k} \tr(\mathcal{E}(t_k)[\sigma_{i}^T] \sigma_{j}) < 0.
\end{equation}
where $\sigma_i$ are the usual Pauli matrices and $\sigma_0$ is the identity. 
For example, consider the system parameters $\omega_e \tau = 2 \pi$ and $\gamma \tau = 12$ as in Fig.~\ref{q_mem_robustness_GA}(a) and the two measurement time points $t_1 = 5.9/\gamma$ and $t_2 = t_1 + \Delta t$ with $\Delta t = 1.1/\gamma$. Then the witness parameters listed in Tables \ref{tab:witness_t1} and \ref{tab:witness_t2} detect quantum memory with $\langle W \rangle_{\mathcal{E}(t)}(t_2, t_1) < -0.0839$, providing a rather significant violation due to the comparison $\langle W \rangle_{\mathcal{E}(t)}(t_2, t_1) /(\tr(W_1) + \tr(W_2)) < -10^{-2}$.  

\begin{table}[t]
    \centering
    \begin{minipage}{0.4\linewidth}
    \caption{Witness coefficients $w_{ij}^{1}$ (rounded to three decimal places)}
    \begin{tabular}{c|cccc}
        \toprule
        \diagbox{Preparation $i$}{Measurement $j$} & $0$ & $1$ & $2$ & $3$ \\  
        \midrule
        $0$ & 1.012 & 0 & 0 & -0.988 \\  
        $1$ & 0 & 0 & 0 & 0 \\  
        $2$ & 0 & 0 & 0 & 0 \\  
        $3$ & 0.011 & 0 & 0 & 0.011 \\  
        \bottomrule
    \end{tabular}
    \label{tab:witness_t1}
    \end{minipage}
    \hfill
    \begin{minipage}{0.47\linewidth}
    \caption{Witness coefficients $w_{ij}^{2}$ (rounded to three decimal places)}
    \begin{tabular}{c|cccc}
        \toprule
        \diagbox{Preparation $i$}{Measurement $j$} & $0$ & $1$ & $2$ & $3$ \\  
        \midrule
        $0$ & 0.659 & 0 & 0 & -0.598 \\  
        $1$ & 0 & 0.106 & -0.184 & 0 \\  
        $2$ & 0 & -0.184 & -0.106 & 0 \\  
        $3$ & 0.367 & 0 & 0 & -0.375 \\  
        \bottomrule
    \end{tabular}
    \label{tab:witness_t2}
    \end{minipage}
\end{table}

\section{Spontaneous emission in $\Lambda$-type three-level systems}
\label{app:dyn_three_level_sys}
Consider a $\Lambda$-type three-level system consisting of three energy states: the ground state $\ket{g}$, the metastable state $\ket{s}$, and the excited state $\ket{e}$. Dipole-allowed transitions connect $\ket{e} \leftrightarrow \ket{g}$ and $\ket{e} \leftrightarrow \ket{s}$, which are coupled to a common radiation field, while direct transitions between the two lower states $\ket{g}$ and $\ket{s}$ are typically forbidden. In spontaneous emission, the system starts in the excited state and decays to either $\ket{g}$ or $\ket{s}$, emitting photons into the vacuum electromagnetic field. The total Hamiltonian reads
\begin{eqnarray}\label{eq:3TLS_gene_Hamiltonian}
    H = H_0 +
    \frac{\hbar}{\sqrt{4\pi}}\int_0^{\infty} d\omega \bigg[ a_{\omega}^\da \left(\sqrt{\Gamma_1(\omega)}|g\ra \la e| + \sqrt{\Gamma_2(\omega)} |s\ra \la e|\right) + a_{\omega} \left(\sqrt{\Gamma_1(\omega)}|e\ra \la g| + \sqrt{\Gamma_2(\omega)}|e\ra \la s|\right)\bigg],  
\end{eqnarray}
with the free Hamiltonian 
\begin{eqnarray}
    H_0 =  \hbar\omega_e \ketbra{e}{e} + \hbar\omega_s \ketbra{s}{s} + \int_0^{\infty} d\omega \, \hbar\omega  a_{\omega}^\da a_{\omega}.
\end{eqnarray}
Again, we consider the scenario where the bath can hold at most one photon excitation at a time. With this assumption, we make the following Ansatz for the general pure state of the system and field,
\begin{eqnarray}
    |\Psi(t)\rangle &=& e(t)|\mathrm{vac}, e\rangle + s(t)|\mathrm{vac}, s\rangle 
    + g(t)|\mathrm{vac}, g\rangle  \nonumber \\
    &+& \int_0^{\infty} d\omega \alpha_{\omega}(t)\hat{a}^{\dagger}_{\omega}|\mathrm{vac},g\rangle 
    +\int_0^{\infty} d\omega\beta_{\omega}(t) a^{\dagger}_{\omega}|\mathrm{vac},s\rangle.
\end{eqnarray}
Here, $|\mathrm{vac}\rangle$ denotes the vacuum state of the radiation field. 
The solution can be obtained in the same manner as in the previous Appendix~\ref{app:dyn_two_level_sys}, which results in $ g(t) = g(0)$, $s(t) = e^{-i\omega_s t} s(0)$, and $e(t) = d(t) e(0)$, with the decay factor
\begin{eqnarray}\label{decay_factor_3TLS}
    d(t)=\frac{i}{2\pi} \int_{-\infty}^{\infty} d\omega \frac{e^{-i\omega t}}{\omega -\omega_e +i \tilde{\chi}_1(\omega) +i \tilde{\chi}_2(\omega-\omega_s)} ,
\end{eqnarray}
where $\tilde{\chi}_{i=1,2}(\omega)$ are explicitly given by
\begin{eqnarray}\
     && \tilde{\chi}_1 (\omega) = \frac{1}{4\pi} \int_0^{\infty} \int_0^{\infty} dt d\omega' \Gamma_1(\omega') e^{i\left(\omega - \omega' \right) t} \label{bcf1_3TLS}\\
    &&\tilde{\chi}_2 (\omega-\omega_s) = \frac{1}{4\pi} \int_0^{\infty}\int_0^{\infty} dt d\omega' \Gamma_2(\omega') e^{i\left(\omega - \omega_s - \omega'\right) t}. \label{bcf2_3TLS}
\end{eqnarray}

For the $\Lambda$-type spontaneous emission process in a three-level system, we can construct a family of quantum channels $\mathcal{E}(t)$ that govern the time evolution of the atom's reduced state
\begin{eqnarray}
    \mathcal{E}(t)[\ketbra{g}{g}] &=& \ketbra{g}{g} \label{eq:three_level_channel_action_start} \\
    \mathcal{E}(t)[\ketbra{g}{s}] &=& e^{i\omega_s t} \ketbra{g}{s} \\
    \mathcal{E}(t)[\ketbra{g}{e}] &=& d^*(t) \ketbra{g}{e} \\
    \mathcal{E}(t)[\ketbra{s}{s}] &=& \ketbra{s}{s} \\
     \mathcal{E}(t)[\ketbra{s}{e}] &=& d^*(t) e^{-i\omega_s t} \ketbra{s}{e} \\
     \mathcal{E}(t)[\ketbra{e}{e}] &=& |d(t)|^2 \ketbra{e}{e} + \left(1- |d(t)|^2 -G(t)\right) \ketbra{s}{s} + G(t) \ketbra{g}{g}.\label{eq:three_level_channel_action_end} \\
\end{eqnarray}
Here, the function $G(t)$ is given by  
\begin{eqnarray}
    G(t)  =\frac{1}{4\pi} \int_0^{\infty} d\omega \Gamma_1(\omega) \left| \int_0^t dt'd(t')e^{-i\omega(t-t')} \right|^2.
\end{eqnarray}
The associated Choi operator 
can be written as
\begin{eqnarray}
    E (t) =
    \begin{pmatrix}
    1 & 0 & 0 & 0 & e^{i\omega_s t} & 0 & 0 & 0 & d^*(t) \\
    0 & 0 & 0 & 0 & 0 & 0 & 0 & 0 & 0 \\
    0 & 0 & 0 & 0 & 0 & 0 & 0 & 0 & 0 \\
    0 & 0 & 0 & 0 & 0 & 0 & 0 & 0 & 0 \\
    e^{-i\omega_s t} & 0 & 0 & 0 & 1 & 0 & 0 & 0 & d^*(t)e^{-i\omega_s t} \\
    0 & 0 & 0 & 0 & 0 & 0 & 0 & 0 & 0 \\
    0 & 0 & 0 & 0 & 0 & 0 & G(t) & 0 & 0 \\
    0 & 0 & 0 & 0 & 0 & 0 & 0 & 1-|d(t)|^2-G(t) & 0 \\
    d(t) & 0 & 0 & 0 & d(t)e^{i\omega_s t} & 0 & 0 & 0 & |d(t)|^2 \end{pmatrix}.
    \label{choi_state_3GA}
\end{eqnarray}

\subsection{Quantum memory characterization} 
\label{app:lambda_qm}

The presence of quantum memory for the $\Lambda$-type spontaneous emission process can be characterized as follows.
For a smooth connection to the mathematics discussed in Section \ref{app:memory_witness}, we will again switch the notation by changing the state labels of the three-level system to computational basis notation $\ket{0}, \ket{1}$ and $\ket{2}$, in energetically increasing order. 

\begin{theorem}
\label{thm:three_level_qm}
    The $\Lambda$-type three-level dynamics $\mathcal{E}(t):\mathbb{R} \rightarrow C_3$ as given in Eq.~\eqref{eq:three_level_channel_action_start}-\eqref{eq:three_level_channel_action_end} can be implemented with classical memory if and only if the inequalities
    \begin{align}
        G(t_2) &\geq G(t_1) \label{eq:3_qm_criterion_A} \\
        G(t_1) + \abs{d(t_1)}^2 &\geq G(t_2) + \abs{d(t_2)}^2. \label{eq:3_qm_criterion_B}
    \end{align}
    are fulfilled for all $t_{2} \geq t_1$. 
\end{theorem}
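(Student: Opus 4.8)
The plan is to prove both implications, reusing the architecture of the two‑level proof but accounting for the fact that the emission channel now has \emph{two} stable sink states, $\ketbra{g}{g}$ and $\ketbra{s}{s}$, instead of one. The ``if'' direction will be immediate from the composition rule; the ``only if'' direction will hinge on the sparse support of the Choi operator \eqref{choi_state_3GA} forcing the transition channels $\mathcal{K}_i$ to fix both sinks.

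For the ``if'' direction I would argue exactly as in the main text. By the composition rule $\mathcal{C}[d',G']\,\mathcal{C}[d,G]=\mathcal{C}[d'd,\,G+\abs{d}^2G']$, the only candidate divisor is $\mathcal{K}(t_2,t_1)=\mathcal{C}[d(t_2)/d(t_1),\,(G(t_2)-G(t_1))/\abs{d(t_1)}^2]$. It then remains to read off when $\mathcal{C}[d',G']$ is CPTP: its Kraus operators are $\hat{k}'$, $\sqrt{G'}\ketbra{g}{e}$ and $\sqrt{1-G'-\abs{d'}^2}\,\ketbra{s}{e}$, so complete positivity is equivalent to $G'\ge 0$ and $1-\abs{d'}^2-G'\ge 0$. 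Substituting $d',G'$ turns these two conditions precisely into \eqref{eq:3_qm_criterion_A} and \eqref{eq:3_qm_criterion_B}; hence when both inequalities hold the map is even CP‑divisible and a fortiori lies in the classical future.

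For the ``only if'' direction, assume $\mathcal{E}(t_2)=\sum_i \mathcal{K}_i\,\mathcal{I}_i(t_1)$ with $\{\mathcal{I}_i(t_1)\}$ a subchannel decomposition of $\mathcal{E}(t_1)$. The operator \eqref{choi_state_3GA} has range $\mathrm{span}\{v_t,\ket{eg},\ket{es}\}$ with $v_t=\ket{gg}+e^{-i\omega_s t}\ket{ss}+d(t)\ket{ee}$, so the only support vector carrying an input-$g$ (respectively input-$s$) label is $v_t$. Since $I_i(t_1)\le E(t_1)$, this forces $\mathcal{I}_i(t_1)[\ketbra{g}{g}]\propto\ketbra{g}{g}$ and $\mathcal{I}_i(t_1)[\ketbra{s}{s}]\propto\ketbra{s}{s}$, and the same structural fact applied to $E(t_2)$ gives the analogous statement for $\mathcal{I}_i(t_2)=\mathcal{K}_i\,\mathcal{I}_i(t_1)$. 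Whenever the relevant weights are nonzero this pins $\mathcal{K}_i[\ketbra{g}{g}]=\ketbra{g}{g}$ and $\mathcal{K}_i[\ketbra{s}{s}]=\ketbra{s}{s}$, so each Kraus operator obeys $K_\mu\ket{g}=\lambda_\mu\ket{g}$ and $K_\mu\ket{s}=\kappa_\mu\ket{s}$. Trace preservation then yields the decisive identities $\sum_\mu \lambda_\mu^{*}a_\mu=0$ and $\sum_\mu \kappa_\mu^{*}b_\mu=0$, where $a_\mu=\bra{g}K_\mu\ket{e}$ and $b_\mu=\bra{s}K_\mu\ket{e}$. Writing $\sigma_i=\mathcal{I}_i(t_1)[\ketbra{e}{e}]$ and using $\sum_\mu\abs{\lambda_\mu}^2=1$, the cross terms collapse and a short computation gives $\bra{g}\mathcal{K}_i[\sigma_i]\ket{g}=(\sigma_i)_{gg}+\big(\sum_\mu\abs{a_\mu}^2\big)(\sigma_i)_{ee}\ge(\sigma_i)_{gg}$, and symmetrically $\bra{s}\mathcal{K}_i[\sigma_i]\ket{s}\ge(\sigma_i)_{ss}$. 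Summing over $i$ produces $G(t_2)\ge G(t_1)$ and $P_s(t_2)\ge P_s(t_1)$ with $P_s=1-\abs{d}^2-G$, which are exactly \eqref{eq:3_qm_criterion_A} and \eqref{eq:3_qm_criterion_B}.

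The hard part is precisely the clause ``whenever the relevant weights are nonzero.'' In the two-level case the decayed sector $\{\ket{g}\}$ is one–dimensional, so the support argument rigidly forces $\mathcal{K}_i[\ketbra{g}{g}]=\ketbra{g}{g}$ with no freedom left; here the decayed sector $\mathrm{span}\{\ket{g},\ket{s}\}$ is two–dimensional. A degenerate subchannel whose excited-state output $\sigma_i$ already lies entirely in this sector (equivalently, whose Choi has no overlap with $v_{t_1}$) is only constrained so that $\mathcal{K}_i[\sigma_i]$ stays in the sector, not so that $\mathcal{K}_i$ fixes $\ketbra{g}{g}$ and $\ketbra{s}{s}$ separately; such a $\mathcal{K}_i$ could a priori interchange ground and metastable populations, and the per-term inequalities can then fail. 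I therefore expect the crux to be showing that this residual freedom cannot decrease $G$ or $P_s$ \emph{after} summing over $i$. The plan for closing this gap is to treat the degenerate terms collectively and exploit the remaining global constraints the decomposition must meet — in particular that $\mathcal{E}(t_2)[\ketbra{e}{e}]$ is diagonal in $\{\ket{g},\ket{s},\ket{e}\}$ while the coherence blocks of $\mathcal{E}(t_2)$, carrying the rigid phase $e^{i\omega_s t_2}$ and the amplitude $d(t_2)$, must be reproduced jointly with the populations — which I anticipate to be the most delicate step of the argument.
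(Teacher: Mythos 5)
Your ``if'' direction is fine and coincides with the paper's: the composition rule singles out the divisor $\mathcal{K}(t_2,t_1)=\mathcal{C}[d(t_2)/d(t_1),(G(t_2)-G(t_1))/\abs{d(t_1)}^2]$, and its complete positivity is exactly \eqref{eq:3_qm_criterion_A}--\eqref{eq:3_qm_criterion_B}. Your ``only if'' direction follows the same per-subchannel skeleton as the paper (there packaged into Proposition~\ref{prop:three_level_subspace}), and the degenerate case you flag is a genuine, unclosed gap: subchannels whose Choi operator has no overlap with the coherent vector $v_{t_1}=\ket{gg}+e^{-i\omega_s t_1}\ket{ss}+d(t_1)\ket{ee}$ are not forced to fix the two sinks, so the per-term inequalities need not hold. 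Note that this is precisely the step where the paper's own argument is thinnest: the proof of Proposition~\ref{prop:three_level_subspace} infers that $\ketbra{0}{0}$ and $\ketbra{1}{1}$ are eigenvectors of $\mathcal{K}$ from their being eigenvectors of $\mathcal{E}$ and $\mathcal{G}=\mathcal{K}\circ\mathcal{E}$, which is only valid when the eigenvalues $a_E,f_E$ are nonzero --- exactly the nondegeneracy your proposal isolates. So you have correctly located the crux rather than missed an idea available in the paper.

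The more serious problem is that your closing plan (``treat the degenerate terms collectively and exploit the global constraints'') cannot succeed, because the degenerate freedom survives the summation over $i$. Take the canonical Kraus instrument of $\mathcal{E}(t_1)=\mathcal{C}[d_1,G_1]$ read off from Eq.~\eqref{eq:channel-3}: $\mathcal{I}_1=\hat{k}_1\,\cdot\,\hat{k}_1^{\dagger}$ (no jump, Choi $=v_{t_1}v_{t_1}^{\dagger}$), $\mathcal{I}_2[\rho]=G_1\bra{e}\rho\ket{e}\,\ketbra{g}{g}$ (jump to $g$), and $\mathcal{I}_3[\rho]=(1-G_1-\abs{d_1}^2)\bra{e}\rho\ket{e}\,\ketbra{s}{s}$ (jump to $s$); by Eq.~\eqref{choi_state_3GA} these are completely positive and sum to $\mathcal{E}(t_1)$. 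Now choose $\mathcal{K}_1=\mathcal{C}[d_2/d_1,0]$ (with the paper's phase convention; a channel whenever $\abs{d_2}\le\abs{d_1}$), $\mathcal{K}_3=\mathrm{id}$, and $\mathcal{K}_2[\rho]=(1-q)\rho+q\,U\rho U^{\dagger}$ with $U=\ketbra{g}{s}+\ketbra{s}{g}+\ketbra{e}{e}$. A direct computation --- the coherences are carried solely by the $\mathcal{I}_1$ term and come out exactly right --- gives
\begin{equation}
\sum_{i=1}^{3}\mathcal{K}_i\circ\mathcal{I}_i=\mathcal{C}\bigl[d_2,\,(1-q)G_1\bigr],
\end{equation}
a legitimate classical-memory implementation of a pair with $G(t_2)=(1-q)G_1<G_1$, violating \eqref{eq:3_qm_criterion_A}; relabeling the $s$-jump instead overshoots \eqref{eq:3_qm_criterion_B}. (The same choice $\mathcal{E}=\mathcal{I}_2$, $\mathcal{K}=U\cdot U^{\dagger}$ is a counterexample to Proposition~\ref{prop:three_level_subspace} as literally stated.) The mechanism is the one you anticipated: once the which-sink jump record is extracted --- an instrument that exists because the $\ket{eg}$ and $\ket{es}$ sectors of \eqref{choi_state_3GA} are decoupled rank-one blocks --- the two decayed populations can be classically reshuffled without touching the coherence sector, something impossible in the two-level case where the decayed sector is one-dimensional. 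What does survive the degenerate freedom is only the constraint on $d$: for any term with nonzero weight on $v_{t_1}$, trace preservation and positivity force $\abs{\bra{g}\mathcal{K}_i[\ketbra{g}{e}]\ket{e}}\le 1$, so $\abs{d(t_2)}\le\abs{d(t_1)}$ remains necessary. Your proof attempt is therefore incomplete in a way that cannot be repaired along the proposed route; the pairwise ``only if'' of the statement would have to be re-derived (or reformulated) with the $G$-inequalities playing a role only for Markovianity, not for the classical--quantum memory distinction.
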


For the proof of Theorem \ref{thm:three_level_qm} we need the following technical Proposition.

\begin{Proposition}
\label{prop:three_level_subspace}
    Let $\mathcal{E}, \mathcal{G}: \mathcal{L}(\mathbb{C}^{3}) \rightarrow \mathcal{L}(\mathbb{C}^{3})$ be completely positive maps with corresponding Choi operators $E = \mathcal{J}(\mathcal{E})$ and $G = \mathcal{J}(\mathcal{G})$ that are both elements of the subspace of matrices $M$ that are in computational basis with standard order parametrized as 
    \begin{eqnarray}
    \label{eq:instrument_subspace}
    M = 
    \begin{pmatrix}
    \color{blue}{a} & 0 & 0 & 0 & \color{blue}{b} & 0 & \color{blue}{c} & \color{blue}{d} & \color{blue}{e} \\
    0 & 0 & 0 & 0 & 0 & 0 & 0 & 0 & 0 \\
    0 & 0 & 0 & 0 & 0 & 0 & 0 & 0 & 0 \\
    0 & 0 & 0 & 0 & 0 & 0 & 0 & 0 & 0 \\
    \color{blue}{b^*} & 0 & 0 & 0 & \color{blue}{f} & 0 & \color{blue}{g} & \color{blue}{h} & \color{blue}{j} \\
    0 & 0 & 0 & 0 & 0 & 0 & 0 & 0 & 0 \\
    \color{blue}{c^*} & 0 & 0 & 0 & \color{blue}{g^*} & 0 & \color{blue}{k} & \color{blue}{l} & \color{blue}{m} \\
    \color{blue}{d^*} & 0 & 0 & 0 & \color{blue}{h^*} & 0 & \color{blue}{l^*} & \color{blue}{n} & \color{blue}{p} \\
    \color{blue}{e^*} & 0 & 0 & 0 & \color{blue}{j^*} & 0 & \color{blue}{m^*} & \color{blue}{p^*} & \color{blue}{r} \end{pmatrix}.
\end{eqnarray}
We use colored symbols for better visibility of the sparsity structure of the matrix.  
    If there exists a channel $\mathcal{K} \in C_3$ such that $\mathcal{G} = \mathcal{K} \circ \mathcal{E}$ 
    then the inequalities $k_{G} \geq k_{E}$ and $n_{G} \geq n_{E}$ hold, where $k_{G}$ (or $k_E$) is the $k$-component of the Choi operator $G$ (or $E$) according to the parametrization \eqref{eq:instrument_subspace}.
\end{Proposition}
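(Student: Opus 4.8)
The plan is to exploit the sparsity of the parametrisation \eqref{eq:instrument_subspace} to show that the intermediate channel $\mathcal{K}$ must leave the two lower ``stable'' levels $|0\rangle,|1\rangle$ untouched, and then to read off the claimed inequalities from the fact that the state $|2\rangle\langle 2|$ can only feed probability \emph{into} those levels under $\mathcal{K}$. Concretely, I first translate the subspace condition into statements about the maps: reading off the diagonal blocks of \eqref{eq:instrument_subspace} gives $\mathcal{E}(|0\rangle\langle 0|) = a_E\,|0\rangle\langle 0|$ and $\mathcal{E}(|1\rangle\langle 1|) = f_E\,|1\rangle\langle 1|$, and the same for $\mathcal{G}$. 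Meanwhile the quantities of interest are matrix elements of the single operator $\rho := \mathcal{E}(|2\rangle\langle 2|)\succcurlyeq 0$, namely $k_E = \langle 0|\rho|0\rangle$ and $n_E = \langle 1|\rho|1\rangle$, whereas $k_G = \langle 0|\mathcal{K}(\rho)|0\rangle$ and $n_G = \langle 1|\mathcal{K}(\rho)|1\rangle$ because $\mathcal{G}(|2\rangle\langle 2|) = \mathcal{K}(\rho)$.

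The key step is to pin down $\mathcal{K}$ on the stable levels. Composing the maps, $a_G\,|0\rangle\langle 0| = \mathcal{G}(|0\rangle\langle 0|) = \mathcal{K}(\mathcal{E}(|0\rangle\langle 0|)) = a_E\,\mathcal{K}(|0\rangle\langle 0|)$; taking the trace and using that $\mathcal{K}$ is trace-preserving forces $a_G = a_E$ and hence $\mathcal{K}(|0\rangle\langle 0|) = |0\rangle\langle 0|$, and likewise $\mathcal{K}(|1\rangle\langle 1|) = |1\rangle\langle 1|$. I then pass to the adjoint (Heisenberg) map $\mathcal{K}^{*}$, which is completely positive and unital. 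Setting $B := \mathcal{K}^{*}(|0\rangle\langle 0|)$, one has $0 \preccurlyeq B \preccurlyeq \mathcal{K}^{*}(\mathds{1}) = \mathds{1}$, while $\langle 0|B|0\rangle = \operatorname{tr}(|0\rangle\langle 0|\,\mathcal{K}(|0\rangle\langle 0|)) = 1$. Since $\mathds{1}-B\succcurlyeq 0$ has a vanishing $(0,0)$ entry, its $0$-th row and column vanish, so $B|0\rangle = |0\rangle$ and therefore $B - |0\rangle\langle 0| \succcurlyeq 0$.

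The conclusion now follows from a single line: $k_G = \langle 0|\mathcal{K}(\rho)|0\rangle = \operatorname{tr}(B\,\rho) = k_E + \operatorname{tr}\!\big((B-|0\rangle\langle 0|)\rho\big) \geq k_E$, where the last inequality uses $\rho \succcurlyeq 0$ together with $B-|0\rangle\langle 0|\succcurlyeq 0$. Repeating the argument with $|1\rangle$ in place of $|0\rangle$ gives $n_G \geq n_E$. (Equivalently, one can unravel $\mathcal{K}$ into Kraus operators; the fixed-point conditions force each of them to have $|0\rangle$ and $|1\rangle$ as eigenvectors, and the trace-preservation identities $\sum_\mu A_\mu^{\dagger}A_\mu = \mathds{1}$ reproduce $k_G = k_E + \rho_{22}\sum_\mu|\langle 0|A_\mu|2\rangle|^2 \geq k_E$.)

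The step I expect to be the crux is establishing $\mathcal{K}(|0\rangle\langle 0|) = |0\rangle\langle 0|$, which breaks down in the degenerate case $a_E = 0$ (or $f_E = 0$): there $\mathcal{E}(|0\rangle\langle 0|) = 0$, the relation $a_E\,\mathcal{K}(|0\rangle\langle 0|) = a_G\,|0\rangle\langle 0|$ becomes vacuous, and $\mathcal{K}$ need not fix $|0\rangle\langle 0|$ at all, so the bare inequality is not forced. The argument above is therefore the generic branch, valid in particular whenever $\mathcal{E}$ is trace-preserving, in which case $a_E = f_E = 1$. Controlling the degenerate subchannels — using positivity of $E$, which forces $b_E = c_E = d_E = e_E = 0$ once $a_E = 0$, together with the constraint that the subchannels sum to the full channel — is the delicate point that the surrounding proof of Theorem~\ref{thm:three_level_qm} must address.
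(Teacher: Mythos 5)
Your generic-case argument is correct and rests on exactly the same crux as the paper's proof: from the sparsity pattern one reads off $\mathcal{E}[\ketbra{0}{0}] = a_E\,\ketbra{0}{0}$ and $\mathcal{E}[\ketbra{1}{1}] = f_E\,\ketbra{1}{1}$, and (for $a_E, f_E \neq 0$) trace preservation forces $\mathcal{K}[\ketbra{0}{0}] = \ketbra{0}{0}$ and $\mathcal{K}[\ketbra{1}{1}] = \ketbra{1}{1}$. The finishing move differs: the paper propagates the sparsity to the Choi operator of $\mathcal{K}$ itself (obtaining $a_K = f_K = 1$, then $c_K = h_K = 0$ from trace preservation) and expands $k_G = k_E\, a_K + r_E\, k_K \geq k_E$ term by term, whereas you dualize, establish the operator inequality $\mathcal{K}^{*}(\ketbra{0}{0}) \succcurlyeq \ketbra{0}{0}$ from unitality and the pinned corner entry, and conclude $k_G = \tr\big(\mathcal{K}^{*}(\ketbra{0}{0})\,\rho\big) \geq k_E$ in one line. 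Your Heisenberg-picture route is a bit more economical — it never needs the subspace form of $\mathcal{K}$'s Choi operator — and both arguments are valid under the same nondegeneracy hypothesis.

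Your closing caveat, however, is more than a caveat: Proposition \ref{prop:three_level_subspace} as literally stated is false when $a_E = 0$ (or $f_E = 0$), and the paper does not address this. Explicitly, take $\mathcal{E}[\rho] = \rho_{22}\,\ketbra{0}{0}$ (Choi operator $\ketbra{2}{2}\otimes\ketbra{0}{0}$, which lies in the subspace \eqref{eq:instrument_subspace} with $k_E = 1$) and the channel $\mathcal{K}[\rho] = \tr(\rho)\,\ketbra{2}{2}$; then $\mathcal{G} = \mathcal{K}\circ\mathcal{E}$ has Choi operator $\ketbra{2}{2}\otimes\ketbra{2}{2}$, again in the subspace, but $k_G = 0 < k_E$. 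The paper's step ``hence, these operators are also eigenvectors of $\mathcal{K}$'' silently assumes $a_E, f_E \neq 0$, i.e., it commits precisely the gap you flagged. Moreover, the degenerate case is unavoidable in the intended application: the proof of Theorem \ref{thm:three_level_qm} applies the Proposition to the subchannels of a decomposition of $E(t_1) = vv^{\dagger} + G(t_1)\,\ketbra{2}{2}\otimes\ketbra{0}{0} + \big(1-\abs{d(t_1)}^2-G(t_1)\big)\ketbra{2}{2}\otimes\ketbra{1}{1}$, and the last two rank-one pieces are legitimate subchannels with $a = f = 0$. For such a piece, a conditional channel $\mathcal{K}_i$ can re-prepare the dark-state population arbitrarily between $\ket{0}$ and $\ket{1}$, so the per-subchannel inequalities \eqref{eq:lambda_type_proof_ineq_A}--\eqref{eq:lambda_type_proof_ineq_B} genuinely fail and summing over $i$ does not restore them; indeed, combining such re-preparations with $\mathcal{K}_1 = \mathcal{C}[d(t_2)/d(t_1), 0]$ on the coherent part reproduces the exact target form \eqref{choi_state_3GA} with any $G(t_2) \in [0, 1-\abs{d(t_2)}^2]$ whenever $\abs{d(t_2)} \leq \abs{d(t_1)}$. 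So the gap you identified is not a technicality one can hope to absorb into the surrounding theorem: it undermines the ``classical memory implies \eqref{eq:3_qm_criterion_A}--\eqref{eq:3_qm_criterion_B}'' direction as proved, and any repair must confront the degenerate subchannels directly rather than via the Proposition's bare statement.
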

\begin{proof}
    We first show that the Choi operator $K^{\rm D'B}$ of the channel $\mathcal{K}$ is also an element of the subspace parametrized in Eq.~\eqref{eq:instrument_subspace}. Recall first that the matrix elements of the Choi operator may be related to the action of the corresponding map via Eq.~\eqref{eq:choi_matrix_elements}. From this, it can be observed that $\ketbra{0}$, $\ketbra{0}{1}$ and $\ketbra{1}{1}$ are eigenvectors of both $\mathcal{E}$ and $\mathcal{G}$ with eigenvalues $a_{E}, f_{E}, b_{E}$ and $a_{G}, f_{G}, b_{G}$, respectively. Hence, these operators are also eigenvectors of $\mathcal{K}$, and the corresponding eigenvalues are $a_{K} = 1$, $f_K = 1$ and $b_K$ since $\mathcal{K}$ is trace-preserving. Next, since one has 
    \begin{equation}
        \mathcal{E}[\ketbra{0}{2}] = c_{E} \ketbra{0}{0} + d_{E} \ketbra{0}{1} + e_{E} \ketbra{0}{2}
    \end{equation}
    together with a similar decomposition for $\mathcal{G}[\ketbra{0}{2}]$, and since $\ketbra{0}{0}$ and $\ketbra{0}{1}$ are eigenvectors of $\mathcal{E}, \mathcal{K}$ and $\mathcal{G}$, we can also decompose $\mathcal{K}[\ketbra{0}{2}]$ as 
    \begin{equation}
        \mathcal{K}[\ketbra{0}{2}] = c_{K} \ketbra{0}{0} + d_{K} \ketbra{0}{1} + e_{K} \ketbra{0}{2}. 
    \end{equation}
    The reasoning for the decomposition of $\mathcal{K}[\ketbra{0}{2}]$ can be similarly performed for $\mathcal{K}[\ketbra{1}{2}]$ from which it can be deduced that the Choi operator $K^{\rm D'B}$ also lives on the subspace parametrized by Eq.~\eqref{eq:instrument_subspace}.

    Since $\mathcal{K}$ is trace-preserving, we furthermore have that $c_{K} = h_{K} = 0$. 
    With this, we can compute $k_{G}$
    as
    \begin{align}
       k_{G} &= \bra{0}\mathcal{G}[\ketbra{2}{2}]\ket{0} = \bra{0}\mathcal{K} \circ \mathcal{E}[\ketbra{2}{2}]\ket{0} = k_{E}\underbrace{\bra{0}\mathcal{K}[\ketbra{0}{0}]\ket{0}}_{=a_{K}=1} + \underbrace{r_{E}}_{\geq 0} \underbrace{\bra{0}\mathcal{K}[\ketbra{2}{2}]\ket{0}}_{\geq 0} \geq k_{E}. 
    \end{align}
    In an analogous fashion, it can be shown that $n_{G} \geq n_{E}$. 
\end{proof}
Now, we are ready to show the characterization of quantum memory in the dynamics of the three-level giant atom.
\begin{proof}[Proof of Theorem \ref{thm:three_level_qm}]
    Assume that the inequalities \eqref{eq:3_qm_criterion_A} and \eqref{eq:3_qm_criterion_B} are satisfied for the two time points $t_1$ and $t_2$. Then it is straightforward to show, see the main text, that $\mathcal{E}(t)$ is Markovian (CP-divisible) as $\mathcal{E}(t_2) = \mathcal{K} \circ \mathcal{E}(t_1)$, with the map $\mathcal{K}$ being a quantum channel, if and only both the inequalities \eqref{eq:3_qm_criterion_A} and \eqref{eq:3_qm_criterion_B} are satisfied. As Markovian dynamics are classical, this implies the first direction of the Theorem.

    For the other direction, assume that the dynamics can be classically implemented, i.e., assume that $\mathcal{E}(t_2) \in \mathcal{F}[\mathcal{E}(t_1)]$ for all $t_2 \geq t_1$. By definition, this means that there is a subchannel decomposition $\{\mathcal{I}_{i}(t_2)\}_{i=1}^{n}$ of $\mathcal{E}(t_2)$ and $\{\mathcal{I}_{i}(t_1)\}_{i=1}^{n}$ of $\mathcal{E}(t_1)$ such that 
    \begin{equation}
        \mathcal{I}_{i}(t_2) = \mathcal{K}_{i}(t_2, t_1) \circ \mathcal{I}_{i}(t_1)
    \end{equation}
    for some channels $\mathcal{K}_{i}(t_2, t_2)$. By the description of the Choi operators $E(t)$ in Eq.~\eqref{choi_state_3GA}, the Choi operators $I_{i}(t_1)^{\rm AD}$ and $I_{i}(t_2)^{\rm AB}$ satisfy the assumption of the maps in Proposition~\ref{prop:three_level_subspace}. To see this, notice first that $I_{i}(t)^{\rm AD} \leq E(t)^{\rm AD}$ (in the sense of positive semidefinite matrices) due to the subchannel property. Next, let $X,Y \in \mathcal{L}(\mathbb{C}^d)$ be two positive semidefinite matrices such that $X \leq Y$. Then it is a standard fact that, if some diagonal element $Y_{aa}$ is zero, it holds that $X_{ab} = X_{ba}^{*} = 0$ for all $b \in \{1,\dots,d\}$. In computational basis, this restricts $I_{i}(t)^{\rm AD}$ to the subspace parametrized by Eq.~\eqref{eq:instrument_subspace}.
    
    Therefore, as an application of Proposition \ref{prop:three_level_subspace}, for all $i \in \{1,\dots,n\}$ it holds that
    \begin{align}
        \bra{0} \mathcal{I}_{i}(t_2)[\ketbra{2}{2}] \ket{0} &\geq \bra{0} \mathcal{I}_{i}(t_1)[\ketbra{2}{2}] \ket{0} \label{eq:lambda_type_proof_ineq_A}\\
        \bra{1} \mathcal{I}_{i}(t_2)[\ketbra{2}{2}] \ket{1} &\geq \bra{1} \mathcal{I}_{i}(t_1)[\ketbra{2}{2}] \ket{1}. \label{eq:lambda_type_proof_ineq_B}
    \end{align}
    Summing up the the inequalities \eqref{eq:lambda_type_proof_ineq_A} and \eqref{eq:lambda_type_proof_ineq_B} over the index $i$ leads to the inequalities \eqref{eq:3_qm_criterion_A} and \eqref{eq:3_qm_criterion_B}. 
\end{proof}

\subsection{Example: Three-level giant atom dynamics}
\label{app:dyn_3GA}
We examine a scenario where an artificial three-level giant atom \cite{Du2022,Sun2024} is coupled to a one-dimensional surface acoustic waveguide at multiple spatial points, as shown in Fig.~\ref{fig:robustness_GA_sketch}(c) in the main text. 
In this configuration, the transition between the excited and ground states is coupled to the waveguide with a single-contact coupling rate, denoted as $\gamma_1$. Simultaneously, the transition between the excited and metastable state occurs at a coupling rate $\gamma_2$.
When the atom, initially prepared in the excited state $|e\ra$, interacts with the vacuum acoustic field, it undergoes decay into one of the two lower states, emitting a phonon instead of a photon into the field, as described by the Hamiltonian under the rotating-wave approximation
\begin{equation}
H = H_0 + \hbar \sum_{j=R,L}\int_0^{\infty}d\omega \sqrt{\frac{\gamma_1}{\pi}} \cos{\left(\frac{\omega\tau}{2}\right)}\left(a_{j\omega}^\da |g\ra \la e| + h.c.\right) + \hbar \sum_{j=R,L}\int_0^{\infty}d\omega \sqrt{\frac{\gamma_2}{\pi}} \cos{\left(\frac{\omega\tau}{2}\right)}\left(a_{j\omega}^\da |s\ra \la e| + h.c.\right).
\end{equation}
The Hamiltonian can be further transformed into the general form \eqref{eq:3TLS_gene_Hamiltonian}, where the effective coupling rates for transitions $|e\rangle \leftrightarrow |g\rangle$ and $|e\rangle \leftrightarrow |s\rangle$ are given by $\Gamma_{i=1,2}(\omega) = 8 \gamma_i \cos^2{\left(\omega \tau/2\right)}$, similar to the case of the two-level giant atom. Based on this, the bath correlation functions \eqref{bcf1_3TLS}-\eqref{bcf2_3TLS} can be computed. Substituting these into Eq.~\eqref{decay_factor_3TLS}, we make an additional approximation by neglecting the divergent Lamb shift term, leading to
\begin{eqnarray}
    d(t)=\sum_{n=0}^{\infty} \Theta(t-n\tau) \frac{[-\bar{\gamma}_2(t-n\tau)]^n}{n\,!}e^{-i(\omega_e-i\bar{\gamma}_1)(t-n\tau)} \nonumber,
\end{eqnarray}
where $\bar{\gamma}_1= \left(\gamma_1+\gamma_2\right)/2$, and $\bar{\gamma}_2=\left(\gamma_1 e^{-i\omega_s \tau}+\gamma_2\right)/2$. 

As for the two-level giant atom, it is possible to detect quantum memory for three levels under reasonable experimental assumptions, in particular with settings that do not allow for a tomographically complete reconstruction of the channels. We assume that preparations and measurements are possible that only project onto the set of $11$ different states $\{\ket{\psi_j}\}_{j=1}^{11} = \{\ket{g}, \ket{s}, \ket{e}, \frac{1}{\sqrt{2}}(\ket{g} + \ket{e}), \frac{1}{\sqrt{2}}(\ket{s} + \ket{e}), \frac{1}{\sqrt{2}}(\ket{g} - \ket{e}), \frac{1}{\sqrt{2}}(\ket{s} - \ket{e}), \frac{1}{\sqrt{2}}(\ket{g} + i \ket{e}), \frac{1}{\sqrt{2}}(\ket{g} - i \ket{e}), \frac{1}{\sqrt{2}}(\ket{s} + i \ket{e}), \frac{1}{\sqrt{2}}(\ket{s} - i \ket{e})\}$.
This means, we construct a witness $W$ that is evaluated as 
\begin{equation}
    \label{eq:qutrit_witness_form}
    \langle W \rangle_{\mathcal{E}(t)}(t_2, t_1) = \sum_{\alpha = 1}^{2} \sum_{i,j=1}^{11} w_{ij}^{\alpha} \bra{\psi_j} \mathcal{E}(t_{\alpha})[(\ketbra{\psi_i}{\psi_i})^T] \ket{\psi_j}.
\end{equation}
with real parameters $w_{ij}^{\alpha}$ that have to be determined.

Using a witness of the form \eqref{eq:qutrit_witness_form} constructed from the PPT relaxation for quantum memory witnesses in Theorem~\ref{thm:qm_witness_ppt}, we obtained a witness violation $\langle W \rangle_{\mathcal{E}(t)}(t_2, t_1) < -0.016$, see Tables \ref{tab:witness_t1} and \ref{tab:witness_t2}, at the optimally robust time points given $(\gamma_1 + \gamma_2)t_1 \approx 6$ and $(\gamma_1 + \gamma_2)t_2 \approx 6.92$ using the same parameters as for Fig.~\ref{q_mem_robustness_GA} of the main text. While it is interesting that quantum memory detection is possible under the assumption of incomplete information, the witness is normalized to $\sum_{\alpha = 1} \sum_{i,j=1} w_{ij}^{\alpha} \approx 27.039$ and hence it should be noted that the violation is relatively weak in regard of the noise present in current experimental setups.

\begin{table}[ht!]
    \centering
    \scalebox{0.9}{
    \begin{tabular}{c|ccccccccccc}
         \diagbox{Preparation $i$}{Measurement $j$}& 1 & 2 & 3 & 4 & 5 & 6 & 7 & 8 & 9 & 10 & 11 \\
        \midrule
        1  & 0.009 & 0.842 & 0.284 & 0.146 & 0.492 & 0.146 & 0.492 & 0.146 & 0.146 & 0.563 & 0.563 \\
        2  & 0.742 & 0.059 & 0.240 & 0.491 & 0.090 & 0.491 & 0.090 & 0.491 & 0.491 & 0.150 & 0.150 \\
        3  & -0.518 & -0.561 & 0.437 & -0.041 & -0.171 & -0.041 & -0.172 & -0.041 & -0.041 & -0.062 & -0.062 \\
        4  & -0.255 & 0.141 & 0.360 & 0.050 & 0.160 & 0.056 & 0.160 & 0.053 & 0.053 & 0.250 & 0.250 \\
        5  & 0.241 & -0.111 & 0.230 & 0.235 & 0.000 & 0.235 & 0.004 & 0.235 & 0.235 & 0.060 & 0.060 \\
        6  & -0.255 & 0.141 & 0.360 & 0.056 & 0.160 & 0.050 & 0.160 & 0.053 & 0.053 & 0.250 & 0.250 \\
        7  & 0.242 & -0.110 & 0.229 & 0.235 & 0.004 & 0.235 & -0.000 & 0.235 & 0.235 & 0.059 & 0.059 \\
        8  & -0.255 & 0.141 & 0.360 & 0.053 & 0.160 & 0.053 & 0.160 & 0.053 & 0.053 & 0.250 & 0.250 \\
        9  & -0.255 & 0.141 & 0.360 & 0.053 & 0.160 & 0.053 & 0.160 & 0.053 & 0.053 & 0.250 & 0.250 \\
        10 & 0.112 & -0.251 & 0.339 & 0.225 & -0.041 & 0.225 & -0.041 & 0.225 & 0.225 & 0.042 & 0.046 \\
        11 & 0.112 & -0.251 & 0.339 & 0.225 & -0.041 & 0.225 & -0.041 & 0.225 & 0.225 & 0.046 & 0.042 \\
    \end{tabular}
    }
    \caption{Witness coefficients $w_{ij}^1$ for preparations and measurements at the first time point (rounded to three decimal places).}
    \label{tab:witness1}
    \end{table}

    \begin{table}[ht!]
    \centering
    \scalebox{0.9}{  
        \begin{tabular}{c|ccccccccccc}
           \diagbox{Preparation $i$}{Measurement $j$} & 1 & 2 & 3 & 4 & 5 & 6 & 7 & 8 & 9 & 10 & 11 \\
            \midrule
            1  & -0.272 & -0.093 & 0.681 & 0.205 & 0.150 & 0.205 & 0.097 & 0.205 & 0.205 & 0.256 & 0.332 \\
            2  & -0.199 & -0.158 & 0.677 & 0.239 & 0.110 & 0.239 & 0.071 & 0.239 & 0.239 & 0.231 & 0.288 \\
            3  & -0.130 & -0.139 & 0.129 & -0.000 & -0.090 & -0.000 & 0.016 & -0.000 & -0.000 & 0.072 & -0.082 \\
            4  & -0.201 & -0.116 & 0.405 & -0.014 & 0.030 & 0.218 & 0.056 & 0.507 & -0.303 & 0.164 & 0.125 \\
            5  & -0.106 & -0.094 & 0.318 & 0.106 & 0.225 & 0.106 & -0.159 & 0.106 & 0.106 & -0.166 & 0.390 \\
            6  & -0.201 & -0.116 & 0.405 & 0.218 & 0.030 & -0.014 & 0.056 & -0.303 & 0.507 & 0.164 & 0.125 \\
            7  & -0.158 & -0.133 & 0.424 & 0.133 & -0.159 & 0.133 & 0.238 & 0.133 & 0.133 & 0.433 & -0.143 \\
            8  & -0.201 & -0.116 & 0.405 & -0.303 & 0.030 & 0.507 & 0.056 & 0.102 & 0.102 & 0.164 & 0.125 \\
            9  & -0.201 & -0.116 & 0.405 & 0.507 & 0.030 & -0.303 & 0.056 & 0.102 & 0.102 & 0.164 & 0.125 \\
            10 & -0.126 & -0.120 & 0.326 & 0.100 & 0.288 & 0.100 & -0.245 & 0.100 & 0.100 & 0.323 & -0.116 \\
            11 & -0.203 & -0.177 & 0.480 & 0.138 & -0.268 & 0.138 & 0.331 & 0.138 & 0.138 & -0.020 & 0.323 \\
        \end{tabular}
    }
    \caption{Witness coefficients $w_{ij}^2$ for preparations and measurements at the second time point (rounded to three decimal places).}
    \label{tab:witness2}
    \end{table}

\end{document}